 \newtheorem{theorem}{Theorem}[section]
 \newtheorem{lemma}[theorem]{Lemma}
 \newtheorem{assumption}[theorem]{Assumption}
 \newtheorem{definition}[theorem]{Definition}
 \newtheorem{example}{Example}
 \newtheorem{remark}{Remark}
 \newcommand{\bb}{\mathbb}
 \newcommand{\RNum}[1]{\uppercase\expandafter{\romannumeral #1\relax}}
\date{\vspace{-5ex}}
\begin{document}
	\title{Random Interval Distillation for Detection of Change-Points in Markov Chain Bernoulli Networks}
	\author[1]{Xinyuan Fan}
	\author[1]{Weichi Wu}
	\affil[1]{\small Department of Statistics and Data Science, Tsinghua University}
	
	\maketitle

	\begin{abstract}%
We propose a new and generic approach for detecting multiple change-points in dynamic networks with Markov formation, termed random interval distillation (RID). By collecting random intervals with sufficient strength of signals and reassembling them into a sequence of informative short intervals, together with sparse universal singular value thresholding, our new approach can achieve nearly minimax optimality as their independent counterparts for both detection and localization bounds in low-rank networks without any prior knowledge about minimal spacing, which is unlike many previous methods. In particular,  motivated by a recent nonasymptotic bound \citep{bandeira2016sharp}, our method uses the operator norm of CUSUMs of the adjacency matrices, and achieves the aforementioned optimality without sample splitting as required by the previous method. For practical applications, we introduce a clustering-based and data-driven procedure to determine the optimal threshold for signal strength,  utilizing the connection between RID and clustering. We examine the effectiveness and usefulness of our methodology via simulations and a real data example. 
	\end{abstract}
	
	\
	\\
	{\bf Keywords:}   change-point detection, dynamic network, clustering-based threshold, heterogeneous Markov chain

	\section{Introduction}
Change-point detection methods have significant prospects for applications in various fields. With advancements in data collection technology, an increasing amount of dependent network data is being observed, such as in \cite{mersch2013tracking, gemmetto2014mitigation, stehle2011high}. However, most existing multiple change-point detection methods are developed with theoretical guarantees primarily for independent  Euclidean data, or require strict stationarity between changes. Recently, \cite{wang2021optimal} and \cite{zambon2019change} discussed detecting multiple changes in network sequences where networks are independent of each other at different time points. \cite{padilla2019change} investigated the multiple changes in latent structure for specific dynamic networks related to the random dot product graphs (RDPG). Therefore, these methods are not directly applicable to investigating multiple structural changes in general dependent dynamic networks. To bridge this gap, we aim to develop a novel and general approach for multiple change-point detection in dynamic networks.

There is a vast amount of literature that considers estimation or inference in dynamic networks. For instance, \cite{10.1214/18-AOS1751} proposed a dynamic stochastic block model and estimated the tensor of connection probabilities using the vectorization technique. \cite{zhao2019estimation} proposed a dynamic network model with a Markovian structure and constructed likelihood estimators for the model parameters, deriving their asymptotic properties. \cite{jiang2020autoregressive} built a strictly stationary autoregressive network model and investigated the corresponding estimation and inference problems.  
In this paper, we consider dynamic networks with a general Markov formulation. This approach includes existing work \citep{jiang2020autoregressive, zhao2019estimation} as special cases and allows heterogeneity between two change-points, aligning with patterns observed in real data. 
In particular, we assume that the connection probabilities $\mathbb E(A(t))$ where $A(t)$ is the adjacency matrix at time $t$, may change at unknown time points. We aim to develop methods that effectively detect all possible change-points accurately under minimal restrictions, admitting a Markov structure within $A_{ij}(t)$. With effective change-point detection, one can apply static network estimation methods to the means of the adjacency matrices between two consecutive change-point estimators to explore further properties, such as clustering \citep{pmlr-v37-hanb15}.

Previous multiple change point detection methods for network sequences, such as \cite{wang2021optimal} and \cite{padilla2019change}, require the knowledge of minimal spacing to ensure theoretically guaranteed good performance. However, the minimal spacing is usually unknown in practice. One reason for this limitation is that these methods are based on wild binary segmentation (WBS). To overcome this challenge, we propose a new Random Interval Distillation (RID) technique in Section~\ref{sec:Change-point estimation} for detecting multiple change-points that does not require prior knowledge of minimal spacing, and can be of separate interest in the field of structural change analysis. Moreover,  we propose in Section~\ref{subsec:parac} a data-driven threshold selection method adaptive to the complex dependencies. This offers significant practical guidance, which is notably absent in most of the aforementioned studies. In the literature of change-point detection, there are primarily three branches: loss and penalization based methods, moving sum (MOSUM, originated from \cite{antoch2000change}) based methods, and binary segmentation based methods. The first class of methods typically employs a loss function, often related to likelihood or quasi-likelihood, along with some regularization to estimate the best piecewise constant signals and, consequently, detect changes. For strictly stationary autoregressive networks, \cite{jiang2020autoregressive} used the likelihood approach to detect changes in transition probabilities under the at-most-one-change (AMOC) setting. However, it is difficult to generalize these methods to dependent dynamic networks with heterogeneous transition probabilities. On the other hand, detecting change-points in dependent sequences using the second class of methods, MOSUM-based approaches, is challenging due to the necessity of selecting an appropriate window size or bandwidth. This task is particularly difficult for network sequences with temporal dependence and heterogeneity. 
 
The third category of multiple change-point detection methods is Binary Segmentation (BS), as referenced in \citet{vostrikova1981detecting, venkatraman1992consistency}, and the seminal Wild Binary Segmentation (WBS) algorithm \citep{fryzlewicz2014wild}. In \cite{wang2021optimal}, independent Bernoulli networks were considered. They proposed a method that involves splitting the sample, maximizing the inner product, and applying WBS to detect change points. They also incorporated a refinement technique using Universal Singular Value Thresholding (USVT) \citep{chatterjee2015matrix} for low-rank networks. However, practical guidelines for adjusting thresholds in the WBS step were limited. Even for the univariate sequence, it is well known that selecting suitable thresholds is challenging when the data is neither independent nor strictly stationary. Additionally, as mentioned before, to achieve a nearly minimax optimal detection bound, both the methods outlined in \cite{fryzlewicz2014wild} and other WBS-based methods (e.g., \cite{wang2021optimal, padilla2019change}) require prior knowledge of minimal spacing when the number of change-points is unbounded.

In this paper, we introduce a novel procedure for change-point detection in dynamic networks with a Markov formulation that is distinct from the existing three classes of methods. Our approach, called Random Interval Distillation (RID), effectively get RID of the influence of structural changes. 
 Combined with Sparse Universal Singular Value Thresholding (SUSVT) proposed in Section \ref{newsec:K}, our method achieves the nearly minimax detection bound and minimax localization bound, up to logarithmic terms, as their independent counterparts if the network has certain low-rank structure. We sample random intervals and calculate the operator norms of cumulative sums  (CUSUMs) of adjacency matrices as signals over intervals. This differs from \cite{wang2021optimal} where adjacency matrices were vectorized and treated as high-dimensional vectors, and leads us to a nearly minimax optimal detection bound up to a logarithmic term in low-rank networks without splitting the samples. 
  
 A major difference between RID and the WBS-based methods is that we do not sequentially separate the random intervals. Instead, we process all random intervals together to create a new sequence of informative intervals. These new intervals, which may not correspond to any of the original intervals, each contain exactly one change-point and are typically very short with high probability, as illustrated in Figure~\ref{fig:intro not}. We term this process `distillation' because the new sequence retains the full information of changes from the original noisy intervals which may contain none or multiple change-points. RID can handle heterogeneity and does not require prior knowledge about minimal spacing when the number of changes is unbounded, which is highly advantageous in practical applications. Additionally, we propose a new data-driven threshold using a clustering algorithm, supported by Theorem~\ref{the:gmm} that guarantees the existence of clustering boundaries for informative and non-informative intervals. It is worth emphasizing that popular seeded intervals from \cite{kovacs2020seeded} provide insufficient information to distinguish between intervals with and without change points for identifying clustering boundaries, as shown in Figure~\ref{parafig:8}.  To the best of the authors' knowledge, RID is the first method that connects change-point detection to clustering with theoretical justification. For refinement under the existence of the low-rank structure, based on the constructed informative intervals and USVT, we develop the SUSVT  for localization tailored to the dependence in Markov chain Bernoulli networks, achieving a nearly minimax optimal rate as its independent counterpart.
 We summarize our contributions as follows:

\begin{enumerate}
    \item \textbf{Flexibility of the Markov Chain Bernoulli Network model:}
The Markov chain Bernoulli network considered in this paper is flexible enough to include existing dynamic network models such as \cite{wang2021optimal}, \cite{jiang2020autoregressive}, and \cite{zhao2019estimation}. The Markov chain between two change-points is allowed to be heterogeneous with time-varying transition matrices, where the number of nuisance parameters tends to infinity as 
$T$ diverges. Furthermore, the heterogeneous Markov chain has also been used to model non-exchangeability for a single observed network, as discussed in \cite{wu2020tractably}.
\item \textbf{Introduction of RID and SUSVT:}
We propose RID, an alternative method for multiple change-point detection that uses random intervals differently from Wild Binary Segmentation, to eliminate the need for prior knowledge of minimal spacing. Moreover, random intervals facilitate the development of an adaptive threshold selection method based on clustering principles.
We further introduce SUSVT for refinement, so that our estimator of changes achieves both nearly minimax detection bound and optimal localization rate for networks with low-rank marginal connectivity probability matrices. Both RID and SUSVT are applicable to the above mentioned flexible Markov chain Bernoulli network models allowing heterogeneity between change-points. RID employs the operator norm instead of the conventional Frobenius norm to aggregate local network cumulative sums, thereby increasing the effective sample size and eliminating the need for sample splitting required by \cite{wang2021optimal}.
\item \textbf{Theoretical contributions:}
To address the theoretical challenges posed by nonstationarity and the heterogeneous Markov structure, we introduce recent results in matrix norms \citep{bandeira2016sharp} and a recent concentration inequality for Markov chains \citep{paulin2015concentration}. These techniques have the potential to be applied to the inference of statistical models for general nonstationary dependent matrix data. Moreover, we provide a theoretical justification for the clustering boundary used in threshold selection, thereby linking change-point detection with clustering.
\end{enumerate}

The rest of the paper is structured as follows. Section~\ref{sec:Problem setup} introduces the Markov chain Bernoulli network and the general settings for change-point detection. Section~\ref{sec:Change-point estimation} presents the algorithms and their theoretical properties. In Section~\ref{subsec:parac}, we introduce a fully data-driven method based on clustering for selecting the threshold. In Section~\ref{sec:sim}, we evaluate the performance of our method through numerical studies, while in Section~\ref{sec:real data}, we  analyze a real-world example, identifying meaningful changes in contact patterns among individuals in a primary school. In Section \ref{sec:conclusion} we  provide the conclusion and possible extensions for future work. We provide the proofs of  Theorem~\ref{th:K}, Theorem~\ref{th:Theoretical result of local refinement}, Theorem~\ref{th:network refinement sbm}, Theorem~\ref{the:gmm} and related lemmas, along with descriptive analyses and figures for the real data example, are included in the appendix.

	\subsection{Notations}  
	For a set $S$, let $|S|$ be the number of elements in $S$. 
	For a real number $x$, $\lfloor x\rfloor$ represents the largest integer that is less than or equal to $x$, and $sign(x)=I(x>0)-I(x<0)$ where $I(\cdot)$ is the indicator function. For two positive real numbers $a,b$, write $a\vee b=max(a,b), a\wedge b=min(a,b)$. For two matrices $A$ and $B$, define the inner product $<A,B>=\sum_{i=1}^n\sum_{j=1}^n A_{ij}B_{ij}$ where $A_{ij}, B_{ij}$ are the elements in $i_{th}$ row and $j_{th}$ column of $A$ and $B$, respectively. 
	For a matrix $A$, let $\text{rank}(A)$ be the rank of $A$, 
	$||A||_{F}$ be the Frobenius norm, and $||A||_{op}$ be the operator norm (i.e., the maximum absolute eigenvalue). 
	For two sequences of positive real numbers $a_n,b_n, a_n=o(b_n)$ if $a_n/b_n\to 0$ as $n\to \infty$, $a_n=O(b_n), a_n\lesssim b_n$ if there exists two positive constants $N$ and $C$ such that $a_n/b_n\le C$ when $n>N$, and $a_n=\Omega(b_n)$ if there exists three positive constants $N, C_1, C_2$ such that $C_1\le a_n/b_n\le C_2$ when $n>N$. Let $C_i, c_i, i=1,2,\cdots$ denote absolute constants.

	\section{Preliminary}
	\label{sec:Problem setup}

	\subsection{Markov chain Bernoulli network}
	
 In practice, the probabilities of connections between nodes often exhibit temporal dependencies in fields. For example, in the business field,  for trade data (e.g., \cite{jiang2020autoregressive}), the annual international trades between countries show temporal dependence.  In ecology, for ant societies data (\cite{mersch2013tracking}), the contact patterns between most of the ant pairs are dependent since they follow rules that incorporate local stimuli from the environment and previous interactions. In social science,  \cite{stehle2011high} identified temporal dependence of social networks across different times, which is further confirmed by our data analysis in Section~\ref{sec:real data}. To allow general temporal dependence between connections, we define the Markov chain Bernoulli network as follows.

\begin{definition}[Markov chain Bernoulli network]
		\label{def:Bernoulli network Markov chain}
		Let $\{A(t)\}_{t=1}^T$ be a series of symmetric adjacency matrices with size $n$. For $1\le i\le j\le n, 1\le t\le T$, the entity $A_{ij}(t)\in\{0,1\}$ are Bernoulli random variables and  $A_{ij}(t)=1$ if the $i_{th}$ and $j_{th}$ nodes  are connected at time $t$.   $\{A(t)\}_{t=1}^T$ is a Markov chain Bernoulli network  if 
		\begin{enumerate}
			\item For any fixed $1\le t\le T$, $A_{ij}(t)$ is independent of $A_{i'j'}(t)$ for all $(i,j)\neq (i',j'), i\le j,i'\le j'$.
			\item For any fixed $1\le i\le j\le n, \{A_{ij}(t)\}_{t=1}^T$ is a Markov chain.
		\end{enumerate}
	\end{definition}
	
	The above definition is mainly for undirected networks, but our results can be easily extended to directed networks. Furthermore, our results are easy to extend to weighted graphs, though we focus on Bernoulli networks in this paper for simplicity. When the Markov chain is homogeneous, detecting changes in connection probabilities is equivalent to that in transition probabilities, and  \cite{jiang2020autoregressive} studied the latter for cases that contain exactly one change-point, requiring $n$ to be divergent.  In this paper, we allow the number of change-points to be fixed or divergent and further allow Markov chains to be heterogeneous when detecting whether $\bb{E}A(t)$ stays constant across time. We remark that the independence condition (Definition~\ref{def:Bernoulli network Markov chain} (1)) is a technical condition as it avoids the challenge of controlling the operator norm of a matrix with dependent entries. It is possible to weaken this condition to conditional independence, which we leave for future work.  
	Our Markov chain Bernoulli network model is very general, including following examples.
	\begin{example}[Independent Bernoulli networks]
 \label{ex:1}
		When $A_{ij}(t)$ is independent of $A_{i'j'}(t')$ for any $(i,j,t)\neq (i',j',t')$, $\{A(t)\}_{t=1}^T$ is a sequence of  independent Bernoulli networks. Notice that for any fixed $t_1$, $A(t_1)$ covers a range of networks, including the stochastic block model \citep{holland1983stochastic}, RDPG \citep{young2007random}, to name a few. 
  \cite{wang2021optimal} proposed an optimal change-point detection method for the above $\{ A(t)\}_{t=1}^T$, however, their method is not applicable when Markovian dependencies between edges exist within the network.
	\end{example}
	\begin{example}[Autoregressive networks]
		
  \cite{jiang2020autoregressive} proposed autoregressive networks defined as \(A_{ij}(t) = A_{ij}(t-1)I(\epsilon_{ij}=0) + I(\epsilon_{ij}=1)\), where \(\epsilon_{ij}\) represents independent innovations with \(\mathbb{P}(\epsilon_{ij}=1) = \alpha_{ij}\), \(\mathbb{P}(\epsilon_{ij}=-1) = \beta_{ij}\), and \(\mathbb{P}(\epsilon_{ij}=0) = 1 - \alpha_{ij} - \beta_{ij}\), subject to the constraint \(\alpha_{ij} + \beta_{ij} \leq 1\). They proposed a method for estimating the parameters \(\alpha_{ij}\) and \(\beta_{ij}\), assuming the sample is drawn from a stationary process. They also studied the change-point of the above trade data under the assumption of at-most-one-change. Notice that their model is equivalent to \(\mathbb{P}(A_{ij}(t)=1 | A_{ij}(t-1)=0) = \alpha_{ij}\) and \(\mathbb{P}(A_{ij}(t)=0 | A_{ij}(t-1)=1) = \beta_{ij}\) for \(\alpha_{ij} + \beta_{ij} \leq 1\), and thus is included in our model as a homogeneous case. Our model is more flexible in the sense that it allows for heterogeneity by permitting the parameters \(\alpha_{ij}\) and \(\beta_{ij}\) to depend on \(t\), and does not require the restriction \(\alpha_{ij} + \beta_{ij} \leq 1\). 
For this model, we investigate the multiple change-point detection problem where the number of changes is allowed to diverge.
	\end{example}
	\begin{example}[Model in \cite{zhao2019estimation}]
	
\cite{zhao2019estimation} defined their dynamic random graph model as \(A_{ij}(t) = A_{ij}(t-1) B^t + (1-A_{ij}(t-1)) C^t\), where \(B^t \overset{i.i.d.}{\sim} \text{Bernoulli}(p)\) and \(C^t \overset{i.i.d.}{\sim} \text{Bernoulli}(q)\). This is equivalent to \(\mathbb{P}(A_{ij}(t) = 1 \mid A_{ij}(t-1) = 0) = q\) and \(\mathbb{P}(A_{ij}(t) = 0 \mid A_{ij}(t-1) = 1) = 1 - p\). This model aligns with the previous autoregressive networks without the restriction \(1 - p + q \leq 1\) but with node homogeneity. Unlike \cite{jiang2020autoregressive}, both \cite{zhao2019estimation}'s approach and our method do not require the first network to be sampled from the stationary distribution of the Markov chain. Additionally, our model allows \(p\) and \(q\) to depend on \(i\), \(j\), and \(t\), providing further flexibility.
 \end{example}

\begin{remark}
Comparing Example~\ref{ex:1} with Model 1 in \citet{padilla2019change} is akin to comparing apples and oranges, as discussed in their page 11. Moreover, our Markov chain Bernoulli network is more general than Example~\ref{ex:1},  setting our model apart from that of \citet{padilla2019change}.
\end{remark}
 
	Let $\Xi=\sup_{i,j,t}|\mathbb{P}(A_{ij}(t)=1|A_{ij}(t-1)=1)-\mathbb{P}(A_{ij}(t)=1|A_{ij}(t-1)=0)|\in [0,1]$, which is the total variation distance. Clearly, we have $ \Xi=0$ in independent networks and $  \Xi=1$ in almost surely deterministic networks.  Generally, $\Xi$ depicts the strength of dependence. In this paper, we assume that $\Xi\neq 1$, but we allow $\Xi\to 1$ as $T\to \infty$. 
	
	\subsection{Change-point detection setups}
	Let $\{A(t), A(t)\in \{0,1\}^{n(n+1)/2}\}_{t=1}^T$ be a sequence of  samples of networks with size $n$. Let  $\eta=\{\eta_1,\cdots,\eta_K\}$ be the collection of $K$ change-points  such that $1<\eta_1<\dots<\eta_K<T+1$. The multiple detection problem is to detect both the total number $K$ and the locations  $\eta_1,\cdots,\eta_{K}$. Let $\hat K$ and $\hat \eta_i$ be the estimators of $K$ and $\eta_i$. We shall consider the change-point detection for piecewise constant time varying connection probability matrices, i.e.,
	\begin{equation}
		\label{def:pc}
		\mathbb{E}A(t)\neq \mathbb{E}A(t-1) \text{ if and only if } t=\eta_k \ \text{for} \ 1\le k\le K.
	\end{equation} 
In our setting, $n, K, \eta_1,\cdots, \eta_K$ are functions of $T$. $K=0$ corresponds to the constant signal. An estimator is consistent if it satisfies the following definition. 
	
	\begin{definition}[Consistency]
		\label{newdef:consistency}
		Denote $\Delta=\min_{k=1,\cdots,K-1} \{\eta_{k+1}-\eta_{k}\}\wedge(\eta_1-1)\wedge (T+1-\eta_K)$.
		The estimator $(\hat{\eta}_1,\cdots,\hat{\eta}_{\hat{K}})$ is consistent with respect to $(\eta_1,\cdots,\eta_{K})$ iff 
		\[
		\mathbb{P}\left( \widehat{K} = K \quad \text{and} \quad \max_{k=1,\ldots,K} | \hat{\eta}_{k} -  \eta_k| \leq \epsilon \right) \to 1,
		\] 
		where 
		$
		{\epsilon}/{\Delta}\to 0 \text{ as } T\to \infty$. 
	\end{definition}

Let $\kappa = \min_{t=\eta_1,\cdots,\eta_K}||\bb{E}A(t)-\bb{E}A(t-1)||_{op}$, which  represents the signal. \cite{wang2021optimal} proposed a similar definition of signal, with the operator norm in $\kappa$ replaced by the Frobenius norm. As a consequence, they construct the statistic by taking the corresponding inner product of two independent samples (or splitting a single sample into two parts), while we directly use the operator norm (see Line 4 in Algorithm~\ref{alg:1}) of the adjacency matrix.
Using the operator norm controls the noise  better, resulting in an optimal detection bound without data splitting as done in \cite{wang2021optimal}, under the presence of dependence between edges at different times and the presence of the low-rank structure of the connection probability matrices.  

	\section{Change-point estimation}
	\label{sec:Change-point estimation}
	
	\subsection{Methodology}
	\label{newsec:K}
	
	Different from methods that require recursion or iterative optimization, we propose a new procedure based on random interval distillation to detect $\eta_k$. In the first step, we construct for each change-point $\eta_k$ a small enough interval $[u_k,v_k]$ that covers it only. In the second step, we localize  $\eta_k$ further within the interval $[u_k,v_k]$. 
	
	\begin{algorithm}[t]
		\caption{Distillation Step}
		\label{alg:1}
		\begin{algorithmic}[1]
			\Require $\{A(t)\}_{t=1}^T$, $\tau$, $M$
			\State $S=\varnothing$
			\State Take $M$ random intervals $\left\{(s_m,e_m]\right\}_{m=1}^M$ independently and uniformly
			\For{$m=1,\cdots,M$} 
	\If{$\max_{t:s_m<t<e_m}||\tilde A_{s_m,e_m}^t||_{op}>\tau$}
			\State $S=S\bigcup \left\{(s_m,e_m]\right\}$ 
			\EndIf
			\EndFor
			\State $\tilde S=S,i=1$
			\While{$|\tilde S|>0$}
			\State 
			$r_i=\mathop{\min}\{v: \exists u, (u,v]\in \tilde S\}$
			\State 
			$v^*=r_i, u^*=\mathop{\max}\{u: (u,v^*]\in \tilde S\}$
			\State $\tilde S=\tilde S\cap\{(u,v]\in \tilde S:  (u,v]\cap (u^*,v^*]\neq \varnothing\}^c$
			\State $i=i+1$
			\EndWhile
			\State $\hat K_0=i-1, \tilde S=S,i=1$
			\While{$|\tilde S|>0$}
			\State $l_{\hat K_0+1-i}=\mathop{\max}\{u: \exists v, (u,v]\in \tilde S\}$
			\State 
			$u^*=l_{\hat K_0+1-i}, v^*=\mathop{\min}\{v: (u^*,v]\in \tilde S\}$
			\State $\tilde S=\tilde S\cap\{(u,v]\in \tilde S:  (u,v]\cap (u^*,v^*]\neq \varnothing\}^c$
			\State $i=i+1$
			\EndWhile
			\State $\hat K=i-1, S^*=\left\{\left[l_{j},r_{j}\right]\right\}_{j=1}^{\hat K}$
			\State Output $\hat{K}\text{ and }S^*$
		\end{algorithmic}
	\end{algorithm}
 
  \begin{figure}[ht]
  	\centering
  	\includegraphics[width=13cm]{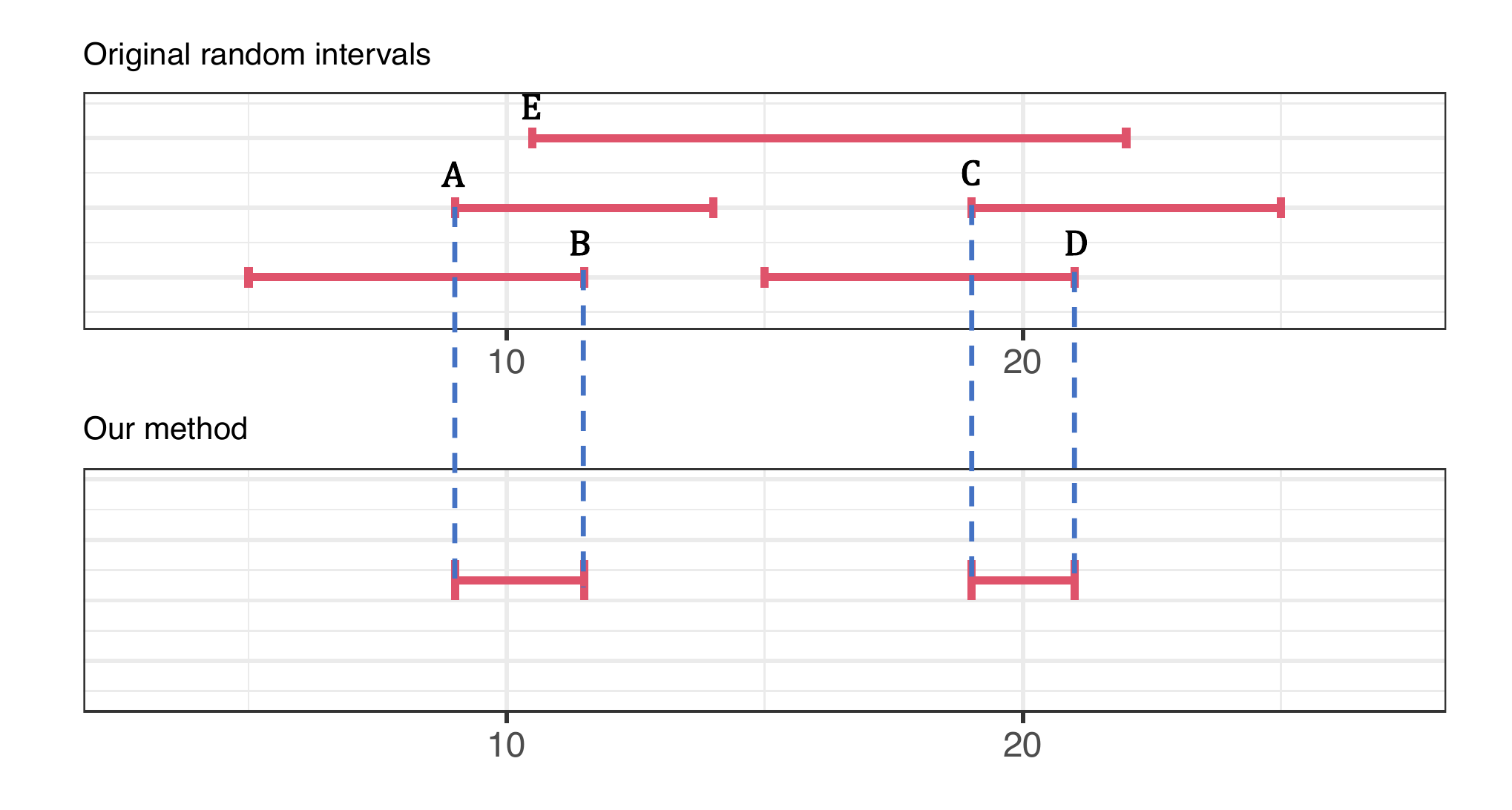}
  	\caption{Suppose that there are two change-points $\eta_1=10, \eta_2=20$. The first figure displays the random intervals (5 intervals in total, represented by horizontal lines) all with signals exceeding the threshold. Our distillation begins with finding the right endpoint $B$ (the smallest right endpoint), then the intervals corresponding to $A$ and $E$ are removed due to the overlap, and we select $D$. Similarly, for the left endpoints, we successively choose $C$ and $A$. Therefore, the final constructed intervals are $[A, B]$ and $[C, D]$. We mention that our method appropriately avoids generating $[E,B]$ which does not contain $\eta_1$.}
  	\label{fig:intro not}
  \end{figure}
 
For any triple $(s,t,e)$ where $(s,e] \subset (0,T], s<t<e$, we define the conventional  CUSUM statistics \citep{page1954continuous} as 
\begin{equation}
		\label{newdef:cusum}
		\tilde A_{s,e}^t=\sqrt{\frac{e-t}{(e-s)(t-s)}}\sum_{r=s+1}^t A(r)-\sqrt{\frac{t-s}{(e-s)(e-t)}}\sum_{r=t+1}^e A(r).
\end{equation}
Our Algorithm~\ref{alg:1} takes the data $\{A(t)\}$, threshold $\tau$ and number of random intervals $M$ as inputs, while outputs $\hat K$ and a set of intervals $S^*$.  In Line 2 in Algorithm~\ref{alg:1}, we draw each random interval by sampling two endpoints, $s,e$,  independently and  uniformly from $1,\cdots,T$. Then the interval is $[s\wedge e,s\vee e]$. In Lines 3-7 in Algorithm~\ref{alg:1}, we gather into the set $S$ the intervals $(s_m,e_m]$ in which $\max_{t:s<t<e}||\tilde A_{s,e}^t||_{op}$ is larger than $\tau$. The main purpose of this step is to exclude intervals that contain no change-point.

\begin{remark}[The necessity of the operator norm]
	\label{rem:necessity of the operator norm}
 Note that our procedure is based on the operator norm of \(\tilde{A}_{s,e}^t\) instead of the conventional Frobenius norm, since for \(\|\tilde{A}_{s,e}^t\|_F\), even if there is no change in \((s,e]\), the random noises accumulate, making \(\|\tilde{A}_{s,e}^t\|_F\) large.  One can refer to tables in Section~\ref{sec:sim} for the inferior performance   of using the Frobenius norm.  
To address this problem, \cite{wang2021optimal} split the independent samples and considered the inner product \(\langle\tilde{A}_{s,e}^t, \tilde{A}_{s,e}^{*t}\rangle\), where \(\tilde{A}_{s,e}^t\) and \(\tilde{A}_{s,e}^{*t}\) are calculated from independent subsamples. If there are no changes in \((s,e]\), the above inner product will be small since the product of random errors centers around zero. In this paper, we show that the operator norm can effectively avoid the problem of error accumulation encountered when using the Frobenius norm, by Corollary 3.2 in \cite{bandeira2016sharp} that provides the upper bound of the expectation of the operator norm of a random matrix with independent sub-Gaussian entities. Using the operator norm further results in a nearly minimax detection bound for low-rank networks, see Remark~\ref{rem:Minimax detection bound} for more details.

\end{remark}

In Lines 8-21 in Algorithm~\ref{alg:1}, we select from $S$ some endpoints and use them to form new intervals. In Lines 8-14, a greedy method is employed to find the right endpoints $\{r_i\}$ of the newly-created intervals. By reversing the time axis and applying the same method, we get the left endpoints $\{l_i\}$. It is straightforward to show that the numbers of $r_i$ and $l_i$ are equal and they both equal to the maximum sizes of subsets of $S$ with non-overlapping intervals, and the number of $r_i$ and $l_i$ is $\hat K$. It is proved in Lemma~\ref{th:alg correctness} that $l_j<r_{j}$ whenever $1\le j\le \hat K$,  which makes $\left[l_j,r_{j}\right]$  our new distilled intervals. Moreover, with a  probability tending to $1$, $\hat K=K$ and $\left[l_j,r_{j}\right]$ only covers $\eta_j$, which is guaranteed in Theorem~\ref{th:K}. The main purpose of sampling random intervals in WBS is to generate some well-positioned intervals to overcome the signal diminishing in certain intervals during the binary segmentation procedure, especially when the interval contains more than two changes. Our method (RID) inherits the benefit of sampling random intervals, and further screens the endpoints of those intervals to form the shortest possible intervals, which leads to better finite sample performance.

\begin{remark}[Comparsion with \cite{kovacs2020seeded}]
A recent popular method is Seeded Binary Segmentation (SBS) proposed by \cite{kovacs2020seeded}, which uses a set of deterministic intervals instead of random intervals. However, it is difficult to decide the threshold in practice for change-point detection of the dependent network.   See Figure~\ref{parafig} which visualizes this effect by comparing the intervals given by \cite{kovacs2020seeded} (Figure~\ref{parafig:8}) and the random intervals (Figure~\ref{parafig:1}).  The figures also demonstrate that a clustering-based method can be built on the random intervals.  

\end{remark}
\subsection{Localization optimized for low-rank networks with unknown ranks}

 To localize the change-points, a naive method is to take $\tilde \eta_k=\arg\max_{t:s_k<t<e_k}||\tilde A_{s_k,e_k}^t||_{op}$ for each $[s_k,e_k]$ in the set $S^*$ produced by Algorithm~\ref{alg:1} as the estimator of $\eta_k$.  However, this is \textit{sub-optimal} for networks with low-rank structures such as the ubiquitous communities, or RDGP (\cite{young2007random}). In Algorithm~\ref{alg:3}, we propose the sparse  USVT (SUSVT) approach for Markov chain Bernoulli networks which achieves a minimax optimal localization rate for low-rank networks, admitting dependence among adjacency matrices at different times. 

Specifically, after applying Algorithm~\ref{alg:1},  we take  data points sparsely, i.e., every $\tau_3\log(T)$ in each interval $[s_k,e_k]$ ($\tau_3$ is determined in Theorem~\ref{th:Theoretical result of local refinement} and $\tau_3\log(T)$ is an integer)  and further split the sequence into two sub-sequences. Let $t_{i,k}=s_k+1+\tau_3\log(T)(i-1), 1\le i\le 1+\lfloor \frac{e_k-s_k-1}{\tau_3\log(T)}\rfloor$. We first apply the CUSUM statistics (i.e., equation (\ref{newdef:cusum})) on the sequence $\{A(t_{2i-1,k})\}_{i\ge 1}$, and apply the USVT step. We then  calculate  the CUSUM statistics on the sequence $\{A(t_{2i,k})\}_{i\ge 1}$ and get the position $\hat\eta_k^*$ where the inner product reaches the maximum (see line 12 in the Algorithm~\ref{alg:3}).  
Finally, we select a final estimator ensembling the CUSUM values within the interval $[t_{i_k^*-2,k},t_{i_k^*+2,k}]$ where $i_k^*=\lfloor(\hat\eta_k^*-s_k-1)/(\tau_3\log(T))\rfloor$ by further getting the position where the maximum value is achieved. See line 13 in  Algorithm~\ref{alg:3}. In this way we fully utilize the local information around the change-point. We present our algorithm in Algorithm~\ref{alg:3}.  

For time complexity, Algorithm~\ref{alg:1} operates with a complexity of \( O(MTn^3 + M\log(M)) \), whereas Algorithm~\ref{alg:3} has a complexity of \( O(Tn^3) \). Consequently, the overall time complexity is \( O(MTn^3) \).

 In the literature, \citet{wang2021optimal} proposes USVT (see also Algorithm \ref{alg:usvt} in the appendix) to localize change-points in low-rank networks, assuming  independence among $\{A(t)\}$ and $\{A(t')\}$ for any $t\neq t'$.

\begin{algorithm}[htbp]
	\caption{Localization in low-rank dynamic networks with unknown ranks}
	\label{alg:3}
	\begin{algorithmic}[1]
		\Require $\{A(t)\}_{t=1}^T$, $S^*$, $\tau_2, \tau_3$
		\State $k=1,\hat K=|S^*|$. Suppose $S^*=\{[l_k,r_k]\}_{k=1}^{\hat K}$.
		\State $\hat\Delta=\min_{2\le k\le \hat K}\left(\frac{l_k+r_k}{2}-\frac{l_{k-1}+r_{k-1}}{2}\right)\wedge \left(T+1-\frac{l_{\hat K}+r_{\hat K}}{2}\right) \wedge \left(\frac{l_{1}+r_{1}}{2}-1\right)$
		\While{$k\le \hat K$}
		\State $s_k=\lfloor l_k-\hat \Delta/16\rfloor, e_k=\lfloor r_k+\hat \Delta/16\rfloor, v_k=\lfloor(l_k+r_k)/2\rfloor$
		\If{$\lfloor \frac{e_k-s_k-1}{\tau_3\log(T)}\rfloor$ is an odd number}
		\State Redefine $e_k=e_k+\tau_3\log(T)$
		\EndIf
		\State $\tilde\Delta_k=\sqrt{
			\left(\lfloor \frac{v_k-s_k-1}{2\tau_3\log(T)}\rfloor  +1\right)
			\left(\lfloor \frac{e_k-s_k-1}{2\tau_3\log(T)}\rfloor-
			\lfloor \frac{v_k-s_k-1}{2\tau_3\log(T)}\rfloor\right)/{
				\left(\lfloor \frac{e_k-s_k-1}{2\tau_3\log(T)}\rfloor+1\right)}}$
		\State \begin{align*}
			\tilde Y_{s_k,e_k}^{v_k}&=
    {\tilde\Delta_k}\sum_{i=0}^{\lfloor \frac{v_k-s_k-1}{2\tau_3\log(T)}\rfloor}A(s_k+1+2\tau_3\log(T)i)/\left(\lfloor \frac{v_k-s_k-1}{2\tau_3\log(T)}\rfloor  +1\right)
			\\&-{\tilde\Delta_k}\sum_{i=1+\lfloor \frac{v_k-s_k-1}{2\tau_3\log(T)}\rfloor}^{\lfloor \frac{e_k-s_k-1}{2\tau_3\log(T)}\rfloor}A(s_k+1+2\tau_3\log(T)i)/\left(\lfloor \frac{e_k-s_k-1}{2\tau_3\log(T)}\rfloor-
				\lfloor \frac{v_k-s_k-1}{2\tau_3\log(T)}\rfloor\right)
		\end{align*}
		\State $\hat Y_k=\text{USVT}(\tilde Y_{s_k,e_k}^{v_k},\tau_2,\tilde\Delta_k)$
		\State Let \begin{align*}
			\tilde Z_{s_k,e_k}^{t}&=
			\sqrt{\frac{
					\lfloor \frac{e_k-s_k-1}{2\tau_3\log(T)}-\frac{1}{2}\rfloor-
					\lfloor \frac{t-s_k-1}{2\tau_3\log(T)}-\frac{1}{2}\rfloor
				}{\left(
					1+\lfloor \frac{e_k-s_k-1}{2\tau_3\log(T)}-\frac{1}{2}\rfloor
					\right)
					\left(
					1+\lfloor \frac{t-s_k-1}{2\tau_3\log(T)}-\frac{1}{2}\rfloor
					\right)}} \sum_{i=0}^{\lfloor \frac{t-s_k-1}{2\tau_3\log(T)}-\frac{1}{2}\rfloor}A(s_k+1+\tau_3\log(T)(2i+1))\\
			&-
			\sqrt{\frac{
					1+\lfloor \frac{t-s_k-1}{2\tau_3\log(T)}-\frac{1}{2}\rfloor
				}{\left(
					1+\lfloor \frac{e_k-s_k-1}{2\tau_3\log(T)}-\frac{1}{2}\rfloor
					\right)
					\left(
					\lfloor \frac{e_k-s_k-1}{2\tau_3\log(T)}-\frac{1}{2}\rfloor-
					\lfloor \frac{t-s_k-1}{2\tau_3\log(T)}-\frac{1}{2}\rfloor
					\right)}}\\
			&\sum_{i=1+\lfloor \frac{t-s_k-1}{2\tau_3\log(T)}-\frac{1}{2}\rfloor}^{\lfloor \frac{e_k-s_k-1}{2\tau_3\log(T)}-\frac{1}{2}\rfloor}A(s_k+1+\tau_3\log(T)(2i+1))
		\end{align*}
		\State $\hat\eta_k^*=\arg\max_{s_k+(e_k-s_k)/100<t\le e_k-(e_k-s_k)/100}<\tilde Z_{s_k,e_k}^{t},\hat Y_k>$
  \State $\hat\eta_k=\arg\max_{\hat\eta_k^*-2\tau_3\log(T)<t\le \hat\eta_k^*+2\tau_3\log(T)}    
  \|\tilde A_{\hat\eta_k^*-2\tau_3\log(T),\hat\eta_k^*+2\tau_3\log(T)}^t\|_{op}
  $.
		\State $k=k+1$
		\EndWhile
		\State Output $\{\hat\eta_k\}_{k=1}^{\hat K}$
	\end{algorithmic}
\end{algorithm}

\subsection{Theoretical results}
	\label{Consistency results}
	 Throughout this section, we assume that $T\ge 3$. To save the notation, we write $\log^*(x)=\log(x)\wedge 1$. When $x=\infty$, we define $\log^*(x)=1$.
  We first propose the results of distilled intervals in Algorithm~\ref{alg:1}.
	
\begin{theorem}[Proporties of distilled intervals]
		\label{th:K}
		Assume that for some large enough constant $C>0$, (\RNum{1}) $	\kappa\sqrt{\Delta}>8C(\sqrt{\log(T)}/(1-\sqrt{\Xi})+\sqrt{n}/\sqrt{\log^*(1/\Xi)})$, (\RNum{2}) $C(\sqrt{\log(T)}/(1-\sqrt{\Xi})+\sqrt{n}/\sqrt{\log^*(1/\Xi)})<\tau < 
			\sqrt{\Delta}\kappa/4-C(\sqrt{\log(T)}/(1-\sqrt{\Xi})+\sqrt{n}/\sqrt{\log^*(1/\Xi)})$, then we have 
		\begin{equation}
			\label{eqn:P1}
			\mathbb{P}\left(\{\hat{K}=K\}, \bigcap_{j=1}^{\hat K}\left\{\left[l_{j},r_{j}\right]  \text{ covers } \eta_j \text{ and }r_{j}-l_{j}\le \frac{\Delta}{2}\right\}\right)\ge 1-T^{-1}-\frac{T}{\Delta}exp\left\{-\frac{M\Delta^2}{32T^2} \right\},
		\end{equation}
		where  $\hat{K}$ and $\{[l_j,r_j]\}_{j=1}^{\hat K}$ are yielded by Algorithm~\ref{alg:1} with input parameters $\tau$ and $M$.
		
	\end{theorem}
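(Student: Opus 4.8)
The plan is to split the argument into a single probabilistic estimate controlling the operator norm of the CUSUM fluctuations uniformly over all triples $(s,t,e)$, and an otherwise deterministic analysis of the greedy distillation once signal and noise are separated. Write $\phi=C(\sqrt{\log(T)}/(1-\sqrt{\Xi})+\sqrt{n}/\sqrt{\log^*(1/\Xi)})$ for the noise level appearing in conditions (\RNum{1})--(\RNum{2}), and decompose $\tilde A_{s,e}^t=\mathbb{E}\tilde A_{s,e}^t+E_{s,e}^t$ into its mean and a centered noise part. The two terms in (\ref{eqn:P1}) will arise from two events: a \emph{deviation event} $\mathcal D=\{\max_{(s,t,e)}\|E_{s,e}^t\|_{op}\le\phi\}$, on which $\mathbb{P}(\mathcal D^c)\le T^{-1}$, and a \emph{sampling event} $\mathcal S$ on which every change-point is bracketed by a suitable random interval, with $\mathbb{P}(\mathcal S^c)\le (T/\Delta)\exp(-M\Delta^2/(32T^2))$; a union bound then gives (\ref{eqn:P1}). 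Condition (\RNum{1}) is exactly what guarantees the admissible range for $\tau$ in (\RNum{2}) is nonempty, i.e. $\phi<\sqrt{\Delta}\kappa/4-\phi$.

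First I would establish $\mathcal D$, which I expect to be the main obstacle. By Definition~\ref{def:Bernoulli network Markov chain}(1) the entries of $E_{s,e}^t$ are independent across $(i,j)$, while by (2) each entry is a fixed CUSUM-weighted linear combination of the Markov chain $\{A_{ij}(r)\}_r$. I would bound $\mathbb{E}\|E_{s,e}^t\|_{op}$ through Corollary~3.2 of \cite{bandeira2016sharp} for matrices with independent centered sub-Gaussian entries; the per-entry sub-Gaussian/variance-proxy control, together with the inflation by the mixing time that produces the factor $1/\sqrt{\log^*(1/\Xi)}$ multiplying $\sqrt n$ and the factor $1/(1-\sqrt{\Xi})$ in the tail, comes from the Markov concentration inequality of \cite{paulin2015concentration}. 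The same tail estimate combined with a union bound over the $O(T^3)$ triples supplies the $\sqrt{\log T}$ factor and, for $C$ large enough, forces $\mathbb{P}(\mathcal D^c)\le T^{-1}$.

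On $\mathcal D$ I would translate the bound $\|E_{s,e}^t\|_{op}\le\phi$ into the two detection guarantees that $\tau$ is calibrated to separate. If $(s,e]$ contains no change-point then $\mathbb{E}\tilde A_{s,e}^t=0$, so $\max_t\|\tilde A_{s,e}^t\|_{op}\le\phi<\tau$ by the left inequality of (\RNum{2}); hence every interval retained in $S$ contains at least one change-point. Conversely, if $(s,e]$ brackets a single $\eta_k$ with both arms in $[\Delta/8,\Delta/4]$, then the population CUSUM at $t=\eta_k$ has operator norm $\sqrt{(\eta_k-s)(e-\eta_k)/(e-s)}\,\kappa\ge\sqrt{\Delta}\kappa/4$ (the ratio is minimized at equal arms of length $\Delta/8$), so by the triangle inequality $\max_t\|\tilde A_{s,e}^t\|_{op}\ge\sqrt{\Delta}\kappa/4-\phi>\tau$ via the right inequality of (\RNum{2}); such a well-positioned interval is retained, has length $\le\Delta/2$, and covers only $\eta_k$ since its endpoints stay strictly between $\eta_{k-1}$ and $\eta_{k+1}$.

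Finally, for $\mathcal S$ I would fix $\eta_k$ and note that a single uniformly drawn interval has its left arm in $[\Delta/8,\Delta/4]$ and right arm in $[\Delta/8,\Delta/4]$ about $\eta_k$ with probability at least $2(\Delta/8)^2/T^2=\Delta^2/(32T^2)$ (the factor $2$ from the ordering of the two drawn endpoints); the probability that none of the $M$ intervals is good for $\eta_k$ is then at most $\exp(-M\Delta^2/(32T^2))$, and a union bound over the at most $T/\Delta$ change-points yields $\mathbb{P}(\mathcal S^c)$. On $\mathcal D\cap\mathcal S$, every retained interval contains a change-point while each $\eta_k$ carries a retained short bracketing interval; since pairwise disjoint intervals must contain disjoint change-points, the maximum number of non-overlapping intervals in $S$ is exactly $K$, and this is precisely the common value computed by the two greedy passes, so $\hat K=K$. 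The earliest-endpoint and latest-endpoint scheduling then assign $r_j$ and $l_j$ to the $j$-th change-point, and the bracketing intervals give $\eta_j\le r_j\le\eta_j+\Delta/4$ and $\eta_j-\Delta/4\le l_j<\eta_j$; invoking Lemma~\ref{th:alg correctness} for $l_j<r_j$, this shows that $[l_j,r_j]$ covers $\eta_j$ with $r_j-l_j\le\Delta/2$, completing the proof.
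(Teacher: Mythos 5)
Your proposal is correct and follows essentially the same route as the paper's proof: the same two-event decomposition (a uniform operator-norm deviation event built from per-entry Markov-chain concentration, Corollary 3.2 of \cite{bandeira2016sharp} for the expected operator norm, and a dimension-free tail with the $1/(1-\sqrt{\Xi})$ factor union-bounded over $O(T^3)$ triples; plus the bracketing-interval sampling event with the identical per-draw probability $\Delta^2/(32T^2)$ and the $K\le T/\Delta$ union bound), the same $\sqrt{\Delta}\kappa/4$ population-CUSUM lower bound, the same pigeonhole/disjointness count yielding $\hat K=K$, and the same greedy-induction localization of $r_j$ and $l_j$ inside the bracketing intervals. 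The only differences are cosmetic: you control $\|\tilde A_{s,e}^t-\mathbb{E}\tilde A_{s,e}^t\|_{op}$ directly rather than $\bigl|\|\tilde A_{s,e}^t\|_{op}-\|\mathbb{E}\tilde A_{s,e}^t\|_{op}\bigr|$ as the paper does, you minimize $(\eta_k-s)(e-\eta_k)/(e-s)$ at equal arms instead of using the paper's $\tfrac12\min$ bound (same constant), and the $1/(1-\sqrt{\Xi})$ tail is actually taken from \cite{samson2000concentration} in the paper, with \cite{paulin2015concentration} supplying only the per-entry $\log^*(1/\Xi)$ bound.
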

	Theorem \ref{th:K} shows that under suitable signal-to-noise condition (\RNum{1}),  with a high probability, our distilled intervals will consist of $K$ disjoint intervals with each length smaller than $\Delta/2$, and each interval covers one change-point. The signal-to-noise condition (\RNum{1})  is influenced by $\Xi$ which measure the  dependence of networks at two adjacent time points. Condition (\RNum{2}) provides a range for threshold  $\tau$ such that the resulted distilled intervals satisfy the desired property (\ref{eqn:P1}). Conditions (\RNum{1}) and (\RNum{2}) allow   both  fixed and divergent $n$ and $K$.  For implementation, we propose a data-adaptive approach to select a suitable $\tau$ through clustering in Section~\ref{subsec:parac}.  
		Since \( M \) appears only on the right-hand side of (\ref{eqn:P1}), a larger value of \( M \) increases the probability of our estimators performing effectively. We recommend choosing the largest possible $M$ under computational constraints.  
	 The probability   (\ref{eqn:P1}) will tend to $1$ if
	\begin{equation}
		\label{eqn:M1}
		\frac{T^2}{\Delta^2}\log\left(\frac{T}{\Delta}\right)=o(M).
	\end{equation}
	For WBS-based algorithms, (\ref{eqn:M1}) is commonly assumed, as discussed in \cite{fryzlewicz2014wild} among others.  
 
	\begin{remark}[Minimax detection bound for low-rank networks]
		\label{rem:Minimax detection bound}
		Let 
		\begin{align}
			\label{def:rank}
			r=\max_{1\le k\le K}\text{rank}(\mathbb{E}A(\eta_k)-\mathbb{E}A(\eta_k-1)).
		\end{align}
		 When $r$ is bounded and $\Xi$ is bounded away from $1$, our condition (\RNum{1})   can be implied by 
		\begin{equation}
			\label{minimax:2}
			\kappa_2\sqrt{\Delta}/n\ge C_1\sqrt{\log(T)}
		\end{equation}
		for some sufficiently large $C_1$, where 
  $\kappa_2 = \min_{t=\eta_1,\cdots,\eta_K}||\bb{E}A(t)-\bb{E}A(t-1)||_{F}$. 
  By  Lemma 1 and its proof in \cite{wang2021optimal}, if $T$ is sufficiently large and the joint distribution $F_A$ of $(A(1),\cdots,A(T))$ satisfies 
		$\kappa_2 \sqrt{\Delta}/n\lesssim\sqrt{\log(T)}
		$, we have 
		$\inf_{\hat \eta}\sup_{F_A}\mathbb{E}(H(\eta,\hat\eta))\ge {\Delta}/{2}$ where $H$ is the Hausdorff distance (see (\ref{def:normalized Hausdorff distance}) in Section~\ref{sec:sim} for the exact definition). Note that by Theorem~\ref{th:K}, under condition (\ref{minimax:2}), the native estimator, which selects an arbitrary value within each of the distilled intervals, satisfies $\bb{E}H(\eta,\hat\eta) \leq \Delta/4$. Therefore, our signal-to-noise ratio condition is minimax optimal in the sense that we have reached the weakest possible detection condition. Note that our method does not require prior knowledge of the minimal spacing $\Delta$. 
	\end{remark}
 
	\begin{remark}[Comparsion with \cite{wang2021optimal}]
\cite{wang2021optimal} considered change-points analysis for independent Bernoulli networks sequence. Their method splits the sequence into two independent sub-sequences and uses the inner product of the  sub-sequences,  achieving (nearly) minimax optimality for the detection bound if the minimal spacing $\Delta$ is known.  Our method achieves the nearly optimal detection bound allowing dependence among low-rank networks, and does not require prior knowledge of $\Delta$ or data splitting. 
\end{remark}

We now investigate the localization rate when the networks possess certain low-rank structures. Recall that $\mathbb{E}A(t)$ is piecewise constant and $r=\max_{1\le k\le K}\text{rank}(\mathbb{E}A(\eta_k)-\mathbb{E}A(\eta_k-1))$. We mention that for networks with community structure, $r$ is typically small because $r\le 2\max_t \text{rank}(\mathbb{E}A(t))$. 

\begin{assumption}
	\label{assumption:local refinement}
	Assume that (a) There exists a constant $c_7>1$ such that $n\le T^{c_7-\frac{1}{2}}$. Moreover, ${T^{-c_7}}\le \mathbb{E}A_{ij}(t)\le 1-T^{-c_7}$ for all $1\le i,j\le n, 1\le t\le  T$. 
 (b) ${\log(T)}=o(\Delta\log^*(1/\Xi))$. 
	(c) $\log^{3/2}(T)\sqrt{nr}=o(\kappa_2\sqrt{\Delta\log^*(1/\Xi)})$. 
\end{assumption}

 Assumption~\ref{assumption:local refinement} (a) restricts the divergence rate of the network size $n$ due to the complex dependencies among networks at different time points. When the networks are independent,  following the proof, (a) is not needed. (b) ensures that  $\log(T)/\log^*(1/\Xi)$ below is small enough, which is necessary for the localization step. (c) is the signal-to-noise ratio condition. Comparing (c) with condition (\RNum{1}) in Theorem~\ref{th:K}, the main enhanced term is a $\sqrt{r}$ term. When $r=O(1)$, (c) is asymptotically equivalent (up to a logarithmic factor) to condition (\RNum{1}) in Theorem~\ref{th:K} because $\kappa_2/\sqrt{r}\le \kappa\le\kappa_2$ and $1-\sqrt{\Xi}<\sqrt{\log^*(1/\Xi)}$. When $\Xi$ is bounded away from $1$,  (c) matches the Assumption 3 in \cite{wang2021optimal}.  

\begin{theorem}[Localization rate of change-point estimation]
	\label{th:Theoretical result of local refinement}
	Assume that the conditions for Theorem~\ref{th:K} and Assumption~\ref{assumption:local refinement}  hold.  Apply Algorithm~\ref{alg:1} to get $\hat K$ and $S^*$, and then apply Algorithm~\ref{alg:3} with input parameters $S^*, \tau_2,\tau_3$ to get $\{\hat\eta_k\}_{k=1}^{\hat K}$ where
	$\tau_2=C_7(\sqrt{n}+\sqrt{\log(T)})$, $ \tau_3=\lfloor{c_8\log(T)}/\log^*(1/\Xi)\rfloor/\log(T)$ for  arbitrary positive constants $C_7>8,c_8\ge c_7$. Then we have $\bb{P}\left(\hat{K}=K,\max_{k=1,\cdots,K} |\hat\eta_{k}-\eta_{k}|\le \varepsilon \right)\to 1$ as $T\to \infty$, where
	\begin{equation}
		\label{eqn:minimax network refinement}
		0<{\varepsilon}\le C_{4}^*\frac{c_8\log^3(T)}{\log^*(1/\Xi)\kappa_2^2},
	\end{equation}
	and $C_4^*$ is an absolute constant.
\end{theorem}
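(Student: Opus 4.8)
The plan is to condition on the high-probability event of Theorem~\ref{th:K} and then analyze Algorithm~\ref{alg:3} one interval at a time, finishing with a union bound over $k=1,\dots,K$. On the event in (\ref{eqn:P1}) we have $\hat K=K$ and each distilled interval $[l_k,r_k]$ covers only $\eta_k$ with $r_k-l_k\le\Delta/2$; from this I would first deduce that the estimated spacing satisfies $\hat\Delta=\Omega(\Delta)$ and $\hat\Delta\lesssim\Delta$, so that the enlarged interval $[s_k,e_k]$ built in Line~4 still contains $\eta_k$ and no other change-point, and moreover that $\eta_k$ sits in the interior, bounded away from $s_k,e_k$ by a constant fraction of $e_k-s_k$. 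This balancedness is what guarantees that the CUSUM contrast at $\eta_k$ carries an $\Omega(\sqrt{e_k-s_k})$ share of the signal and, after subsampling, survives within the admissible range $s_k+(e_k-s_k)/100<t\le e_k-(e_k-s_k)/100$ used in Line~12.

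The core technical device is the sparse subsampling with gap $\tau_3\log(T)\asymp c_8\log(T)/\log^*(1/\Xi)$. I would show that this gap is a constant multiple of a decorrelation time for the edgewise Markov chains: since the one-step contraction is governed by $\Xi$, two retained observations are far enough apart that the coupling error is at most of order $T^{-c_8}$ per edge. Using the concentration inequality for Markov chains of \citep{paulin2015concentration} together with a coupling, I would replace each subsampled sequence by an \emph{independent} Bernoulli-network sequence up to an error that is negligible after the union bound, and simultaneously argue that the two interleaved subsequences (Lines~9 and~11, offset by $\tau_3\log(T)$) are mutually nearly independent. This reduction is the step I expect to be the main obstacle: it must decorrelate strongly enough to import the independent-case arguments, yet retain essentially the full per-observation jump $\kappa_2$, and it must hold uniformly over all $O(T)$ candidate split points and all $K$ intervals. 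Assumption~\ref{assumption:local refinement}(a),(b) enter precisely here, to control the divergence of $n$ and to make $\log(T)/\log^*(1/\Xi)$ small relative to $\Delta$.

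Conditioning on the first subsequence, I would analyze $\hat Y_k=\mathrm{USVT}(\tilde Y^{v_k}_{s_k,e_k},\tau_2,\tilde\Delta_k)$ via the low-rank USVT guarantee. The threshold $\tau_2=C_7(\sqrt n+\sqrt{\log T})$ is taken above the operator norm of the centered noise matrix, which is controlled by Corollary~3.2 of \citep{bandeira2016sharp} for independent sub-Gaussian entries; hence USVT annihilates the noise, and since $D_k:=\bb{E}A(\eta_k)-\bb{E}A(\eta_k-1)$ has rank at most $r$, it recovers the change direction with $\|\hat Y_k-c_kD_k\|_F$ small relative to the signal scale, where $c_k>0$ is the CUSUM scaling. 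The quantitative outputs I would extract are $\langle\hat Y_k,D_k\rangle\gtrsim c_k\kappa_2^2$ and $\|\hat Y_k\|_F\lesssim c_k\kappa_2$; Assumption~\ref{assumption:local refinement}(c), with its extra $\sqrt r$ factor, is exactly the signal-to-noise condition that makes this alignment effective.

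Finally I would carry out the localization. Because $\hat Y_k$ is nearly independent of the second subsequence, the univariate process $t\mapsto\langle\tilde Z^{t}_{s_k,e_k},\hat Y_k\rangle$ is a CUSUM whose population mean is a tent peaking at $\eta_k$ with height of order $c_k\kappa_2^2\sqrt{m_k}$ (with $m_k$ the number of retained points) and whose fluctuations are scalar with standard deviation of order $c_k\kappa_2$; a standard univariate CUSUM localization argument then places $\hat\eta_k^{*}$ in the correct subsampling cell, at rate $\kappa_2^{-2}$ up to logarithmic factors. The concluding operator-norm maximization in Line~13 over the short window $[\hat\eta_k^{*}-2\tau_3\log T,\hat\eta_k^{*}+2\tau_3\log T]$ then refines the estimate below the cell width: in the signal-dominated regime the leading singular direction of $\tilde A^{t}$ aligns with $D_k$, so the relevant fluctuation is again the scalar projection rather than the full $\sqrt n$ operator-norm noise, and balancing the tent slope of order $\kappa/\sqrt{\tau_3\log T}$ against this scalar noise yields $|\hat\eta_k-\eta_k|\le\varepsilon$ with $\varepsilon$ of the stated order $c_8\log^3(T)/(\log^*(1/\Xi)\kappa_2^2)$. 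Collecting the failure probabilities from Theorem~\ref{th:K}, the coupling step, the USVT event, and the two maximal inequalities, and taking a union bound over $k=1,\dots,K$, gives the claimed probability tending to one.
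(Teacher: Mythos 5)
Your proposal reproduces the paper's architecture almost step for step: the same Step-1 geometry (from Theorem~\ref{th:K}, $\tfrac12\Delta\le\hat\Delta\le\tfrac32\Delta$ and $\eta_k$ sits a constant fraction of $e_k-s_k$ away from both endpoints), the same USVT analysis (noise controlled by Corollary 3.2 of \cite{bandeira2016sharp}, low-rank recovery giving $\bigl\langle\mathbb{E}\tilde Y_{s,e}^{v_k},\hat Y_k/\|\hat Y_k\|_F\bigr\rangle\ge\tfrac12\|\mathbb{E}\tilde Y_{s,e}^{v_k}\|_F$, with Assumption~\ref{assumption:local refinement}(c) entering exactly as you say), and the same reduction to the univariate series $a(t)=\langle Z(t),\hat Y_k/\|\hat Y_k\|_F\rangle$ localized by the covariance-CUSUM lemma of \cite{wang2017optimal}. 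The genuine divergence is the decorrelation step, which you correctly single out as the crux. You propose a global coupling replacing both subsampled subsequences by independent copies; the paper never couples. For the first subsequence it applies Samson's inequality \citep{samson2000concentration} directly to the \emph{dependent} chain, using only that subsampling makes the mixing coefficient $\Xi^{2\tau_3\log T}\le T^{-2c_7}$, hence $\|\Gamma\|^2\le4$; for the second it uses the \emph{exact} conditional independence of even-indexed states given odd-indexed states (Lemma~\ref{lemma:markov conditional independent}), applies Bernstein conditionally on $\{Y(t)\}$ — legitimate because $\hat Y_k$ is $\{Y(t)\}$-measurable — and absorbs the dependence into the deterministic mean shift $|\mathbb{E}(Z_{ij}(t)\mid\{Y(t)\})-\mathbb{E}Z_{ij}(t)|\le3\Xi^*$ (Lemma~\ref{lemma:markov tv conditional}), aggregated in Frobenius norm via $n\sqrt\Delta\,T^{-c_7}\le1$. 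Your coupling can be made to work, but it is quantitatively fragile at the permitted endpoint $c_8=c_7$, $n\asymp T^{c_7-1/2}$: a union bound over $n^2$ edges and the $\asymp T/(\tau_3\log T)$ retained times gives coupling-failure probability of order $T^{2(c_7-c_8)}/\log T$, which vanishes only logarithmically, and only because you count retained rather than all time points. So your route trades the paper's polynomial failure probabilities for logarithmic ones and requires exactly the bookkeeping you flag as the obstacle; the mutual near-independence of the two interleaved grids, which you must also extract from the coupling, is automatic in the paper's conditional formulation.

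One further caveat concerns your mechanism for Line 13. The claim that over the short window the fluctuation of $\|\tilde A_{\cdot,\cdot}^{t}\|_{op}$ reduces to a scalar projection because "the leading singular direction aligns with $D_k$" is not a proof: the operator-norm noise is of order $\sqrt n/\sqrt{\log^*(1/\Xi)}$, and a naive slope-versus-noise balance would give $\varepsilon\gtrsim\sqrt{n\tau_3\log T}/\kappa$, which can exceed the bound in (\ref{eqn:minimax network refinement}) when $\kappa_2\gg\log T$ — precisely the regime where the cell width $2\tau_3\log T$ is larger than $\varepsilon$ and the refinement matters. To be fair, the paper itself disposes of this step in one line ($\sup_k|\hat\eta_k-\eta_k|=O(\sup_k|\hat\eta_k^*-\eta_k|)$), so neither account writes out a sub-cell argument; but you should not present the alignment heuristic as closing that gap — it would need a genuine perturbation argument exploiting the shared data between nearby CUSUMs in the window.
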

In practice, we recommend choosing $c_8=1.5$ for network with moderate size. 

\begin{remark}
	By Assumption~\ref{assumption:local refinement}, we have ${\varepsilon}/{\Delta} =o(1)$. Hence, the estimator in Theorem~\ref{th:Theoretical result of local refinement} is consistent. Furthermore, when $\Xi$ is bounded away from 1,  (\ref{eqn:minimax network refinement}) matches with \cite{wang2021optimal}, which is also  nearly minimax optimal up to a logarithmic factor,  following Lemma 2 in \cite{wang2021optimal}.
\end{remark}

It is often assumed that there is no self-loop, i.e., $A_{ii}(t)=0$ for all $i,t$, in which case the rank of $\mathbb{E}A(t)$ is not low. To fix this problem, assume that there exists a $B(t)$ such that $\bb{E}A(t)=B(t)-diag(B(t))$. Let $\tilde r=\max_t rank(B(t))$. We modify Assumption~\ref{assumption:local refinement} to the following assumption with further restrictions on $B(t)$.
\begin{assumption}
	\label{assumption:local refinement 1}
	Assume that (a) There exists a constant $c_7>1$ such that $n\le T^{c_7-\frac{1}{2}}$. Moreover, ${T^{-c_7}}\le B_{ij}(t)\le 1-T^{-c_7}$ for all $1\le i,j\le n, 1\le t\le  T$. (b) ${\log(T)}=o(\Delta\log^*(1/\Xi))$.
	(c) $\log^{3/2}(T)\sqrt{n\tilde r}=o(\kappa_2\sqrt{\Delta\log^*(1/\Xi)})$.  (d) $\|B(t))\|_F \ge C \|diag (B(t))\|_F$ for all $t$ and a large enough absolute constant $C>0$.
\end{assumption}

Notice that Assumption~\ref{assumption:local refinement 1}(d) is also considered in 
\cite{wang2021optimal}. 
\begin{theorem}
	\label{th:network refinement sbm}
	Under Assumption~\ref{assumption:local refinement 1}, Theorem~\ref{th:Theoretical result of local refinement} still holds.
\end{theorem}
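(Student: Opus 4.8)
The plan is to reduce Theorem~\ref{th:network refinement sbm} to Theorem~\ref{th:Theoretical result of local refinement} by showing that the no-self-loop structure $\mathbb{E}A(t)=B(t)-\mathrm{diag}(B(t))$ only perturbs the relevant quantities by controllable amounts, so that the entire localization argument carries through with $\tilde r$ in place of $r$. The key observation is that the matrices $\mathbb{E}A(t)-\mathbb{E}A(t-1)$ appearing in the signal $\kappa$ and its low-rank refinement decompose as $\bigl(B(t)-B(t-1)\bigr)-\mathrm{diag}\bigl(B(t)-B(t-1)\bigr)$. The first term has rank at most $2\tilde r$, which is what feeds into the SUSVT step of Algorithm~\ref{alg:3}; the second term is diagonal and, by Assumption~\ref{assumption:local refinement 1}(d), is negligible in Frobenius norm relative to the signal. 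Thus the effective low-rank object being thresholded is essentially $B(t)-B(t-1)$.

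First I would verify that Assumption~\ref{assumption:local refinement 1} implies the hypotheses actually used in the proof of Theorem~\ref{th:Theoretical result of local refinement}. Part (a) gives the same entrywise bounds and size restriction, now on $B(t)$ rather than $\mathbb{E}A(t)$; since $0\le \mathbb{E}A_{ij}(t)\le B_{ij}(t)\le 1-T^{-c_7}$ and the diagonal correction does not affect off-diagonal entries, the concentration inequalities for the CUSUM adjacency matrices (via \cite{bandeira2016sharp} and \cite{paulin2015concentration}) apply verbatim to the off-diagonal entries. Parts (b) and (c) are identical to Assumption~\ref{assumption:local refinement} with $r$ replaced by $\tilde r$, and since $\mathrm{rank}\bigl(\mathbb{E}A(\eta_k)-\mathbb{E}A(\eta_k-1)\bigr)\le \mathrm{rank}\bigl(B(\eta_k)-B(\eta_k-1)\bigr)+1\le 2\tilde r+1$, the rank entering the SUSVT error bound is of the same order $\tilde r$.

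The main content — and the place where Assumption~\ref{assumption:local refinement 1}(d) is essential — is controlling the diagonal term in the USVT/SUSVT analysis. In Algorithm~\ref{alg:3}, the matrix $\tilde Y_{s_k,e_k}^{v_k}$ is a weighted difference of averaged adjacency matrices and thus estimates a scaled version of $\mathbb{E}A(t)-\mathbb{E}A(t-1)=\bigl(B(t)-B(t-1)\bigr)-\mathrm{diag}\bigl(B(t)-B(t-1)\bigr)$. Applying USVT recovers the low-rank part $B(t)-B(t-1)$ up to the usual estimation error, but it also incurs an additional bias equal to the dropped diagonal. I would bound this bias in Frobenius norm by $\|\mathrm{diag}(B(t)-B(t-1))\|_F\le \|\mathrm{diag}(B(t))\|_F+\|\mathrm{diag}(B(t-1))\|_F$, and then invoke Assumption~\ref{assumption:local refinement 1}(d) to show this is at most a small constant fraction of $\|B(t)-B(t-1)\|_F$, hence of the signal $\kappa_2$. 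Because the localization estimator in lines 12--13 of Algorithm~\ref{alg:3} is driven by an inner product $\langle \tilde Z_{s_k,e_k}^t,\hat Y_k\rangle$ that is maximized at the true change-point, a bias term that is a small fraction of the signal does not move the argmax beyond the claimed window $\varepsilon$; it only inflates the constant $C_4^*$.

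The hard part will be making the diagonal-bias bound quantitatively compatible with the curvature argument that yields the localization rate~(\ref{eqn:minimax network refinement}). Specifically, the proof of Theorem~\ref{th:Theoretical result of local refinement} relies on a lower bound for how fast the population CUSUM inner product decreases as $t$ moves away from $\eta_k$, balanced against the stochastic and approximation errors; the diagonal contribution must be absorbed into the error side without destroying this balance. I expect that Assumption~\ref{assumption:local refinement 1}(d) with a \emph{large enough} constant $C$ is precisely calibrated so that the diagonal bias is dominated by $c\,\kappa_2$ for a sufficiently small $c$, leaving the signal-to-noise budget in (c) intact after replacing $r$ by $\tilde r$. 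Once this is established, the remainder of the argument — the two-stage sparse subsampling, the USVT thresholding guarantee, and the final refinement in line 13 — is identical to the proof of Theorem~\ref{th:Theoretical result of local refinement}, so I would conclude by remarking that the same chain of inequalities delivers $\mathbb{P}(\hat K=K,\ \max_k|\hat\eta_k-\eta_k|\le\varepsilon)\to 1$ with the same rate.
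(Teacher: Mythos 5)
Your overall strategy coincides with the paper's: the paper's proof of Theorem~\ref{th:network refinement sbm} is a two-line reduction, observing that only Step 3 (the USVT step) of the proof of Theorem~\ref{th:Theoretical result of local refinement} needs re-verification and deferring the derivation to Theorem 3 of \cite{wang2021optimal}. Your plan --- run the USVT analysis against the low-rank matrix $B(t)-B(t-1)$ and treat the dropped diagonal as a bias controlled by Assumption~\ref{assumption:local refinement 1}(d) --- is exactly that reduction, fleshed out. However, two of your intermediate claims are genuinely wrong as stated.

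First, the inequality $\mathrm{rank}\bigl(\mathbb{E}A(\eta_k)-\mathbb{E}A(\eta_k-1)\bigr)\le \mathrm{rank}\bigl(B(\eta_k)-B(\eta_k-1)\bigr)+1\le 2\tilde r+1$ is false: $\mathrm{diag}\bigl(B(\eta_k)-B(\eta_k-1)\bigr)$ is a diagonal matrix whose rank can be as large as $n$, not $1$. Indeed, the entire reason Theorem~\ref{th:network refinement sbm} exists (as the paper notes just before Assumption~\ref{assumption:local refinement 1}) is that removing the diagonal destroys the low-rank structure of $\mathbb{E}A(t)$. This matters because Lemma 1 of \cite{xu2017rates}, which drives the USVT guarantee in Step 3, must be applied to a low-rank target; it cannot be applied to $\mathbb{E}\tilde Y_{s,e}^{v_k}$ directly, only to the CUSUM of the $B(t)$'s, with the diagonal carried as a separate perturbation throughout (your later paragraphs do say this, so the error is local, but your ``verification of (c)'' paragraph as written is invalid). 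Second, your control of the diagonal bias is a non sequitur: from (d) and the triangle inequality you obtain only $\|\mathrm{diag}(B(t)-B(t-1))\|_F\le \bigl(\|B(t)\|_F+\|B(t-1)\|_F\bigr)/C$, which is a small fraction of the \emph{levels} $\|B(t)\|_F$, not of the \emph{jump} $\|B(t)-B(t-1)\|_F$. Since $\|B(t)\|_F$ (of order $n$ for dense graphs) can exceed $\kappa_2$ by an arbitrary factor while $C$ is an absolute constant, the conclusion ``at most a small constant fraction of $\|B(t)-B(t-1)\|_F$, hence of the signal $\kappa_2$'' does not follow; nor is the bias absorbed by the USVT error budget, since the scaled bias $\tilde\Delta_k\|\mathrm{diag}(B(\eta_k)-B(\eta_k-1))\|_F$ can be of order $\sqrt{n\Delta/(\tau_3\log T)}$, which Assumption~\ref{assumption:local refinement 1}(c) does not force below the signal. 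What is actually needed is a bound on the diagonal of the \emph{difference} (equivalently, of the population CUSUM matrix) relative to the signal, which is precisely the content of the omitted Wang-et-al.-style derivation the paper points to; your sketch currently substitutes a bound that cannot do that job.
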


\section{A clustering-based data-driven approach for $\tau$}
	\label{subsec:parac}
The selection of threshold ($\tau$ in our algorithm) is crucial in change-point detection algorithms, including the MOSUM-based method, binary segmentation-based method, and the interval distillation. The popular sSIC criterion for choosing threshold, advocated by \cite{fryzlewicz2014wild,baranowski2019narrowest}, is designed for independent data and bounded $K$, and often underestimates the threshold when correlation exists. 
	
Notice that the range of $\tau$ in condition (\RNum{2}) of Theorem~\ref{th:K} involves the key quantities $\Xi$ and $\kappa$, which are difficult to estimate especially when $\{A_{t}\}$ are heterogeneous. In this section, we are devoted to developing a change-point threshold selection method that is valid for Markov chain Bernoulli network without the need of calculating $\Xi$ and $\kappa$. 
For this purpose, define $\tau_{ref}=\max_{j=1,\cdots,T-h} \max_{j<t<j+h}||\tilde{A}_{j,j+h}^t||_{op} e_T \text{ where } h=\lfloor 3\log(T)\rfloor$ and $e_T$  is a divergent sequence that grows slowly (for example, we implement with $e_T=\log\log(T)/2$). The magnitude of order of $\tau_{ref}$ will asymptotically in the range of condition (\RNum{2}), exceeding its lower bound by a factor of $e_T$, which guaruantee the good performance of Algorithm~\ref{alg:1} when there are no change-points.  For illustrations, note that $h$ is small such that when $(j,j+h]$ contains no change-points, $\max_j \max_{j<t<j+h}||\tilde{A}_{j,j+h}^t||_{op}$ is an approximation of the left-hand side of condition (\RNum{2}).  On the contrary, $ \max_{j<t<j+h}||\tilde{A}_{j,j+h}^t||_{op} e_T$ cannot exceed the right-hand side of condition (\RNum{2}) even if it contains change-points because $h$ is small. In fact, one can use $\tau_{ref}$ as a rule of thumb threshold.  
	
	For refinements, we propose a clustering-based data-driven threshold utilizing $\tau_{ref}$. We first set the interval $[0.1\tau_{ref},  10\tau_{ref}]$ and employ a clustering-based machine learning approach to search a candidate threshold inside the interval. To see the equivelance between choosing a threshold and clustering, notice that heuristically if $\max_{s_m<t<e_m}||\tilde{A}_{s_m,e_m}^t||_{op}$ is large, it is highly probable that $(s_m,e_m)$ contains change-points (one can refer to the blue points in Figure~\ref{parafig:1}). Conversely, a small $\max_{s_m<t<e_m}||\tilde{A}_{s_m,e_m}^t||_{op}$ indicates that with a high probability  $(s_m,e_m)$ contains no change-point (one can refer to the red points in Figure~\ref{parafig:1}). Figure \ref{parafig} illustrates this phenomenon graphically in more detail. We assume the following assumption for Theorem \ref{the:gmm} which proves the above observation. 
	
	\begin{assumption}
		\label{ass:gmm} Recall that $M$ has a theoretical range (\ref{eqn:M1}).
		Assume that $\sqrt{\log(T)}/(1-\sqrt{\Xi})+\sqrt{n}/\sqrt{\log^*(1/\Xi)}=o(
		\kappa \sqrt{T/(KM)})$. 
	\end{assumption}
	
	Assumption~\ref{ass:gmm} is a technical condition, which is a strengthened version of condition (\RNum{1}). When $\Delta=\Omega(T)$, this assumption reduces to condition (\RNum{2}) because $M$ is allowed to diverge at an arbitrarily slow rate in this case. Note that  \cite{fryzlewicz2014wild}  proposed the sSIC and proved its validity when $K$ is bounded and the data is an independent sequence. In contrast, our method is applicable to scenarios where $K$ diverges. For example, when $K=T^\epsilon$ with a small positive $\epsilon<1/3$ and the change-points are approximately equally spaced, via \eqref{eqn:M1} we can choose $M=\Omega(T^{2\epsilon}\log^2 T)$,
	then the  term  ${\sqrt {KM/T}}$ is $T^{1.5\epsilon-0.5}\log T\rightarrow 0$. 
	Note that a typical magnitude of $\kappa$ in low-rank network is of the order $n$.
	
	\begin{theorem}
		\label{the:gmm} 
  Let $\mathfrak{I}=C_3(\sqrt{\log(T)}/(1-\sqrt{\Xi})+\sqrt{n}/\sqrt{\log^*(1/\Xi)})$. 
		The event 
		\begin{equation}
			\mathcal{A}:=\left\{\forall s<t<e,  \left|
			||\tilde A_{s,e}^t||_{op}-||\bb{E}\tilde A_{s,e}^t||_{op} 
			\right|\le \mathfrak{I} \right\}
		\end{equation} 
	 satisfies $\bb P(\mathcal{A})\ge 1-T^{-1}$ for some sufficiently large absolute constant $C_3>0$. Let $b$ be an arbitrary real number such that $b=o(\kappa\sqrt{T}/\sqrt{KM}), \mathfrak{I}=o(b)$. Under Assumption~\ref{ass:gmm}, we have with a probability tending to $1$, 
		\begin{enumerate}
			\item 
			$\bb E\left(\max_{s_m<t<e_m}||\tilde{A}_{s_m,e_m}^t||_{op}
			\bigg|\{|(s_m,e_m]\cap \{\eta_1\cdots,\eta_K\}|=0\},\mathcal{A}\right)<\mathfrak{I} $.
			\item 
			$\bb E\left(\max_{s_m<t<e_m}||\tilde{A}_{s_m,e_m}^t||_{op}
			\bigg|\{|(s_m,e_m]\cap \{\eta_1\cdots,\eta_K\}|>0\},\mathcal{A}\right)>b/2$.
			\item
			$\bb P\left(\bigcup_{m=1}^M\left\{\mathfrak{I} 
			<\max_{s_m<t<e_m}||\tilde{A}_{s_m,e_m}^t||_{op}< b/2\right\}\bigg|\mathcal{A}\right)\to 0.$
		\end{enumerate}
	\end{theorem}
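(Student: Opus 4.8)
The plan is to reduce all three claims to two essentially independent ingredients: the data-side concentration event $\mathcal A$, which lets me replace each random CUSUM norm by its population counterpart up to an additive $\mathfrak I$, and a purely combinatorial estimate on the \emph{randomly sampled} intervals. Throughout I would write $P_m:=\max_{s_m<t<e_m}\|\mathbb E\tilde A_{s_m,e_m}^t\|_{op}$ for the population signal of the $m$-th interval. Crucially, the intervals in Line~2 of Algorithm~\ref{alg:1} are drawn independently of $\{A(t)\}$, so $(P_m,N_m)$, where $N_m:=|(s_m,e_m]\cap\{\eta_1,\dots,\eta_K\}|$, is independent of the data and hence of $\mathcal A$. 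The first step is to establish $\mathbb P(\mathcal A)\ge1-T^{-1}$: since $\bigl|\,\|\tilde A_{s,e}^t\|_{op}-\|\mathbb E\tilde A_{s,e}^t\|_{op}\,\bigr|\le\|\tilde A_{s,e}^t-\mathbb E\tilde A_{s,e}^t\|_{op}$ and the centered matrix has entries that are independent across $(i,j)$ and are fixed linear combinations of the chains $\{A_{ij}(r)\}_r$, I would bound its expected operator norm by Corollary~3.2 of \citep{bandeira2016sharp} (the mixing inflates the entrywise variance proxy, producing the $\sqrt n/\sqrt{\log^*(1/\Xi)}$ term) and control the deviation from that expectation with the Markov-chain concentration inequality of \citep{paulin2015concentration} (producing the $\sqrt{\log T}/(1-\sqrt\Xi)$ term), then union bound over the $O(T^3)$ triples $(s,t,e)$ with $C_3$ large. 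This is the same machinery already used for Theorem~\ref{th:K}.

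On $\mathcal A$ the maximal inequality gives $\bigl|\max_t\|\tilde A_{s_m,e_m}^t\|_{op}-P_m\bigr|\le\mathfrak I$. If $(s_m,e_m]$ contains no change-point then $\mathbb E\tilde A_{s_m,e_m}^t\equiv0$ (the two tent terms in \eqref{newdef:cusum} cancel under a constant mean), so $P_m=0$ and the signal is $\le\mathfrak I$; the inequality is strict in expectation because the expected fluctuation is governed by the Bandeira--van Handel bound, which sits strictly below the threshold $\mathfrak I$ defining $\mathcal A$ (the latter carries an extra tail allowance through the large constant $C_3$). This gives claim~1. For an interval containing a change-point $\eta$, evaluating \eqref{newdef:cusum} at $t=\eta-1$ yields $\|\mathbb E\tilde A_{s_m,e_m}^{\eta-1}\|_{op}=\|\mathbb E A(\eta)-\mathbb E A(\eta-1)\|_{op}\sqrt{(\eta-1-s_m)(e_m-\eta+1)/(e_m-s_m)}$, whence $P_m\ge\kappa\sqrt{\min(\eta-s_m,e_m-\eta)/2}$ up to rounding.

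The heart of the argument is the estimate $\mathbb P(0<P_m<b)\lesssim K(b/\kappa)^2/T$. Indeed, by the previous display $0<P_m<b$ forces some change-point to lie within $O((b/\kappa)^2)$ of an endpoint of $(s_m,e_m]$; as each endpoint is uniform on $\{1,\dots,T\}$, the chance a fixed change-point is clipped this way is $O((b/\kappa)^2/T)$, and summing over the $K$ change-points and two endpoints gives the bound (intervals carrying two or more change-points need two independent near-boundary coincidences and are negligible via the minimal spacing $\Delta$). A union bound over the $M$ independent intervals then gives, on $\mathcal A$,
\[
\mathbb P\Big(\bigcup_{m=1}^M\{\mathfrak I<\max_t\|\tilde A_{s_m,e_m}^t\|_{op}<b/2\}\,\Big|\,\mathcal A\Big)\le\mathbb P\Big(\bigcup_m\{0<P_m<b\}\Big)\lesssim\frac{MKb^2}{\kappa^2T}\to0,
\]
the last step being exactly $b=o(\kappa\sqrt{T/(KM)})$ from Assumption~\ref{ass:gmm}; here I used that on $\mathcal A$ the event in claim~3 forces $P_m>0$ and $P_m<b/2+\mathfrak I<b$, proving~3. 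For claim~2, the same estimate together with $\mathbb P(N_m>0)\gtrsim\Delta/T$ (the first change-point $\eta_1$, at distance $\ge\Delta$ from both ends, is captured with at least this probability) gives $\mathbb P(P_m<b\mid N_m>0)\lesssim Kb^2/(\kappa^2\Delta)=o(1)$, where I use $M\gtrsim T^2/\Delta^2$ from \eqref{eqn:M1} to turn $b=o(\kappa\sqrt{T/(KM)})$ into $b=o(\kappa\sqrt{\Delta/K})$. Hence $\mathbb E[P_m\mid N_m>0]\ge b\,\mathbb P(P_m\ge b\mid N_m>0)\ge b(1-o(1))$; since $(P_m,N_m)\perp\mathcal A$ and on $\mathcal A$ the signal is $\ge P_m-\mathfrak I$ with $\mathfrak I=o(b)$, the conditional expectation exceeds $b/2$, giving~2.

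The main obstacle is the uniform concentration $\mathbb P(\mathcal A)\ge1-T^{-1}$: controlling the operator norm of the centered CUSUM \emph{simultaneously} over all $O(T^3)$ triples under heterogeneous Markov dependence requires carefully pairing the dimension-sharp expectation bound of \citep{bandeira2016sharp} with the Markov-chain tail bound of \citep{paulin2015concentration}, and it is precisely here that both $\Xi$-dependent terms in $\mathfrak I$ originate. The remaining, more routine, difficulty is the bookkeeping for intervals that clip more than one change-point; this is where the minimal spacing $\Delta$ and the lower bound on $\mathbb P(N_m>0)$ enter, and the cleanest route is to isolate the single-change-point clipping probability as the dominant term and absorb the rest.
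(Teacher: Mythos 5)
Your proposal is correct and follows essentially the same route as the paper's proof: the concentration event $\mathcal{A}$ is established exactly as in Theorem~\ref{th:K} (Bandeira--van Handel plus Markov-chain concentration and a union bound), claims (1)--(3) are reduced to the population signal $P_m=\max_t\|\mathbb{E}\tilde A_{s_m,e_m}^t\|_{op}$ using independence of the sampled intervals from the data, and your key estimate $\mathbb{P}(0<P_m<b)\lesssim Kb^2/(\kappa^2T)$ is precisely what the paper obtains through the events $\mathcal{A}_2^{\gamma^2/T}$ and $\mathcal{A}_3^{\gamma}$ with $\gamma=8b\sqrt{T}/\kappa$ and Lemma~\ref{lem:lem8}, followed by the same union bound over $M$ intervals and the same use of $M\gtrsim T^2/\Delta^2$ for claim (2). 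The only imprecision is your parenthetical dismissing intervals with two change-points as negligible: the correct proximity window there is $O(b\sqrt{T}/\kappa)$ per endpoint rather than $O((b/\kappa)^2)$, so these intervals contribute at the same order $Kb^2/(\kappa^2 T)$ (they are in fact the source of the factor $K$ in the paper's accounting), but since your stated total already carries that factor, the final bound and conclusion stand.
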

	
	Theorem~\ref{the:gmm}(1)(2) suggest that $\max_{s_m<t<e_m}||\tilde{A}_{s_m,e_m}^t||_{op}$ can be grouped into two categories based on whether $(s_m,e_m]$ contain change-points, with the categories exhibiting well-separated conditional  expectations. Theorem~\ref{the:gmm}(3) shows that with a  probability tending to $1$, all the $\max_{s_m<t<e_m}||\tilde{A}_{s_m,e_m}^t||_{op}$ fall into $[0,\mathfrak{I} ]\cup [b/2,\infty)$. Therefore, there is a ``gap'' $(\mathfrak{I} ,b/2)$, which is presented graphically in Figure~\ref{parafig:1}. Moreover,  all values in the interval ($\mathfrak{I},b/2$) are qualified thresholds (note that we can take a large enough $C_3$). Therefore, Theorem~\ref{the:gmm} indicates that we could find  a decision boundary for clustering $\max_{s_m<t<e_m}||\tilde{A}_{s_m,e_m}^t||_{op}$ according to whether the random intervals contain change-points as a threshold.
	
	To extract a valid threshold,  we employ a clustering-based approach \citep{rodriguez2014clustering}, which recognizes the cluster centers by finding density peaks, and is flexible and suitable for  more general settings which allow heteroscedasticity and non-gaussianity.  If one of the clustering boundaries falls into the candidate range $[0.1\tau_{ref},  10\tau_{ref}]$, we set it as the threshold, otherwise we use $\tau_{ref}$ as the threshold. From our numerical studies, the clustering-based method is effective for a wide range of practical data including our Markov chain Bernoulli networks. The approach consists of  three steps.
	\begin{enumerate}[label=(\alph*)]
		\item For each point $\max_{s_m<t<e_m}||\tilde{A}_{s_m,e_m}^t||_{op}$, calculate its density $\rho_m$ through the kernel density estimation \[
		\rho_m=\sum_{i=1}^M K\left(({\max_{s_m<t<e_m}||\tilde{A}_{s_m,e_m}^t||_{op}
			-\max_{s_i<t<e_i}||\tilde{A}_{s_i,e_i}^t||_{op}})/{h}\right)/(Mh)
		\] 
		where $K(\cdot)$ is the kernel function and $h$ is a bandwidth. 
		
		\item Define \begin{equation}
			\label{paraeqn:deltam}
			\delta_m:=\min_{j:\rho_j>\rho_m} |
			\max_{s_j<t<e_j}||\tilde{A}_{s_j,e_j}^t||_{op}-
			\max_{s_m<t<e_m}||\tilde{A}_{s_m,e_m}^t||_{op}|.
		\end{equation}
		If $\{j:\rho_j>\rho_m\}=\varnothing$, define $\delta_m=\max_j |
		\max_{s_j<t<e_j}||\tilde{A}_{s_j,e_j}^t||_{op}-
		\max_{s_m<t<e_m}||\tilde{A}_{s_m,e_m}^t||_{op}|$. Construct the decision graph (see \cite{rodriguez2014clustering}) $\{(\rho_m,\delta_m)\}_{m=1}^M$.  Points with high $\delta_m$ and high $\rho_m$ are regarded as cluster centers. 
		\item Assigning the rest points to the same cluster as its nearest neighbor of higher density. For detailed allocation, we refer to \cite{rodriguez2014clustering}. 
	\end{enumerate}
	
	We display a visualization in Figure \ref{parafig}. The simulation studies demonstrate that our method for choosing thresholds works reasonably well.

 \begin{remark}
     Our results extend naturally to weighted networks with low-rank structures. For instance, the weighted RDPG \citep{larroca2021change} includes weights generated from an additional distribution, such as the Poisson distribution. For  weighted networks with bounded weights, we can derive analogous properties with minor adjustments following the proofs of Theorem~\ref{th:K} and Theorem~\ref{th:Theoretical result of local refinement}. Specifically, we can modify the definition of $\Xi$ to be the general total variation distance (see Section 4.1 in \cite{levin2017markov}) for a Markov chain. Similarly, the selection of threshold $\tau$ can be adapted by using weighted adjacency matrices instead of binary ones. We leave the specific details to future work.
 \end{remark}
 
	\begin{figure}[htbp]
		\centering 
		\tiny
		\caption{An entire threshold selection process. The original points are sampled from the Scenario 3 described in Section~\ref{sec:sim} with $(n,\kappa,T,K)=(150,5.30,160,3)$.}\subfigure[Use random intervals to calculate $\max_{s_m<t<e_m}||\tilde{A}_{s_m,e_m}^t||_{op}$ and the plot shows that the vertical coordinates of these points tend to cluster into two groups.]{
			\label{parafig:1}
			\includegraphics[width=4.5cm,height=4.5cm]{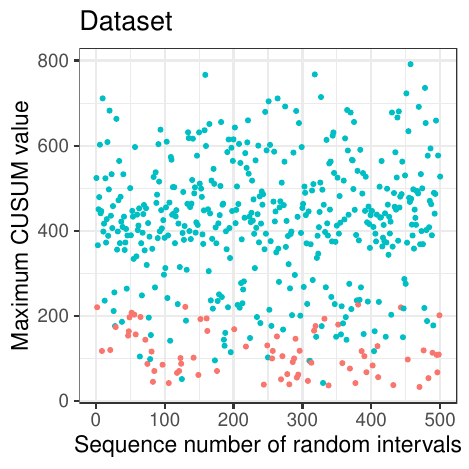}
		} 
  \hspace{0.2cm}
		\subfigure[Density of each point, evaluated by the gaussian kernel.]{
			\label{parafig:2}
			\includegraphics[width=4.5cm,height=4.5cm]{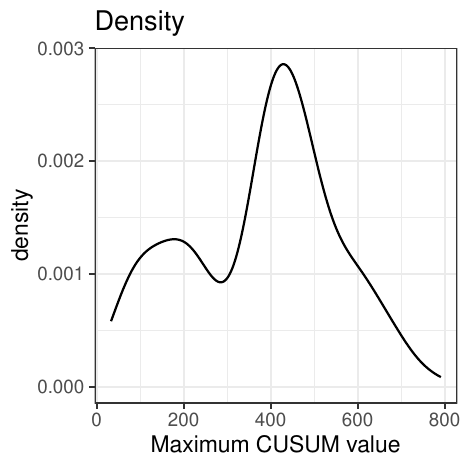}
		} 
		 \hspace{0.2cm}
		\subfigure[Calculate $\delta_m$ according to (\ref*{paraeqn:deltam}) and draw the decision graph described in \cite{rodriguez2014clustering}.]{
			\label{parafig:3}
			\includegraphics[width=4.5cm,height=4.5cm]{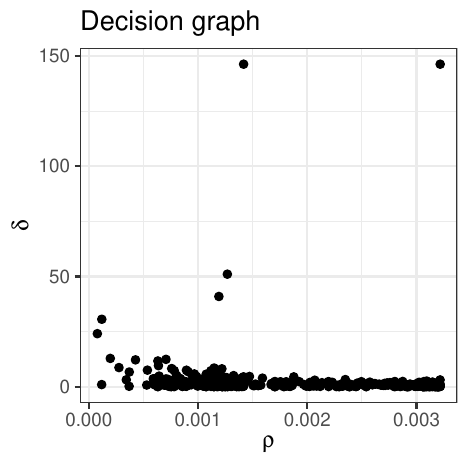}
		} 
		\subfigure[Selecting the number of clusters as 2. The boundary are shown in solid line.]{
			\label{parafig:4}
			\includegraphics[width=4.5cm,height=4.5cm]{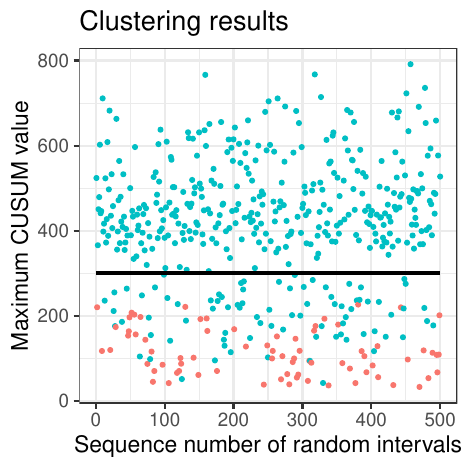}
		} 
		 \hspace{0.2cm}
		\subfigure[The simulated points $\max_{j<t<j+h}||\tilde{A}_{j,j+h}^t||_{op}$ are drawn and the rule-of-thumb threshold $\tau_{ref}$ is given.]{
			\label{parafig:5}
			\includegraphics[width=4.5cm,height=4.5cm]{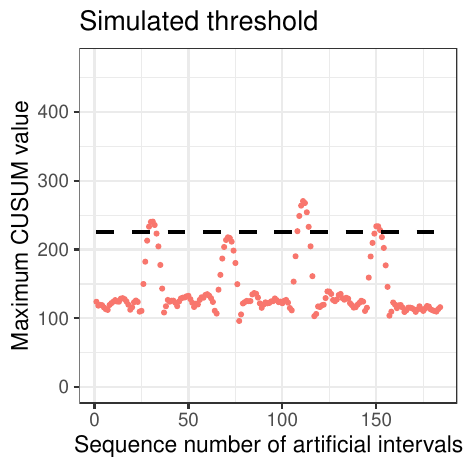}
		} 
 \hspace{0.2cm}
		\subfigure[The final threshold is presented with solid lines. The black dashed line represents $\tau_{ref}$. 
		]{
			\label{parafig:6}
			\includegraphics[width=4.5cm,height=4.5cm]{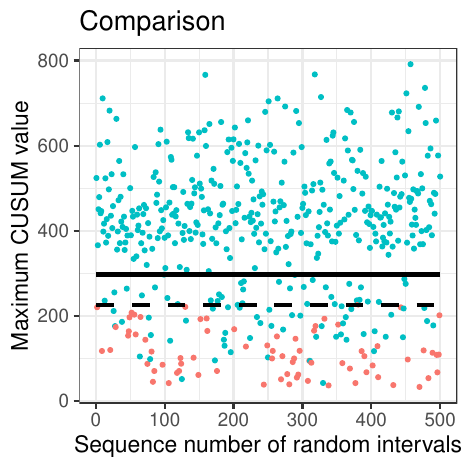}
		} 
			\subfigure[Comparison 1: If we replace the operator norm with the Frobenius norm, the density does not evidently exhibit a bimodal structure.
		]{
			\label{parafig:7}
			\includegraphics[width=5cm,height=5cm]{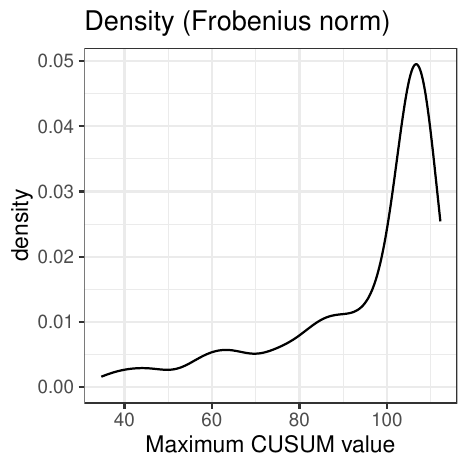}
		} 
		  \hspace{4cm}
			\subfigure[Comparison 2: When using the seeded intervals (\cite{kovacs2020seeded}), it remains an open question how to properly choose the threshold under complex dependence. 
		]{
			\label{parafig:8}
			\includegraphics[width=5cm,height=5cm]{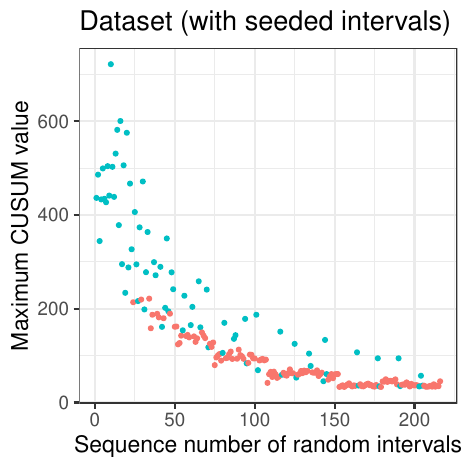}
		} 
		
		\label{parafig}
	\end{figure}
 
	\section{Simulation studies}
	\label{sec:sim}
	
	In this section, we examine the effectiveness of our method through  examples of detecting changes in block models and latent space models.  To save the notation, we let $\eta_0=1, \eta_{K+1}=T+1$.  We compare our RID (Algorithm~\ref{alg:1} and Algorithm~\ref{alg:3}) with NBS (\cite{wang2021optimal}) and NonPar-RDPG-CPD (\cite{padilla2019change}). For these methods, we use the R functions provided by the respective authors with their default parameters. It is important to note that these comparisons may not be entirely fair, as their thresholds are not designed for dependent networks or networks that extend beyond the RDPG model.  Moreover, we also compare with RID-naive where we take $\tilde \eta_k=\arg\max_{t:s_k<t<e_k}||\tilde A_{s_k,e_k}^t||_{op}$ for each $[s_k,e_k]$ in the set $S^*$ produced by Algorithm~\ref{alg:1} as the estimators, and RID-Frobenius where we replace the operator norm with the Frobenius norm in Algorithm~\ref{alg:1} and in Section~\ref{subsec:parac}.  We adopt several  measurements for evaluation,  which are commonly used in related literature.
	\begin{itemize}
		\item A list of $\hat K-K$, where $\hat K$ and $K$ are numbers of estimated and true change-points.
		\item $H(\hat\eta,\eta)/T$, the normalized Hausdorff distance defined as
		\begin{equation}
			\label{def:normalized Hausdorff distance}
			\frac{H(\hat\eta,\eta)}{T}:=\frac{1}{T}\max\left\{\max_{x\in \eta}\min_{y\in \hat\eta} |x-y|,\max_{y\in \hat\eta}\min_{x\in \eta} |x-y|\right\}.\
		\end{equation} A smaller $H(\hat  \eta, \eta)/T$ indicates a better performance.
		\item The Hausdorff distance, constrained to samples where \(\hat{K} = K\).  We denote it as $H^*(\hat\eta,\eta)$. 
		\item Averaged Rand Index \cite{rand1971objective,hubert1985comparing}, which is denoted by ARI, and ranges between 0 and 1. A higher value corresponds to a more accurate estimation. 
		\item Average time cost measured in seconds.
	\end{itemize}

For each scenario, we perform 100 repetitions. All runtime results are recorded on an Apple M1 machine with 16GB of RAM, running macOS Sonoma.

		\textbf{(Scenario 1.)} We set $n=50, K=2, (M,\Delta)=(500,50),(800,75), \eta_1=\Delta, \eta_2=3\Delta, T=4\Delta$. The connection probability for the edge $E_{ij}$ at time $t$ is given by $\Theta_{ij}(t)=0.6\sin( {f_n(i,t)} +  {f_n(j,t)})$ where \begin{align*}f_n(u,t)=\sin(u/n)(I(t<\eta_1)+I(t\ge \eta_2))+[\sin((21-u)/n)I(u\le 20)\\+\sin(u/n)I(u>20)]I(\eta_1\le t<\eta_2), u=1,2,\cdots,n.\end{align*} For each $1\le i,j\le n$, the time varying Markov transition kernels  are 
		\[
		P_{ij}(t)= \left(\begin{array}{cc}
			1-m(t)\Theta_{ij}(t) & m(t)\Theta_{ij}(t)  \\
			m(t)-m(t)\Theta_{ij}(t) & 1-m(t)+m(t)\Theta_{ij}(t)
		\end{array}\right)
		\]
		where $m(t)\sim Uniform(0.3,0.6)$ independently and identically for $t=1,\cdots,T-1$. 
		It is easy to show that within each segment, $\Theta_{ij}(t+1)=\Theta_{ij}(t)(1-m(t)+m(t)\Theta_{ij}(t))+m(t)\Theta_{ij}(t)(1-\Theta_{ij}(t))=\Theta_{ij}(t)$. By drawing random values, the sequence remains heterogeneous even in intervals without change-points. We use the data-driven $\tau$ in Section~\ref{subsec:parac}  for our method, and choose $\tau_2=0.25(\sqrt{n}+\sqrt{\log(T)}), \tau_3=3/\log(T)$.

	Table~\ref{sim:Results for Scenario 0}  reveals several insights. In the model where $\Theta(t)$ follows a  Markov process with time-dependent transition matrices, our RID performs well. Moreover, the SUSVT refinement is effective in the sense that $H^*$ for RID  outperforms that of RID-naive. Notably, there are cases where $H^*(\hat\eta,\eta)=0$, indicating that the localization results are perfectly accurate provided that the number of change-points $K$ is  estimated correctly. Both NBS and NonPar-RDPG-CPD perform unsatisfactorily.  This makes sense since NBS is proposed for independent networks and NonPar-RDPG-CPD is tailored to a specific kind of stationary networks.  Furthermore, RID-Frobenius is also not ideal, aligning with the intuition discussed in Remark~\ref{rem:necessity of the operator norm}, and supporting the necessity of employing the operator norm. Finally, our code runs significantly faster than the NBS and NonPar-RDPG-CPD methods, despite the similar theoretical complexities.
	
	 \begin{table}[htbp]
		\centering
		\tiny
		\caption{Results for heterogeneous networks with unequally spaced change-points in Scenario 1.}
		\label{sim:Results for Scenario 0}
		\begin{tabular}{@{}lcrrrrrrrccc@{}}
			\hline
			&& \multicolumn{6}{c}{$\hat{K}-K$} &&\\
			\cline{3-8}
			$(M,\Delta,T)$ &Method &
			\multicolumn{1}{c}{-2} &
			\multicolumn{1}{c}{-1} &
			\multicolumn{1}{c}{0} &
			\multicolumn{1}{c}{1} &
			\multicolumn{1}{c}{2} &
			\multicolumn{1}{c}{$\ge 3$} &
			$H^*(\hat{\eta},\eta)\cdot 10^2/T$
			& ARI &
			$H(\hat{\eta},\eta)\cdot 10^2/T$ & Time (in seconds) \\
			\hline
			
			(500,50,200) 
			& RID & 6 & 6 & 87 & 0 & 0 & 1 & 0.029 & 0.937 & 6.285 & 8.470 \\ 
			& RID-naive   & 6 & 6 & 87 & 0 & 0 & 1 & 0.034 & 0.936 & 6.320 & 8.332 \\ 
			
			& RID-Frobenius  & 31 & 0 & 7 & 10 & 4 & 48 & 0.929 & 0.661 & 36.220 & 2.248 \\
			
			& NBS   & 0 & 0 & 0 & 0 & 0 & 100 && 0.695& 23.000  &26.418  \\ 
			& NonPar-RDPG-CPD  & 57 & 28 & 14 & 1 & 0 & 0 & 21.250 & 0.504 & 57.070 & 102.874\\ 
			\hline
			
			(800,75,300) 
			& RID &   2 & 2 & 96 & 0 & 0 & 0 & 0.000 & 0.980 & 2.050 & 18.502 \\
			& RID-naive &   2 & 2 & 96 & 0 & 0 & 0 & 0.017 & 0.98 & 2.067 & 18.275 \\  
			& RID-Frobenius  & 52 & 0 & 4 & 4 & 6 & 34 & 1.583 & 0.569 & 48.367 & 5.490 \\
			
			& NBS   & 0 & 0 & 0 & 0 & 0 & 100 && 0.675& 23.520   & 117.680 \\ 
			& NonPar-RDPG-CPD  & 90 & 10 & 0 & 0 & 0 & 0   & & 0.400&71.256 & 302.634\\ 
			\hline
		\end{tabular}
	\end{table}

	\textbf{(Scenario 2.)} 
	 We now investigatie heterogeneous networks with increasing number of  change-points.  
	 We set $(n,\kappa)=(50,4.66),(150,5.30), (T,\Delta,K)=(160,40,3), (250,50,4)$ and $(360,50,5)$, and change-points are equally spaced.  The connection probabilities  (denoted as $\Theta$) of Markov chain Bernoulli networks are
	\[\Theta(t)=\begin{cases}
		\rho Z_n
		Q_1
  Z_n^\top \quad \eta_{2i}\le t<\eta_{2i+1}, i=0,\cdots,\lfloor K/2\rfloor,\\
		\rho Z_n
		Q_2Z_n^\top\quad \eta_{2i-1}\le t<\eta_{2i}, i=1,\cdots,\lceil K/2\rceil,\\
	\end{cases}\]
	where 
 \begin{align*}
     Q_1=\left(\begin{array}{ccc}
			0.4 & 1 & 0.4 \\
			1 & 0.4 & 0.4 \\
			0.4 & 0.4 & 0.4 \end{array}\right),
   Q_2=\left(\begin{array}{ccc}
			0.4 & 0.4 & 1 \\
			0.4 & 0.4 & 0.4 \\
			1 & 0.4 & 0.4 
		\end{array}\right),
  Z_n=\begin{pmatrix}
  \mathbbm{1}_{\lfloor n/3\rfloor} &0&0\\
    0&\mathbbm{1}_{\lfloor n/3\rfloor}&0\\
    0&0&\mathbbm{1}_{n-2\lfloor n/3\rfloor} 
		\end{pmatrix},
 \end{align*}
 $\rho=I(n=50)/3+I(n=150)/8$, $\mathbbm{1}_j=(1,1,\cdots,1)^\top$.  
	That is, there are three communities in the network and each community contains $\lfloor n/3\rfloor,\lfloor n/3\rfloor,n-2\lfloor n/3\rfloor$ nodes respectively. For each $1\le i,j\le n$, the time varying Markov transition kernels  are 
	\[
	P_{ij}(t)= \left(\begin{array}{cc}
		1-m(t)\Theta_{ij}(t) & m(t)\Theta_{ij}(t)  \\
		m(t)-m(t)\Theta_{ij}(t) & 1-m(t)+m(t)\Theta_{ij}(t)
	\end{array}\right)
	\]
 where $m(t)=0.1+0.8t(T-t)/T^2$.  
	We use the data-driven $\tau$ in Section~\ref{subsec:parac}  for our method, and choose $M=500, \tau_2=0.6(\sqrt{n}+\sqrt{\log(T)}), \tau_3=3/\log(T)$.   Similar to Scenario 1,  Table~\ref{sim:Results for Scenario 1} shows that RID accurately detects change-points in the presence of complex dependence, outperforming both NBS and NonPar-RDPG-CPD.

\begin{table}[htbp]
		\centering
		\tiny
		\caption{Results for heterogeneous networks in Scenario 2.}
		\label{sim:Results for Scenario 1}
		\begin{tabular}{@{}lcrrrrrrrccc@{}}
			\hline
			&& \multicolumn{6}{c}{$\hat{K}-K$} &&\\
			\cline{3-8}
			$(n,\kappa,T,K)$ &Method &
			\multicolumn{1}{c}{$\le-2$} &
			\multicolumn{1}{c}{-1} &
			\multicolumn{1}{c}{0} &
			\multicolumn{1}{c}{1} &
			\multicolumn{1}{c}{2} &
			\multicolumn{1}{c}{$\ge 3$} &
			$H^*(\hat{\eta},\eta)\cdot 10^2/T$
			  & ARI &
			$H(\hat{\eta},\eta)\cdot 10^2/T$ & Time (in seconds) \\
			\hline
			
			(50,4.66,160,3) 
			& RID &  0 & 1 & 99 & 0 & 0 & 0 & 0.000 & 0.998 & 0.137 & 6.427 \\  
			& RID-naive &   0 & 1 & 99 & 0 & 0 & 0 & 0.044 & 0.998 & 0.200 & 6.323 \\     
			& RID-Frobenius & 14 & 27 & 5 & 3 & 4 & 47 & 2.750 & 0.836 & 22.775 & 1.784\\
			& NBS  & 0 & 0 & 0 & 0 & 0 & 100  &  &0.870& 14.350& 18.319\\ 
			& NonPar-RDPG-CPD  & 65 & 24 & 9 & 2 & 0 & 0    & 0.583 & 15.902 & 22.533 & 65.480 \\ 
			\hline
			
			(50,4.66,250,4) 
			&RID &  0 & 0 & 100 & 0 & 0 & 0 & 0.000 & 1.000 & 0.000 & 9.747 \\ 
			& RID-naive  & 0 & 0 & 100 & 0 & 0 & 0 & 0.056 & 1.000 & 0.056 & 9.581 \\  
			  & RID-Frobenius & 21 & 1 & 2 & 5 & 8 & 63 & 5.004 & 0.868 & 13.624 & 2.950 \\
			 & NBS   & 0 & 0 & 0 & 0 & 0 & 100 && 0.870& 12.928 & 57.598 \\ 
			& NonPar-RDPG-CPD  & 100 & 0 & 0 & 0 & 0 & 0   & & 0.257 & 74.592 & 154.656\\ 
			\hline
			
			(50,4.66,360,5)  
			  &  RID &  0 & 0 & 100 & 0 & 0 & 0 & 0.097 & 0.999 & 0.097 & 13.720 \\  
			& RID-naive &  0 & 0 & 100 & 0 & 0 & 0 & 0.156 & 0.999 & 0.156 & 13.467 \\ 
			  & RID-Frobenius  & 21& 2 & 0 & 1 & 7 & 69 &   & 0.874 & 14.106 & 3.891 \\ 
			& NBS &   0 & 0 & 0 & 0 & 0 & 100 &   & 0.881 &12.605& 120.043 \\  
				& NonPar-RDPG-CPD  & 100 & 0 & 0 & 0 & 0 & 0   && 0.187& 81.344 & 257.272\\ 
			\hline
			
			(150,5.30,160,3) 
			& RID&   0 & 0 & 100 & 0 & 0 & 0 & 0.000 & 1.000 & 0.000 & 26.410 \\  
			 & RID-naive &   0 & 0 & 100 & 0 & 0 & 0 & 0.000 & 1.000 & 0.000 &25.686\\   
			 & RID-Frobenius & 4 & 0 & 5 & 6 & 18 & 67 & 7.500 & 0.860 & 16.200 & 13.828\\
			& NBS &   0 & 0 & 0 & 0 & 0 & 100 & &0.870  & 16.425 & 162.869  \\  
				& NonPar-RDPG-CPD  & 81 & 16 & 3 & 0 & 0 & 0 &15.833&   0.507& 51.281 & 143.522\\ 
			\hline
			
			(150,5.30,250,4) 
			 & RID   & 0 & 0 & 100 & 0 & 0 & 0 & 0.000 & 1.000 & 0.000 & 40.970 \\  
			 & RID-naive &   0 & 0 & 100 & 0 & 0 & 0 & 0.044 & 1.000 & 0.044 & 39.722 \\ 
			& RID-Frobenius& 2  & 2 & 2 & 4 & 9 & 81 & 14.000 & 0.875 & 14.188 & 20.432 \\
			& NBS &   0 & 0 & 0 & 0 & 0 & 100 &&0.869&13.248 & 460.461 \\  
			&NonPar-RDPG-CPD  & 100 & 0 & 0 & 0 & 0 & 0 &&    0.250& 75.544  & 385.760\\ 
			\hline
			
			(150,5.30,360,5) 
			  & RID & 0 & 2 & 97 & 1 & 0 & 0 & 0.052 & 0.998 & 0.461 & 59.064 \\  
			  & RID-naive   & 0 & 2 & 97 & 1 & 0 & 0 & 0.132 & 0.997 & 0.456 & 57.107 \\  
			  & RID-Frobenius& 1 & 2 & 1 & 2 & 11 & 83 & 16.667 & 0.907 & 11.303 & 29.970 \\ 
			& NBS &   0 & 0 & 0 & 0 & 0 & 100 &&0.881& 11.594& 1419.869 \\  
			&NonPar-RDPG-CPD  & 100 & 0 & 0 & 0 & 0 & 0 && 0.164& 83.333 & 905.951\\ 
			\hline
		\end{tabular}
	\end{table}
	
\begin{table}[tbp]
		\centering
		\tiny
		\caption{Results for homogeneous networks in Scenario 3.}
		\label{sim:Results for Scenario 2}
		\begin{tabular}{@{}lcrrrrrrrcc@{}}
			\hline
			&& \multicolumn{6}{c}{$\hat{K}-K$} &&\\
			\cline{3-8}
			$(n,\kappa,T,K)$ &Method &
			\multicolumn{1}{c}{$\le-2$} &
			\multicolumn{1}{c}{-1} &
			\multicolumn{1}{c}{0} &
			\multicolumn{1}{c}{1} &
			\multicolumn{1}{c}{2} &
			\multicolumn{1}{c}{$\ge 3$} &
			$H^*(\hat{\eta},\eta)\cdot 10^2/T$
			& ARI &
			$H(\hat{\eta},\eta)\cdot 10^2/T$  \\
			\hline
			
			(50,4.66,160,3) 
			& RID   & 0 & 0 & 99 & 1 & 0 & 0 & 0.000 & 0.999 & 0.119   \\  
			& RID-naive &   0 & 0 & 99 & 1 & 0 & 0 & 0.000 & 0.999 & 0.112   \\  
			& RID-Frobenius & 18 & 5 & 14 & 3 & 8 & 52 & 1.741 & 0.797 & 21.831   \\  
			& NBS &   0 & 0 & 0 & 0 & 0 & 100 &   & 0.874 & 13.863  \\  
			& NonPar-RDPG-CPD & 65 & 24 & 8 & 3 & 0 & 0 & 17.109 & 0.592 & 41.663   \\
			\hline
			
			(50,4.66,250,4) 
			& RID &  0 & 0 & 100 & 0 & 0 & 0 & 0.000 & 1.000 & 0.000  \\
			& RID-naive &  0 & 0 & 100 & 0 & 0 & 0 & 0.000 & 1.000 & 0.000 \\  
			& RID-Frobenius & 10 & 1 & 12 & 3 & 4 & 70 & 1.100 & 0.838 & 17.476   \\  
			& NBS &  0 & 0 & 0 & 0 & 0 & 100 &   & 0.871 & 13.960   \\  
			& NonPar-RDPG-CPD  & 100 & 0 & 0 & 0 & 0 & 0 &   & 0.299 & 71.396   \\ 
			\hline
			
			(50,4.66,360,5)   
			& RID &  0 & 2 & 98 & 0 & 0 & 0 & 0.000 & 0.998 & 0.197   \\  
			& RID-naive &   0 & 2 & 98 & 0 & 0 & 0 & 0.000 & 0.998 & 0.200  \\  
			& RID-Frobenius & 9  & 1 & 7 & 1 & 2 & 80 & 0.913 & 0.863 & 15.142  \\
			& NBS &   0 & 0 & 0 & 0 & 0 & 100 &   & 0.879 & 12.511 \\ 
			& NonPar-RDPG-CPD & 100  & 0 & 0 & 0 & 0 & 0 &   & 0.218 & 78.664   \\
			\hline
			
			(150,5.30,160,3) & RID   & 0 & 0 & 100 & 0 & 0 & 0 & 0.000 & 1.000 & 0.000  \\  
			& RID-naive &   0 & 0 & 100 & 0 & 0 & 0 & 0.000 & 1.000 & 0.000   \\ 
			& RID-Frobenius & 1   & 1 & 6 & 4 & 18 & 70 & 1.042 & 0.863 & 15.188  \\  
			& NBS &   0 & 0 & 0 & 0 & 0 & 100 &   & 0.887 & 16.137   \\  
			& NonPar-RDPG-CPD & 67 & 28 & 4 & 1 & 0 & 0 & 11.719 & 0.611 & 41.462 \\
			\hline
			
			(150,5.30,250,4) &    RID  &   0 & 0 & 100 & 0 & 0 & 0 & 0.000 & 1.000 & 0.000 \\  
			& RID-naive &   0 & 0 & 100 & 0 & 0 & 0 & 0.000 & 1.000 & 0.000  \\  
			& RID-Frobenius &   0 & 0 & 2 & 3 & 7 & 88 & 2.000 & 0.886 & 12.616   \\  
			& NBS &   0 & 0 & 0 & 0 & 0 & 100 &   & 0.875 & 13.184   \\  
			& NonPar-RDPG-CPD & 100  & 0 & 0 & 0 & 0 & 0 & & 0.241 & 76.140  \\
			\hline
			
			(150,5.30,360,5) & RID &   0 & 1 & 99 & 0 & 0 & 0 & 0.000 & 0.999 & 0.167  \\  
			& RID-naive &  0 & 1 & 99 & 0 & 0 & 0 & 0.000 & 0.999 & 0.114  \\ 
			& RID-Frobenius  &   1 & 2 & 5 & 5 & 5 & 82 & 2.111 & 0.907 & 10.211   \\  
			& NBS &  0 & 0 & 0 & 0 & 0 & 100 &  & 0.876 & 12.756   \\   
			& NonPar-RDPG-CPD & 100 & 0 & 0 & 0 & 0   & 0 &   & 0.164 & 83.333  \\ 
			\hline
		\end{tabular}
	\end{table}

\textbf{(Scenario 3.)} The model and parameter choices are identical to those in Scenario 2, except that $m(t)$ is set to $0.2$. In this case, the sequences between change-points are stationary. From Table~\ref{sim:Results for Scenario 2}, RID performs well with the operator norm. Similar to the findings in Scenario 2, both NBS and NonPar-RDPG-CPD show unsatisfied performance, as  they are not designed for dynamic networks with Markov structure.

	\section{Real data example}
	\label{sec:real data}
	
	In this section, we apply our method to detect change-points of connection probabilites in face-to-face interactions in a primary school \citep{gemmetto2014mitigation, stehle2011high}. The data is collected by the SocioPatterns project\footnote{\url{http://www.sociopatterns.org}} with active RFID devices, which generate a new data record every 20 seconds capturing information from the preceding 20 seconds. Specifically, on October 1st, 2009, from 8:40 to 17:18, contact data were collected for a total of 236 individuals, including 10 classes of students and one group of teachers, with a total of 60623 records. Similarly, on October 2nd, 2009, from 8:34 to 17:18, contact data were collected for 238 individuals, with a total of 65150 records. We use these data to construct minute-level dependent dynamic networks and use our method to detect change-points. Such analysis is highly beneficial for studying the contact patterns in primary schools, quantifying the opportunities for respiratory infection transmission, and maintaining the safety of primary school campuses.
	
	For the $i_{th} (i=1,2)$ day, let $A^{(i)}(t)$ denote the contact matrix for the $t_{th}$ minute, i.e.,
	\[
	A_{kl}^{(i)}(t)=\left\{
	\begin{matrix}
		1 & \text{ individuals $k$ and $l$ contacted at least once within the $t_{th}$ minute on $i_{th}$ day,}\\
		0& \text{ otherwise, }
	\end{matrix}
	\right.\]
	where $520\le t\le 1038$ for $i=1$ and $514\le t\le 1038$ for $i=2$. It is evident that for fixed $k,l,i$, the sequence $\{A_{kl}^{(i)}(t)\}$ with respect to $t$ is dependent, since children's contacts may last for a certain duration. We first analyze the data set and identify the dependencies  among networks at different time points in Section~\ref{supp sec:real data} of the appendix. Then we perform change-point detections on the expected probability of contacts (namely $\mathbb{E}A^{(i)}(t)$ for $i=1,2$), and the results are summarized in Table~\ref{tab:res1}.  We use the data-driven $\tau$ in Section~\ref{subsec:parac}  for our method, and choose $M=1000, \tau_2=0.75(\sqrt{n}+\sqrt{\log(T)}), \tau_3=1/\log(T)$. We also consider the results from NBS and NonPar-RDPG-CPD in Table~\ref{tab:res1}.  Unfortunately, NBS appears to overidentify change-points, detecting over 25 change-points out of 519 time points on October 1st (or 525 time points on October 2nd). For NonPar-RDPG-CPD, results are unavailable due to the limit of our computational resources on an Apple M1 machine with 16GB of RAM.  
	
	From a change-point perspective, the contact patterns vary between the two days, with 4 change-points on the first day and 3 change-points on the second day. We note that \cite{stehle2011high} also identified differences in some statistical features between the two days through exploratory data analysis, as depicted in their Figure 1 and 2. Some change-points, such as those around 9:50, 12:00, and 13:50, are common changes occurred on both days. 
	Moreover, this finding aligns with Figure 6 in \cite{stehle2011high}, indicating variations due to differences between lunchtime and class hours.

\begin{table}[ht]
		\centering
	\footnotesize
		\caption{Change-point locations for the probability of contacts in the primary school.}
			\label{tab:res1}
	\begin{tabular}{ccc}
		\hline
		Day                & Method          & Time of change-points      \\ \hline
		1 & RID             & 9:53, 12:01, 13:47, 15:31  \\
		2 & RID             & 9:47, 12:03, 13:57         \\
  \hline
	\end{tabular}
\end{table}

	Note that students were assigned to specific classes, and prior analyses (e.g., \cite{stehle2011high}) indicated that the class of children can serve as communities within the network. Therefore, we compute the average number of contacts per minute for within-class and between-class contacts during different time intervals, as shown in Table \ref{real data:2}. An ``inversion" is observed between around 12:00 and 13:50 for both days, where the numbers of between-class contacts significantly exceed within-class contacts during this period. Additionally, at the period from 9:47 to 12:00, the number of within-class contacts reaches its peak.
	\begin{table}[ht]
		\centering
		\caption{Average number of contacts per minute during different time periods on the two days.}
		\label{real data:2}
		\begin{tabular}{ccccccc}
			\hline
			\multicolumn{3}{c}{Day 1}                                                                     &  & \multicolumn{3}{c}{Day2}                                                                      \\ \hline
			\multirow{2}{*}{Time period} & \multirow{2}{*}{within-class} & \multirow{2}{*}{between-class} &  & \multirow{2}{*}{Time period} & \multirow{2}{*}{within-class} & \multirow{2}{*}{between-class} \\
			&                               &                                &  &                              &                               &                                \\ \cline{1-3} \cline{5-7} 
			8:40-9:53                 & 50.17                         & 10.79                       &  
			& 8:34-9:47                   & 59.58                         & 14.08                         \\
			9:48-12:01               & 84.68                      & 21.97                          &  
			& 9:48-12:03                & 91.68                     & 15.81                      \\
			12:02-13:47                 &32.94                      & 61.40                   & 
			 & 12:04-13:57                & 38.98                        & 65.40                          \\
			13:48-15:31                  & 49.76                       & 14.81                         &  
			& 13:58-17:18                 & 63.40                       & 12.68                        \\
			15:32-17:18                & 68.59                         & 12.49                      &  &                  &                      &                      \\ \hline
		\end{tabular}
	\end{table}
	
	To visually examine the contact patterns during different time periods, we draw aggregated networks of interactions in Figure~\ref{fig:realdata1} in the main article and Figure~\ref{fig:realdata2} in the appendix. Specifically, we draw an edge on the graph if two nodes were recorded by the RFID device to have contact at least three times (i.e., totaling at least one minute) within the respective time period. Nodes are placed in an orderly manner  according to the children's classes to clearly show the difference between within-class and between-class contacts. The figures illustrate noticeable variations in contact patterns among individuals during different  periods.
	\begin{figure}[htbp]
		\centering 
		\subfigure{
			\includegraphics[width=6cm,height=6cm]{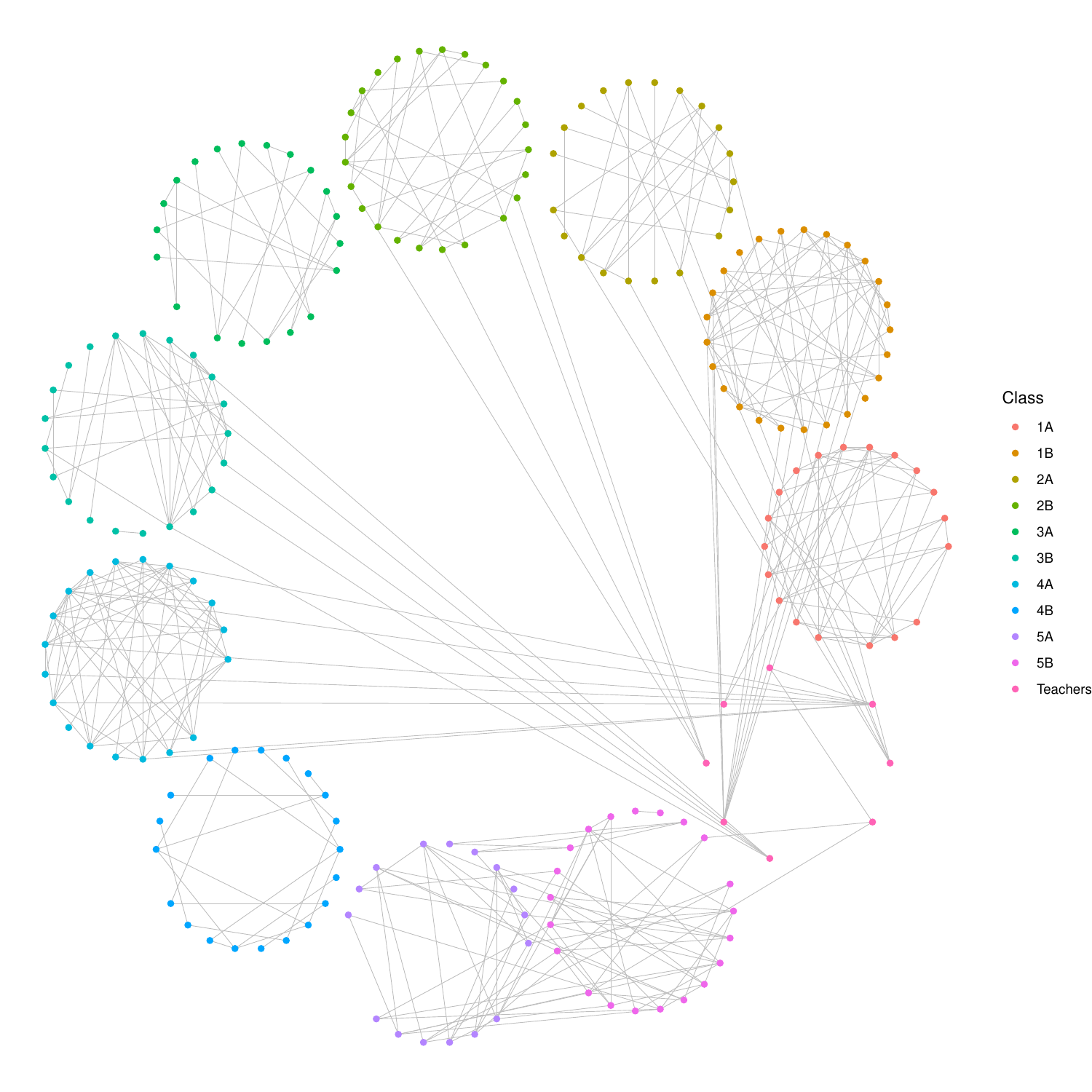}
		}\subfigure{
			\includegraphics[width=6cm,height=6cm]{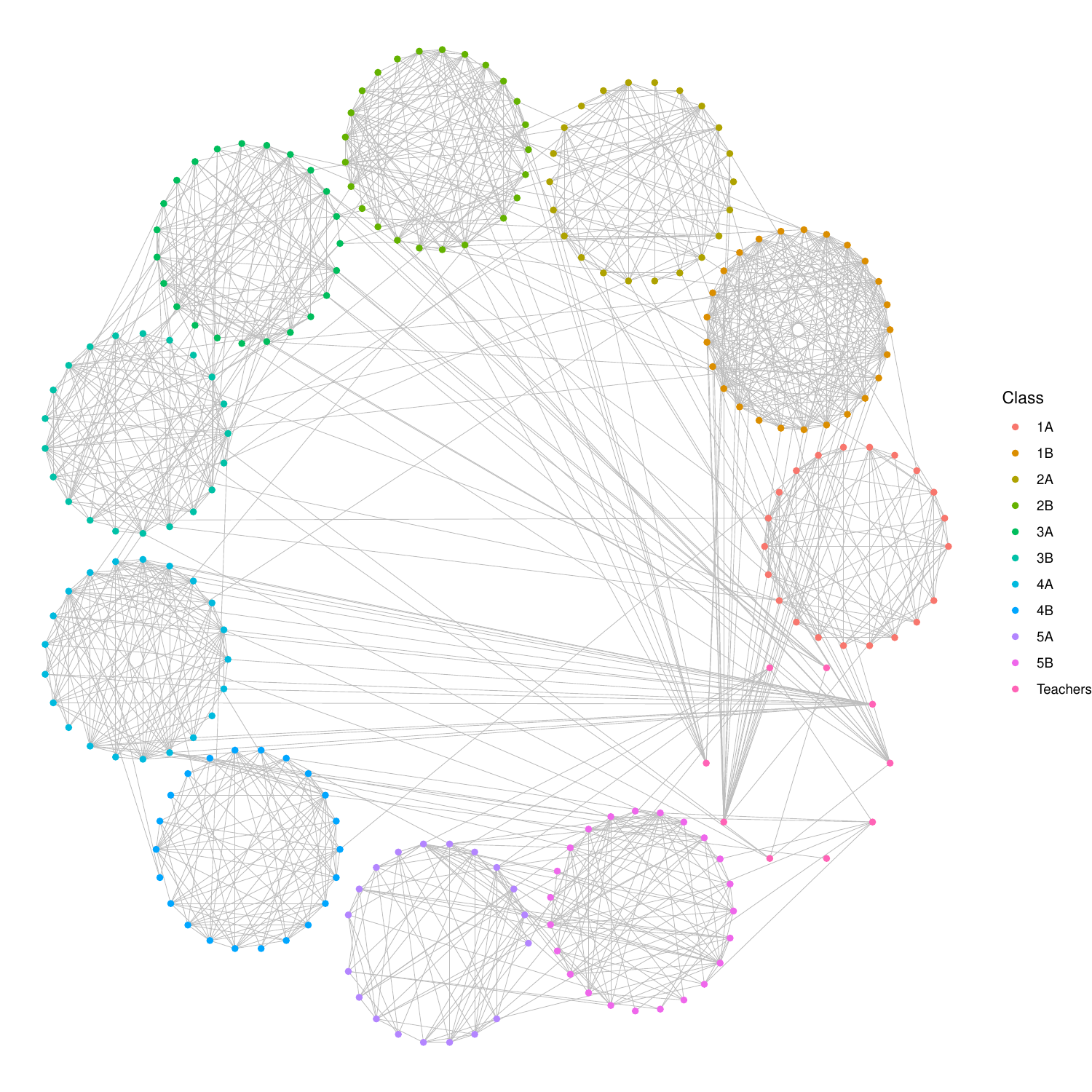}
		} 	\subfigure{
			\includegraphics[width=6cm,height=6cm]{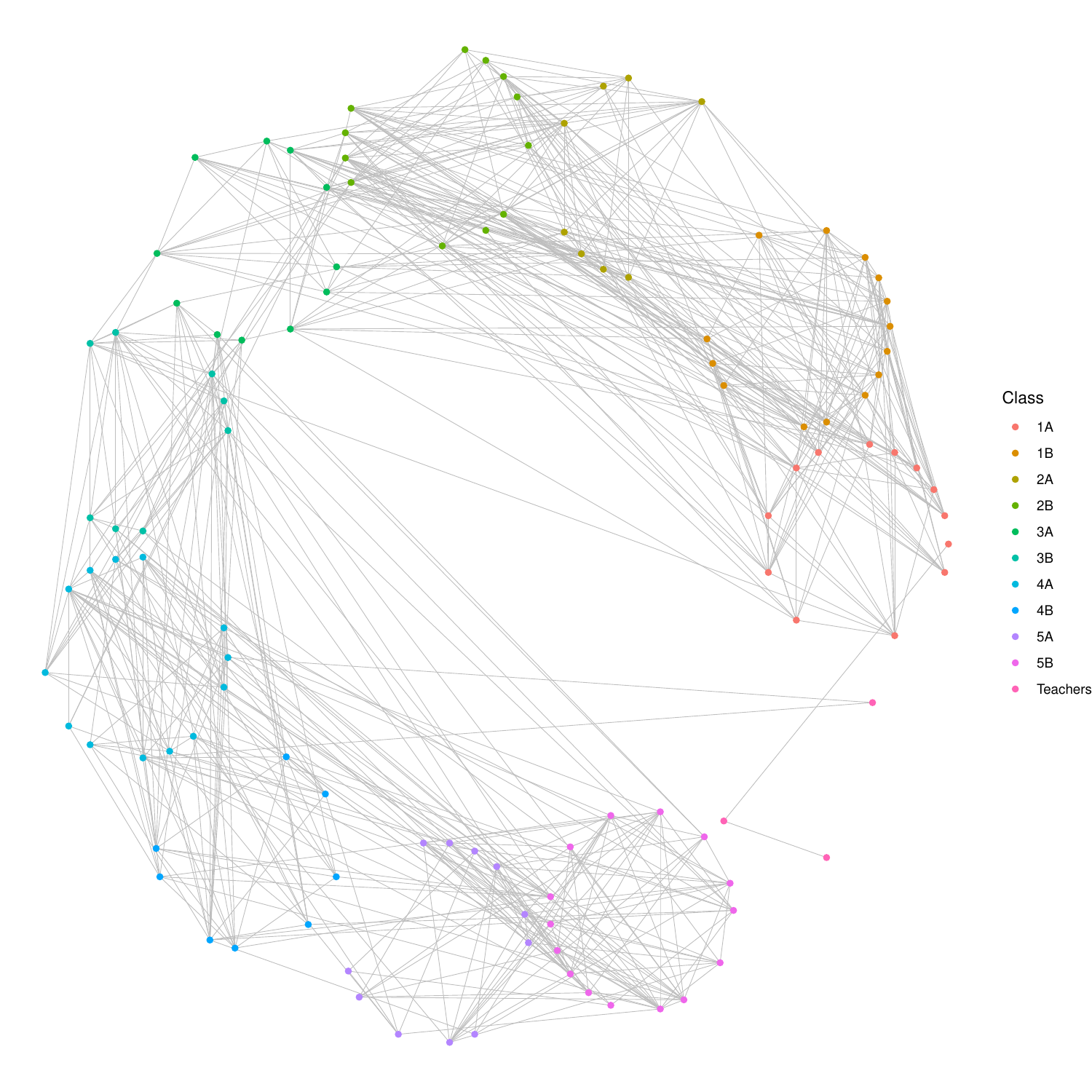}
		} 	\subfigure{
			\includegraphics[width=6cm,height=6cm]{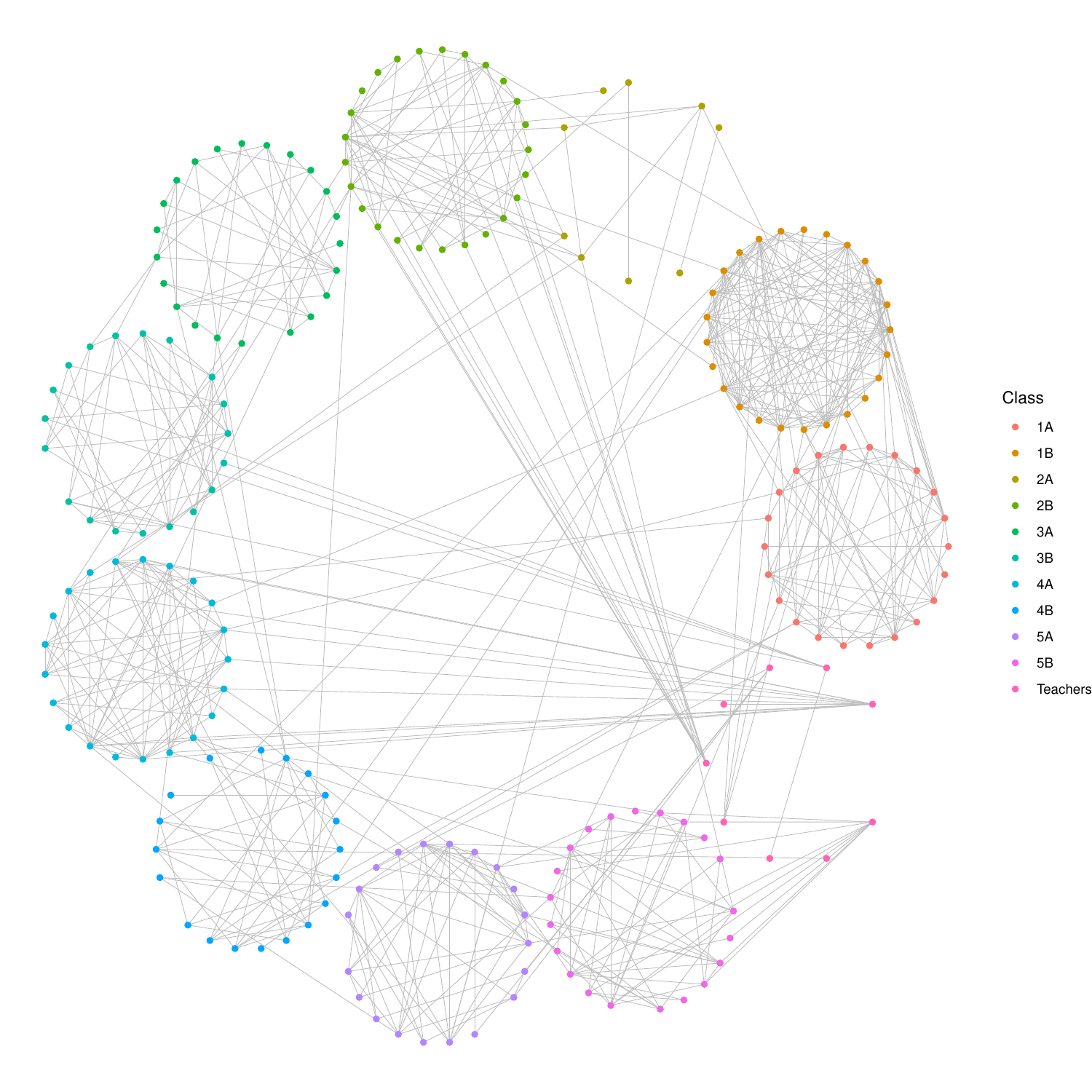}
		} 	\subfigure{
			\includegraphics[width=6cm,height=6cm]{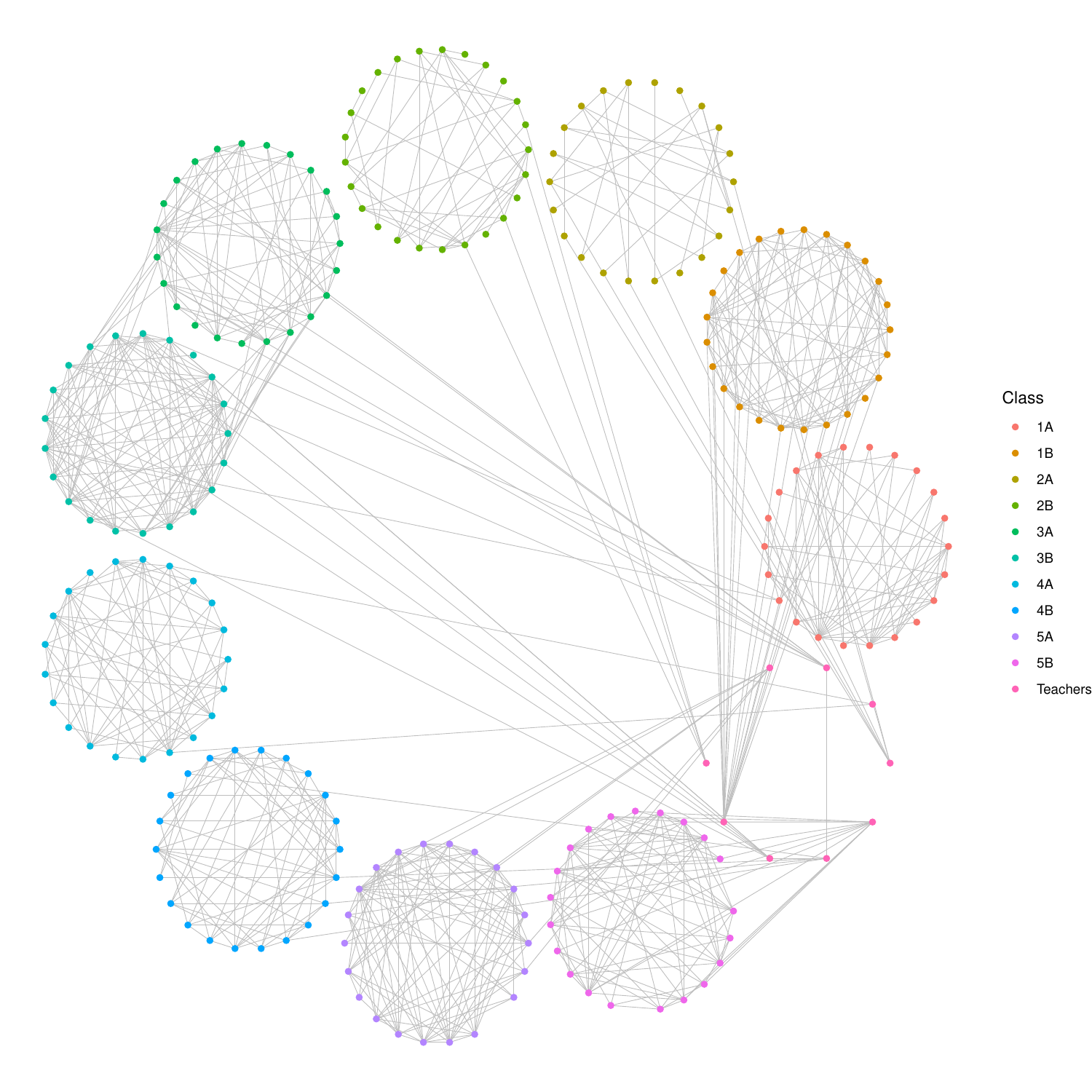}
		} 
		
		\caption{Aggregated networks for the first day. Note that there are 5 time periods in total (see Table~\ref{real data:2} for details), and they are visualized by rows.}
   \label{fig:realdata1}
	\end{figure}

	\section{Conclusion}
	\label{sec:conclusion}
	We propose a flexible approach for multiple change-point detection in Markov chain Bernoulli networks, which encompasses the models considered by \cite{wang2021optimal} and \cite{jiang2020autoregressive} as special cases. Through a novel clustering-based threshold, our method ``distills'' informative intervals from a sequence of random intervals. For networks with low-rank structure, we propose SUSVT for further refinement. Our method achieves almost a minimax detection bound and a minimax localization bound for heterogeneous networks with low-rank structure except a factor of logarithm, even as the number of change-points and network size diverge, without the need of sample splitting. A possible future work is to extend our RID and SUSVT to network sequences with other related dependent structures, such as network sequences with dependent and evolving latent structures, or dependence of subgraphs at different time points.
	
	\bibliographystyle{abbrv}

 \appendix
\section{Further descriptive analysis of the real data example}
\label{supp sec:real data}
For the real data example in Section~\ref{sec:real data}, we conduct Pearson's chi-squared tests for  \textit{$H_0: A_{kl}^{(i)}(t)$ and $A_{kl}^{(i)}(t-1)$ are independent}. We perform tests for all $k,l,i$ with the frequencies in the contingency tables larger than $5$ (i.e., $a_i\ge 5, i=1,2,3,4$ in Table~\ref{real data:1}). For illustration, we take $i=1, k=1, l=3$ and present the contingency table in Table~\ref{real data:1}. The $\chi^2$ statistic is $100.02$ and the $p$-value is fairly small. For all $k,l,i$, it turns out that more than $90\%$ of these tests demonstrate significant dependence (at the significance level of $0.05$ with Bonferroni corrections for multiple comparisons). To illustrate the independence in different individuals,  we select all pairs of individuals in $\left\{(i,j):\min_{k_1=1,2,k_2=0,1,k_3=0,1}\sum_t I\left(A_{i,j}^{(k_1)}(t)=k_2,A_{i,j}^{(k_1)}(t-1)=k_3\right)\ge 5\right\}$, and compute pairwise correlation coefficients.   Table~\ref{tab:sim1} summarises some quantiles of the correlation coefficients, and  the standard error is $0.044$. Table~\ref{tab:sim1}  indicates that the correlations between individuals are relatively weak.

\begin{table}[ht]
	\centering
	\caption{The contingency table for $\{A_{1,3}^{(1)}(t)\}$ which records the frequencies of $A_{1,3}^{(1)}(t-1)=j_1, A_{1,3}^{(1)}(t)=j_2$ for $j_1, j_2= 0,1$.}
	\label{real data:1}
	\begin{tabular}{cc|ccl}
		\multicolumn{2}{c|}{\multirow{2}{*}{Frequency}}       & \multicolumn{2}{c}{$A_{1,3}^{(1)}(t)$} &  \\ \cline{3-4}
		\multicolumn{2}{c|}{}                            & 0           & 1          &  \\ \cline{1-4}
		\multicolumn{1}{c|}{\multirow{2}{*}{$A_{1,3}^{(1)}(t-1)$}} & 0 & $a_1=354$         & $a_2=49$         &  \\
		\multicolumn{1}{c|}{}                        & 1 & $a_3=50$          & $a_4=65$         & 
	\end{tabular}
\end{table}

\begin{table}[ht]
	\centering
	\caption{Quantiles of the correlation coefficients between individuals, with the standard error being $0.044$.}
	\label{tab:sim1}
	\begin{tabular}{ccccccccc}
		\hline
		Cut point & 0.05   & 0.1    & 0.15   & 0.2    & 0.8   & 0.85  & 0.9   & 0.95  \\
		Quantile  & -0.092 & -0.075 & -0.066 & -0.059 & 0.062 & 0.090 & 0.131 & 0.211 \\ \hline
	\end{tabular}
\end{table}

\section{Figures for aggregated networks of interactions in real data example} 
\label{suppsec:Figures for aggregated networks of interactions in real data example}
The aggregated networks for the first day are presented in the main article, while the figures for the second day are shown in Figure~\ref{fig:realdata2}.

\begin{figure}[htbp]
	\centering 
	\subfigure{
		\includegraphics[width=6cm,height=6cm]{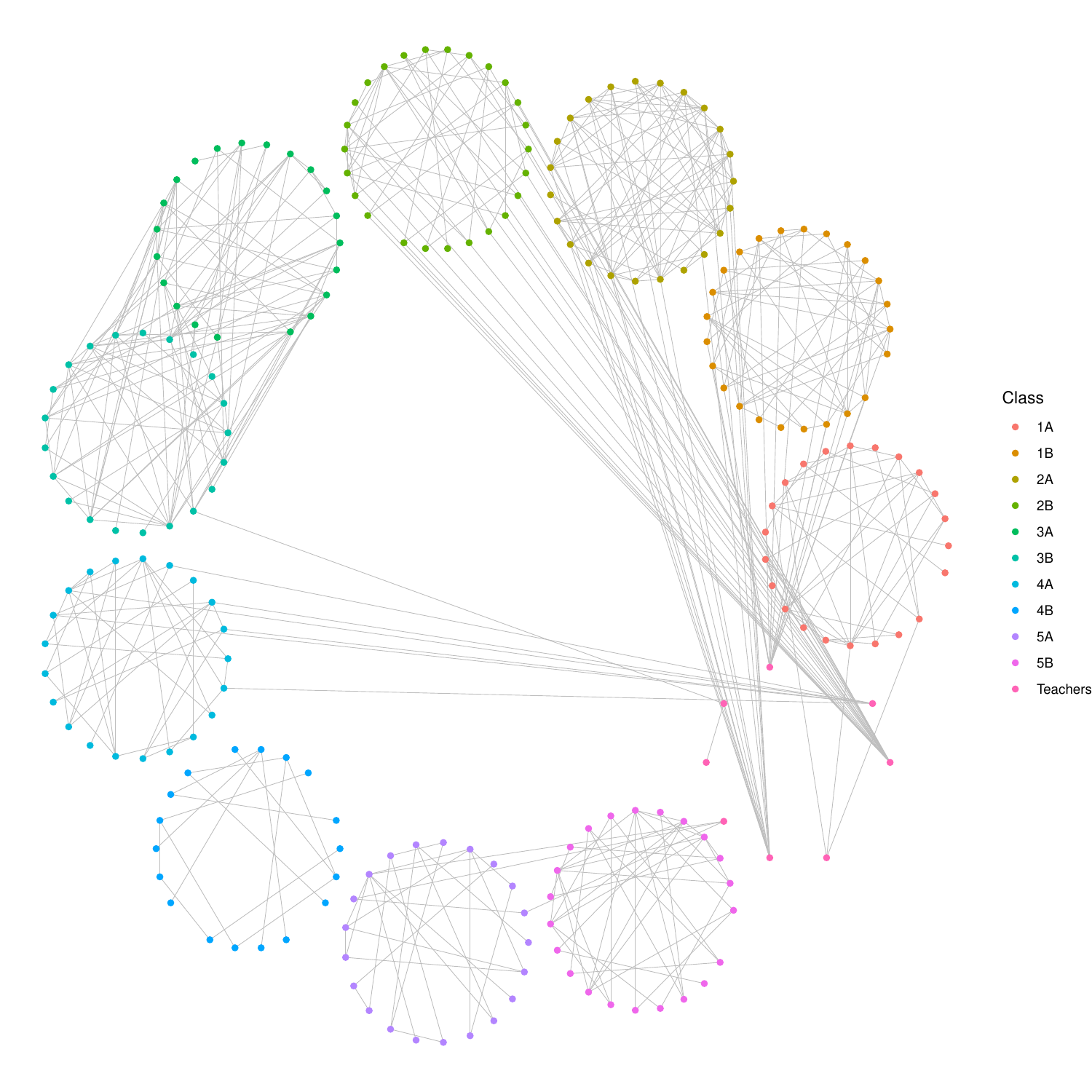}
	}\subfigure{
		\includegraphics[width=6cm,height=6cm]{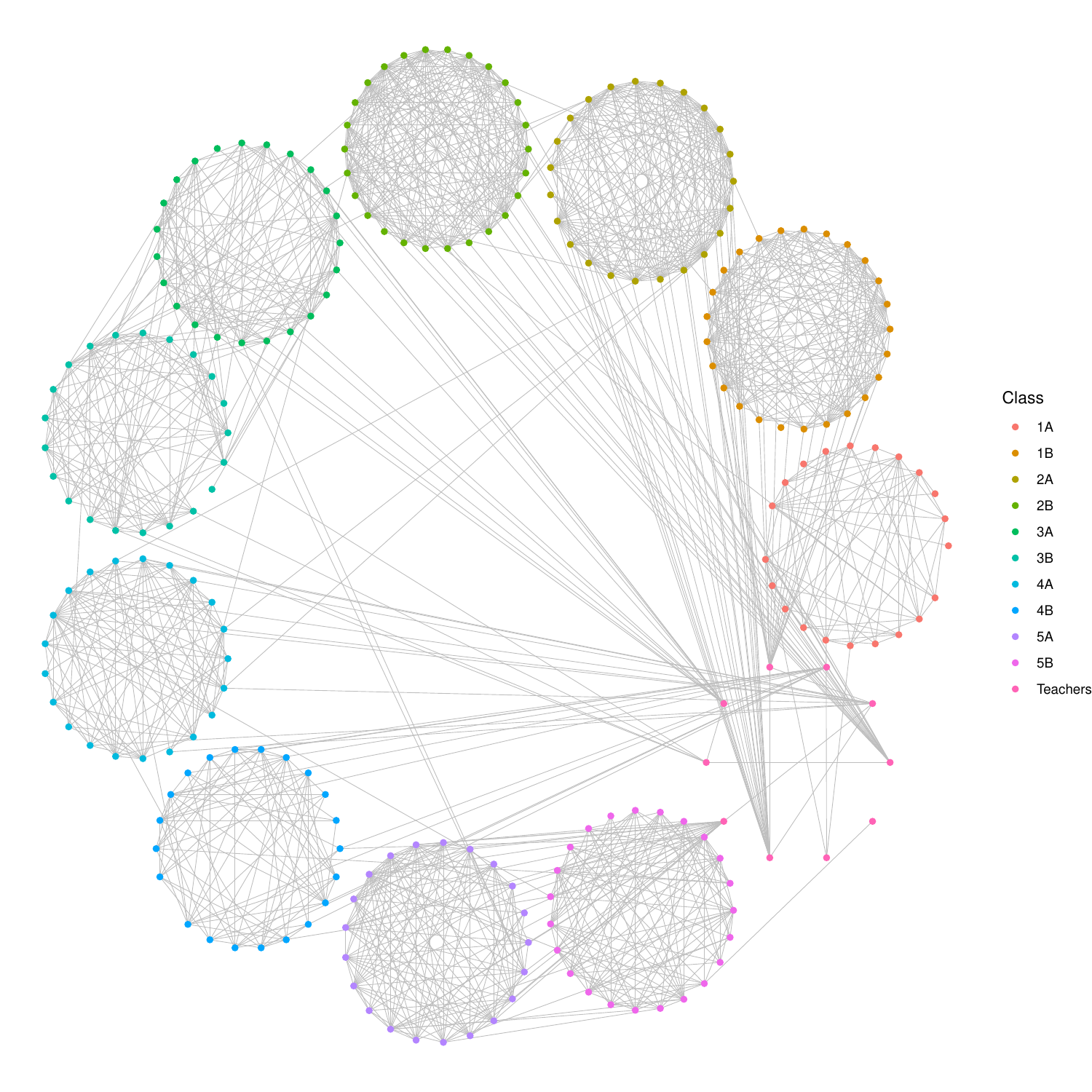}
	} 	\subfigure{
		\includegraphics[width=6cm,height=6cm]{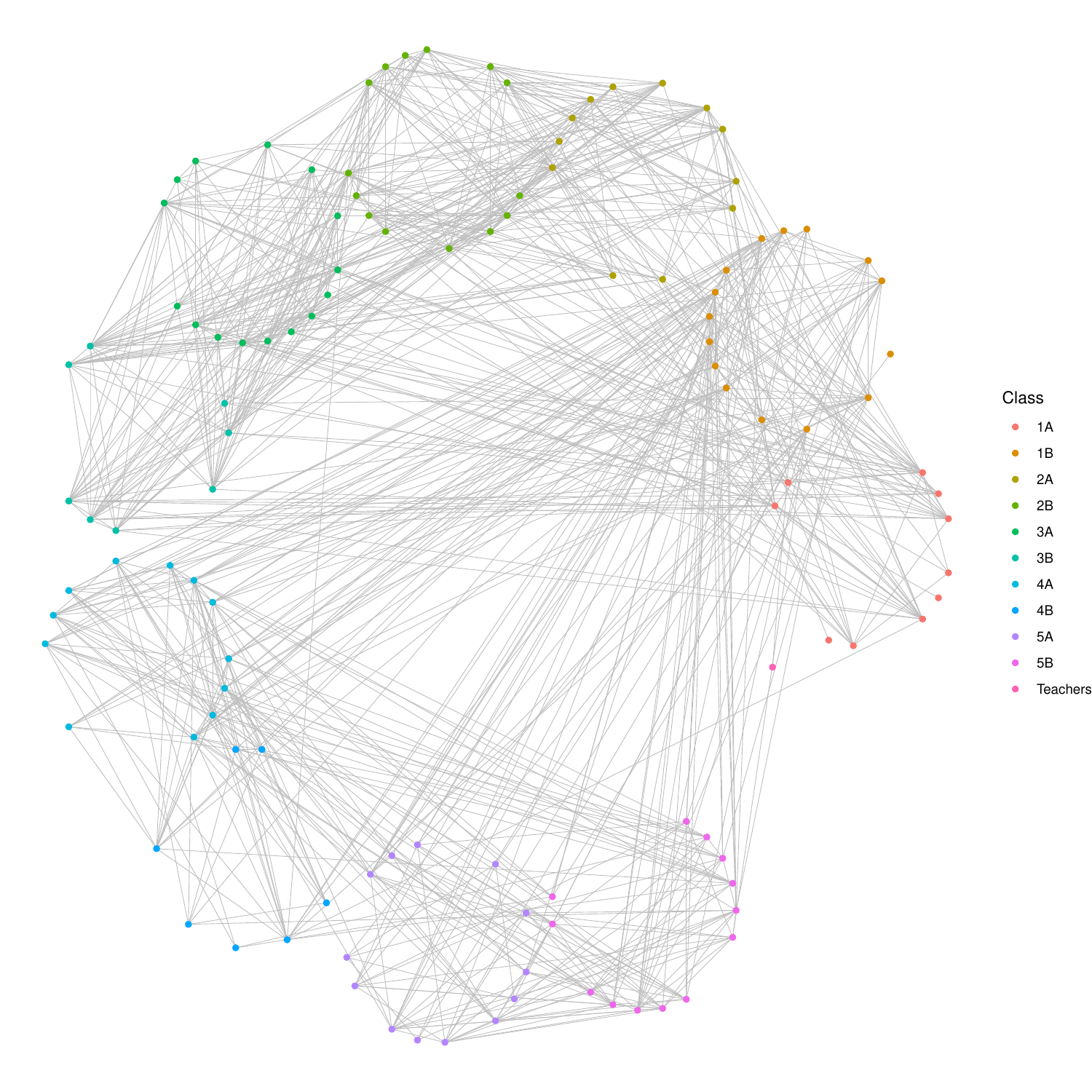}
	} 	\subfigure{
		\includegraphics[width=6cm,height=6cm]{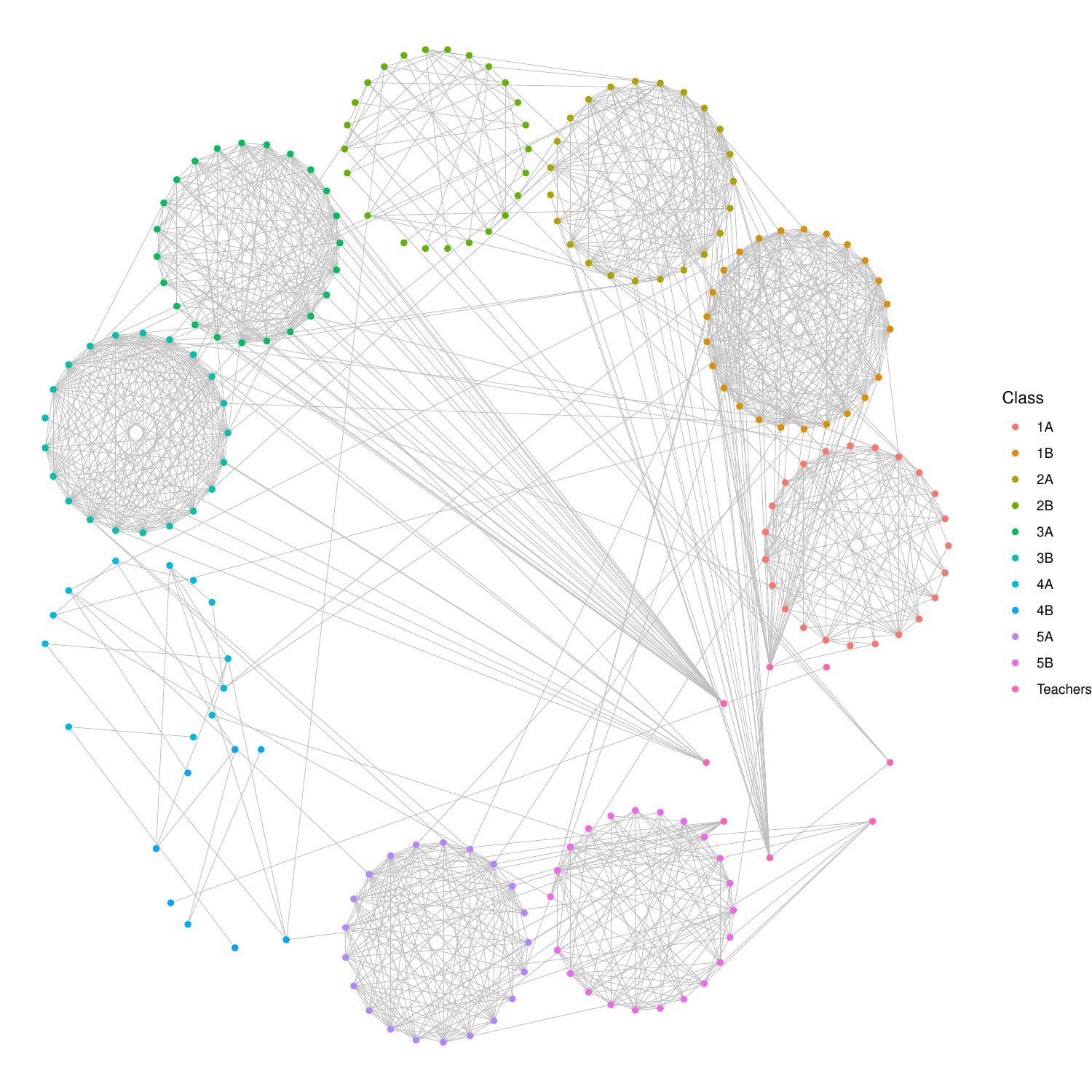}
	} 
	
	\caption{Aggregated networks for the second day. Note that there are 4 time periods in total (see Table~\ref{real data:2} for details), and they are visualized by rows.}
	\label{fig:realdata2}
\end{figure}

\section{Algorithm for universal singular value thresholding}
\label{suppsec:Algorithm for universal singular value thresholding}

We present the algorithm for universal singular value thresholding in Algorithm~\ref{alg:usvt}.

\begin{algorithm}[htbp]
	\caption{Universal Singular Value Thresholding in \citep{wang2021optimal}}
	\label{alg:usvt}
	\begin{algorithmic}[1]
		\Require Symmetric matrix $A \in \mathbb{R}^{n \times n}$, $\tau_2, \tau_4 > 0$.
		\State $(\kappa_i(A), v_i)$ be the $i_{th}$ eigen-pair of $A$, with $|\kappa_1(A)| \geq \cdots |\kappa_n(A)|$
		\State $A' = \sum_{i:|\kappa_i (A)|\ge\tau_2} \kappa_i(A) v_iv_i^{\top}$
		\State Let USVT$(A, \tau_2,\tau_4) = (A''_{ij})$ with
		\[
		(A'')_{ij} = \begin{cases}
			(A')_{ij}, & \text{if} \quad |(A'_{ij})| \le \tau_3\\
			\text{sign} ((A')_{ij})\tau_4, & \text{if} \quad |(A'_{ij})| > \tau_4\\
		\end{cases}
		\]
		\State Output USVT$(A, \tau_2,\tau_3)$.
	\end{algorithmic}
\end{algorithm}

\section{Proofs}
\label{suppsec:More proofs}

\begin{proof}[Proof of Theorem~\ref{th:K}]
	\label{pf:th1}
	
	For fixed $(i,j)$ and $s<t<e$, we have
	\begin{align*}
		(\tilde A_{ij})_{s,e}^t=\sqrt{\frac{e-t}{(e-s)(t-s)}}\sum_{r=s+1}^t A_{ij}(r)-\sqrt{\frac{t-s}{(e-s)(e-t)}}\sum_{r=t+1}^e A_{ij}(r).
	\end{align*}
	Then for another Markov chain $\{A_{ij}^*(t)\}$, we have $
	\left| (\tilde A_{ij})_{s,e}^t- (\tilde A_{ij}^*)_{s,e}^t \right|\le \sum_{r=s+1}^e |w_r||A_{ij}(r)-A_{ij}^*(r)|\le  \sum_{r=s+1}^e  |w_r|
	$ 
	where $
	w_r=\sqrt{\frac{e-t}{(e-s)(t-s)}}I(s+1\le r\le t)
	-\sqrt{\frac{t-s}{(e-s)(e-t)}}I(t+1\le r\le e)
	$. 
	Note that $\sum_{r=s+1}^e w_r=0,\sum_{r=s+1}^e w_r^2=1 $, then by Corollary 2.10 in \cite{paulin2015concentration}, we have that for any $\varepsilon>0$, 
	$
	\mathbb{P}\left(
	\left	|(\tilde A_{ij})_{s,e}^t-\mathbb{E}(\tilde A_{ij})_{s,e}^t\right|\ge \varepsilon
	\right)
	\le 2\exp\left({-2\varepsilon^2}/{
		\alpha
	}\right)
	$ where $\alpha=\inf_{0<\phi<1}\bar t_{mix}(\phi)\left(\frac{2-\phi}{1-\phi}\right)^2$ and $\bar t_{mix}$ is the mixing time for (inhomogeneous) Markov chain $\{A_{ij}(t)\}_{t=1}^T$ (see Definition 1.4 in \cite{paulin2015concentration}). Recall that we define 
	$\Xi = \sup_{i,j,t}|\mathbb{P}(A_{ij}(t)=1|A_{ij}(t-1)=1)-\mathbb{P}(A_{ij}(t)=1|A_{ij}(t-1)=0)| $ and assume $\Xi<1$, as a result, we have $\bar t_{mix}(\phi)\le -\log(\phi)/\log^*(1/\Xi)$ by \cite{samson2000concentration}. Therefore, we have $\alpha<20/(3\log^*(1/\Xi))$, and
	\begin{align}\label{eqn:1}
		\mathbb{P}\left(
		\left	|(\tilde A_{ij})_{s,e}^t-\mathbb{E}(\tilde A_{ij})_{s,e}^t\right|\ge \varepsilon
		\right)
		\le & 2\exp\left(\frac{-3\varepsilon^2\log^*(1/\Xi)}{
			10
		}\right).
	\end{align}
	
	By (\ref{eqn:1}) and Corollary 3.2 in \cite{bandeira2016sharp}, we have that $
	\sqrt{\log^*(1/\Xi)}
	\bb{E}|| 
	\tilde A_{s,e}^t-\mathbb{E}\tilde A_{s,e}^t
	||_{op}\le C_8^*(\sqrt{n}+\sqrt{\log(n)})\le C_8\sqrt{n}$ 
	where $C_8^*, C_8$ are two constants.
	Furthermore, by equation (2.8) and Corollary 4 in \cite{samson2000concentration}, we have $
	\bb{P}\left(\left|
	|| \tilde A_{s,e}^t ||_{op} - \bb{E}|| \tilde A_{s,e}^t ||_{op} 
	\right|\ge \varepsilon \right)\le 2\exp\left(
	-{\varepsilon^2}/{2}(1-\sqrt{\Xi})^2
	\right)$. Let the event
	\begin{equation}
		\label{eqn:ea}
		\mathcal{A}:=\left\{\forall s<t<e, \sup_{s<t<e} \left|
		||\tilde A_{s,e}^t||_{op}-||\bb{E}\tilde A_{s,e}^t||_{op} 
		\right|\le  \mathfrak{I}  \right\}
	\end{equation} 
	with $\mathfrak{I}=C(\sqrt{\log(T)}/(1-\sqrt{\Xi})+\sqrt{n}/\sqrt{\log^*(1/\Xi)})$ where $C$ is a sufficiently large constant.  Then we have  $\mathbb{P}(\mathcal A)\ge 1-T^{-1}$.  From (\ref{eqn:ea}) and Lemma~\ref{lem:lemma7}, we get $\mathbb{P}(\mathcal{A}\cap \mathcal{M})\ge 1-T^{-1}-\frac{T}{\Delta}\exp\left(-\frac{M\Delta^2}{32T^2} \right)$, 
	which will tend to $1$ as long as
	$ \frac{32T^2}{\Delta^2}log\left( \frac{T}{\Delta}\right)=o(M).$
	All the analysis in the rest of this proof is conducted on $\mathcal{A}\cap \mathcal{M}$. 
	
	For the random interval $(s_m,e_m)$, if it contains no change-point, $\mathbb{E}\tilde{A}_{s_m,e_m}^t=0$ for all $t\in(s_m,e_m)$. Then on event $\mathcal{A}$, we have $\max_t ||\tilde{A}_{s_m,e_m}^t||_{op}\le \mathfrak{I}$.  So the choice of $\tau>C(\sqrt{\log(T)}/(1-\sqrt{\Xi})+\sqrt{n}/\sqrt{\log^*(1/\Xi)})$ excludes intervals without change-points. Recall the construction of $S$ in Algorithm~\ref{alg:1}: ``If $\max_t||\tilde{A}_{s_m,e_m}^t||_{op}>\tau$; Then $S=S\cup \left\{(s_m,e_m]\right\}$; End If." 
	Therefore, the intervals in $S$ must include at least one change-point. We now prove that $\hat K=K$. First, it's obvious that $\hat K=\mathop{\max}\limits_{
		\left\{S^{'}\subset S : \text{intervals in $S^{'}$ are disjoint} \right\}
	}|S^{'}|$. 
	Since every interval in $S$ must cover at least one change-point, it follows from the Pigeonhole Principle that if we select more than $K$ intervals from $S$, then at least two intervals will cover the same change-point. Therefore, $|S^*|=\hat K\le K$. To prove $\hat K\ge K$, we construct a set $\mathcal{Q}$ such that $|\mathcal{Q}|=K, \mathcal{Q}\subset S$ and that intervals in $\mathcal{Q}$ are disjoint. Recall the definition of $\mathcal{M}$, for each $k=1,\cdots,K$, we can pick out an interval $(s_k,e_k)$ such that \begin{equation}
		\label{eqn:QQ}
		\eta_k-\frac{1}{4}\Delta < s_k < \eta_k-\frac{1}{8}\Delta,\ \eta_k+\frac{1}{8}\Delta < e_k < \eta_k+\frac{1}{4}\Delta .
	\end{equation}
	Let $\mathcal{Q}=\{(s_1,e_1),\cdots,(s_K,e_K)\}$. By the property of CUSUM statistics,\\ $
	\mathbb{E}\tilde A_{s_k,e_k}^{\eta_k}
	= \sqrt{\frac{(\eta_k-s_k)(e_k-\eta_k)}{e_k-s_k}}(\mathbb{E}A(\eta_k)-\mathbb{E}A(\eta_k-1))$.  Taking $||\cdot||_{op}$ for both sides to get that
	\begin{align*}
		||	\mathbb{E}\tilde A_{s_k,e_k}^{\eta_k}||_{op}&\ge 
		\sqrt{\frac{1}{2}[(\eta_k-s)\wedge (e-\eta_k)]}||\mathbb{E}A(\eta_k)-\mathbb{E}A(\eta_k-1)||_{op}
		\ge \frac{1}{4}\sqrt{\Delta}\kappa.
	\end{align*}
	Therefore, on event $\mathcal{A}$, we obtain $
	\max_t||\tilde{A}_{s_k,e_k}^t||_{op} \ge   ||\mathbb{E}\tilde A_{s_k,e_k}^{\eta_k}||_{op}-\mathfrak{I}\ge
	\frac{1}{4}\sqrt{\Delta}\kappa-\mathfrak{I}.
	$ 
	Therefore, $(s_k,e_k)\in S$ as long as $\tau<	\frac{1}{4}\sqrt{\Delta}\kappa-\mathfrak{I}$. 
	It suffices to show that intervals in $\mathcal{Q}$ are mutually disjoint. Notice that $e_k-\eta_k < \frac{1}{4}\Delta$ and $\eta_k-s_k < \frac{1}{4}\Delta$. Meanwhile, $\Delta \le \eta_k-\eta_{k-1}$ for all $k$. This indicates that all $(s_k,e_k)$ are mutually disjoint. Therefore, $\hat K=K$. 
	
	Recall the construction of $r_k$ and $l_k$ in Line 10 and 17 of Algorithm~\ref{alg:1}. Since $\eta_1<r_1\le e_1$, $e_1<s_2$ and every interval covers at least one change-point, we conclude that $\eta_2<r_2\le e_2$. It follows by induction that $\eta_k<r_k\le e_k$ for $1\le k\le K$. By symmetry, $s_k\le l_{k}<\eta_k$ for $1\le k\le K$. So $\left[l_k,r_{k}\right]$ contains $\eta_{k}$ only and $r_{k}-l_k\le e_{k}-s_{k}\le \frac{\Delta}{2}$.
	
\end{proof}

\begin{proof}[Proof of Theorem~\ref{th:Theoretical result of local refinement}] 
	
	By Theorem~\ref{th:K}, after applying Algorithm~\ref{alg:1}, we have $\mathbb{P}(\hat K=K, \mathcal{B})\to 1$ where $\mathcal{B}:=\bigcap_{j=1}^{\hat K}\left\{ \left[l_{j},r_{j}\right]  \text{ covers } \eta_j \text{ and }r_{j}-l_{j}\le {\Delta}/{2} \right\}$. 
	In the following proof, when there is no ambiguity, we omit the subscripts of $s_{k}$ and $e_{k}$ and write them as $s$ and $e$. 
	We start by outlining our proof. In Step 1, we show that the $s,v_k,e$ in Line 5 of Algorithm~\ref{alg:3}  are well-positioned in that the spacing between them are neither too far nor too close. In Step 2, We show that the Frobenius norm of the expectation of  $\tilde Y_{s,e}^{v_k}$ in Line 10 of Algorithm~\ref{alg:3}  is of order $\frac{\kappa_{\eta_k}^2\Delta}{\tau_3\log(T)}$. In Step 3, we derive the bound of $||\tilde Y_{s^*,e^*}^t-\mathbb{E}\tilde Y_{s^*,e^*}^t||_{op}$ for any $(s^*,e^*)\subset (0,T)$ and explore the effect the  of USVT. In Step 4, we show that the samples drawn at each $\tau_3\log(T)$ are nearly independent, and obtained the concentration results required by Step 5. With these concentration inequalities and the results of Step 3, we finish the proof via arguments motivated by \cite{wang2021optimal}.

	\textbf{Step 1.}  We fix for some $k (1\le k\le \hat K)$. Notice that $\tau_3\log(T)=o(\Delta)$, we can assume without loss of generality that $\frac{v_k-s-1}{2\tau_3\log(T)}, \frac{e-s-1}{2\tau_3\log(T)}$ are positive integers. Assume without loss of generality that $l_k-\hat\Delta/16, r_k+\hat\Delta/16, (l_k+r_k)/2$ are positive integers. 
	Recall that $S^*=\{[l_k,r_k]\}_{k=1}^{\hat K}$ and $r_k-l_k\le \frac{\Delta}{2}$. Since $|\eta_k-v_k|\le \frac{\Delta}{4}$, we have $\frac{1}{2}\Delta\le \hat\Delta\le \frac{3}{2}\Delta$. $e-\eta_k\ge e-r_k\ge \frac{\Delta}{32}.$ Similarly, $\eta_k-s\ge \frac{\Delta}{32}, e-s\ge\frac{\Delta}{16}, e-s\le\frac{3\Delta}{4}, e-v_k\ge \frac{\Delta}{32}, e-v_k\le \frac{\Delta}{2}, v_k-s\ge \frac{\Delta}{32}$. 
	
	\textbf{Step 2.} Let 
	$\kappa_{\eta_k}=||\mathbb{E}A(\eta_k)-\mathbb{E}A(\eta_k-1)||_F$. It's obvious that $\frac{e-s-1}{\tau_3\log(T)}+1\ge \frac{\Delta}{16\tau_3\log(T)}$. As a result, $\tilde\Delta_k^2\ge \frac{1}{2}\min\left(\frac{v_k-s-1}{2\tau_3\log(T)}+1,\frac{e-v_k}{2\tau_3\log(T)}\right)\ge \frac{1}{4}\left(\frac{\Delta/32}{2\tau_3\log(T)}\right)\ge \frac{\Delta}{128\tau_3\log(T)}$. 
	Assume without loss of generality that $v_k\le \eta_k$. By some direct calculations and the previous inequality, we have
	\begin{equation}
		\label{pfeqn:0}
		||\mathbb{E}\tilde Y_{s,e}^{v_k}||_F^2=\tilde\Delta_k^2\left(\frac{e-\eta_k}{e-v_k}\right)^2\kappa_{\eta_k}^2\ge \frac{\Delta}{128\tau_3\log(T)}\frac{1}{16^2}\kappa_{\eta_k}^2=\frac{\kappa_{\eta_k}^2\Delta}{32768\tau_3\log(T)}.
	\end{equation}
	
	\textbf{Step 3.} Let $Y(t,k)=A(s_k+1+2\tau_3\log(T)(t-1)), 1\le t\le 1+\frac{e_k-s_k-1}{2\tau_3\log(T)}$. We abbreviate $Y(t,k)$ as $Y(t)$ when there is no ambiguity. Let  $\bar Y(t)=Y(t)-\mathbb{E}Y(t)$.  For any $1\le i,j\le n, Y_{ij}(t)$ is a Markov chain. We have \begin{equation}
		\label{pfeqn:8}
		||\mathbb{P}(\cdot|Y_{ij}(t-1)=1)-\mathbb{P}(\cdot|Y_{ij}(t-1)=0)||_{TV}\le \Xi^{2\tau_3\log(T)}\le T^{-2c_7}\le \frac{1}{4}.
	\end{equation}
	Then equation (2.8) in \cite{samson2000concentration} becomes $||\Gamma||^2\le \left(\frac{1}{1-\sqrt{\Xi^{2\tau_3\log(T)}}}\right)^2\le 4$. Let $w_t$ be a sequence of weights that satisfy $\sum w_t=0, \sum w_t^2=1$. Since $\sum_t w_t \bar {Y}_{ij}(t)$ and $||\sum_t w_t \bar {Y}(t)||_{op}$ are both 1-Lipschitz. By Corollary 4 of \cite{samson2000concentration}, for any $\varepsilon>0$, we have 
	\begin{equation}
		\label{pfeqn:1}
		\mathbb{P}\left(\|\sum_t w_t \bar {Y}(t)\|_{op}\ge \mathbb{E}\|\sum_t w_t \bar {Y}(t)\|_{op}+2\varepsilon\right)\le e^{-\frac{\varepsilon^2}{2}},
	\end{equation}
	\begin{equation}
		\label{pfeqn:2}
		\mathbb{P}\left(\left|\sum_t w_t \bar {Y}_{ij}(t)\right|\ge 2\varepsilon\right)\le 2e^{-\frac{\varepsilon^2}{2}}.
	\end{equation}
	By Definition~\ref{def:Bernoulli network Markov chain}, $\{\sum_t w_t \bar {Y}_{ij}(t)\}_{1\le i,j\le n}$ are independent. By (\ref{pfeqn:2}) and Corollary 3.2 in \cite{bandeira2016sharp}, we get
	\begin{equation}
		\label{pfeqn:3}
		\mathbb{E}\|\sum_t w_t \bar {Y}(t)\|_{op}\le C_8^* (\sqrt{n}+\sqrt{\log n})\le C_8\sqrt{n},
	\end{equation}
	where $C_8^*, C_8>8$ are two absolute constants. Combine (\ref{pfeqn:3}) with (\ref{pfeqn:1}) and set $\varepsilon=\frac{C_8}{2}\sqrt{\log(T)}$, then we have
	\begin{equation}
		\label{pfeqn:9}
		\mathbb{P}\left(\|\sum_t w_t \bar {Y}(t)\|_{op}\ge C_8(\sqrt{n}+\sqrt{\log(T)})\right)\le T^{-8}.
	\end{equation}
	Define the event $\mathcal{A}=\left\{\sup_{0\le s^*<t\le e^*\le T}\|\tilde Y_{s^*,e^*}^t-\mathbb{E}\tilde Y_{s^*,e^*}^t\|_{op}\le C_8(\sqrt{n}+\sqrt{\log(T)})\right\}$, by the union bound argument and (\ref{pfeqn:9}), we have $\mathbb{P}(\mathcal{A})\ge 1-T^{-5}$. Let $\tau_2=\frac{4}{3}C_8(\sqrt{n}+\sqrt{\log(T)})$. On event $\mathcal{A}$, by Lemma 1 (applying it with $\delta=\frac{1}{3}$) in \cite{xu2017rates}, we have \[
	\sup_{0\le s^*<t\le e^*\le T}||\text{USVT}(\tilde Y_{s^*,e^*}^t,\tau_2,\infty)-\mathbb{E}\tilde Y_{s^*,e^*}^t||_{F}\le 4\tau_2\sqrt{r}.\]
	Let $\mathcal{D}$ denote the above event. Similar to (\ref{pfeqn:0}), we have 
	\begin{equation}
		\label{pfeqn:6}
		\mathbb{E} (\tilde Y_{s^*,e^*}^{v_k})_{ij}=\tilde\Delta_k\left(\frac{e^*-\eta_k}{e^*-v_k}\right)\le \tilde\Delta_k.
	\end{equation}
	Therefore, for any $1\le i,j\le n$, we have \[| (\text{USVT}(\tilde Y_{s,e}^{v_k},\tau_2,\tilde\Delta_k))_{ij}-(\mathbb{E}\tilde Y_{s,e}^{v_k})_{ij} |\le |(\text{USVT}(\tilde Y_{s,e}^{v_k},\tau_2,\infty))_{ij}-(\mathbb{E}\tilde Y_{s,e}^{v_k})_{ij}|.\]
	So on event $\mathcal{D}$, we have $||\hat Y_k-\mathbb{E}\tilde Y_{s,e}^{v_k}||_{F}\le ||\text{USVT}(\tilde Y_{s,e}^{v_k},\tau_2,\infty)-\mathbb{E}\tilde Y_{s,e}^{v_k}||_{F} \le 4\tau_2\sqrt{r}$. 
	By Assumption~\ref{assumption:local refinement} and (\ref{pfeqn:0}), we have 
	\begin{equation}
		\label{pfeqn:7}
		||\hat Y_k||_F\ge ||\mathbb{E}\tilde Y_{s,e}^{v_k}||_{F}-4\tau_2\sqrt{r}\ge \frac{\kappa_{\eta_k}\sqrt\Delta}{200\sqrt{\tau_3\log(T)}}
	\end{equation}
	when $T$ is large enough. As a result, similarly to the derivations in \cite{wang2021optimal}, 
	we have when $T$ is large enough, $\left< {\mathbb{E}\tilde Y_{s,e}^{v_k}}/{||\mathbb{E}\tilde Y_{s,e}^{v_k}||_F}, {\hat Y_k}/{||\hat Y_k||_F}\right>\ge \frac{1}{2}$. 
	Therefore, \begin{equation}
		\label{pfeqn:5}
		\left<\mathbb{E}\tilde Y_{s,e}^{v_k},\frac{\hat Y_k}{||\hat Y_k||_F}\right>\ge \frac{1}{2}||\mathbb{E}\tilde Y_{s,e}^{v_k}||_F\ge \frac{\kappa_{\eta_k}\sqrt \Delta}{400\sqrt{\tau_3\log(T)}}.
	\end{equation}
	
	\textbf{Step 4.} Let $Z(t)=A(s+1+\tau_3\log(T)(2t-1)), 1\le t\le \frac{e-s-1}{2\tau_3\log(T)}$. By Lemma~\ref{lemma:markov conditional independent}, Conditional on $\{Y(t)\}$, we get that $\{Z(t)\}$ are independent. By Bernstein inequality, for any $\varepsilon>0$, 
	\begin{align*}
		&\mathbb{P}\left(\left|\frac{1}{\sqrt{
				\frac{e-s-1}{2\tau_3\log(T)}
		}}\sum_{t=1}^{\frac{e-s-1}{2\tau_3\log(T)}}
		\left<\mathbb{E}(Z(t)|\{Y(t)\})-Z(t),\frac{\hat Y_k}{||\hat Y_k||_F}\right>\right|\ge \varepsilon \Bigg|\{Y(t)\}\right)\\
		& \le 2\exp\left(
		-\frac{1.5\varepsilon^2}{3+\varepsilon \frac{1}{\sqrt{\frac{e-s-1}{2\tau_3\log(T)}}}\frac{||\hat Y_k||_\infty}{||\hat Y_k||_F}}
		\right).
	\end{align*}
	By (\ref{pfeqn:6}), (\ref{pfeqn:7}), we get that on event $\mathcal{A}$, 
	${||\hat Y_k||_\infty}/{||\hat Y_k||_F}/{\sqrt{\frac{e-s-1}{2\tau_3\log(T)}}}\le {200\sqrt{\tau_3\log(T)}}/(\kappa_{\eta_k}\sqrt\Delta)$. 
	Notice that the right hand side of the above inequality is $o(1)$ because of Assumption~\ref{assumption:local refinement}. Setting $\varepsilon=3\log(T)$, we obtain that when $T$ is large enough, 
	\[\mathbb{P}\left(\left|\frac{1}{\sqrt{
			\frac{e-s-1}{2\tau_3\log(T)}
	}}\sum_{t=1}^{\frac{e-s-1}{2\tau_3\log(T)}}\left<\mathbb{E}(Z(t)|\{Y(t)\})-Z(t),\frac{\hat Y_k}{||\hat Y_k||_F}\right>\right|\ge 3\log(T) \Bigg| \mathcal{A}\right)\le 2T^{-4.5},
	\]
	\begin{equation}
		\label{pfeqn:10}
		\mathbb{P}\left(\left|\frac{1}{\sqrt{
				\frac{e-s-1}{2\tau_3\log(T)}
		}}\sum_{t=1}^{\frac{e-s-1}{2\tau_3\log(T)}}
		\left<\mathbb{E}(Z(t)|\{Y(t)\})-Z(t),\frac{\hat Y_k}{||\hat Y_k||_F}\right>\right|\ge 3\log(T)\right)\le 2T^{-4.5}.
	\end{equation}
	We then bound \[
	I_1=\left|\frac{1}{\sqrt{
			\frac{e-s-1}{2\tau_3\log(T)}
	}}\sum_{t=1}^{\frac{e-s-1}{2\tau_3\log(T)}}
	\left<\mathbb{E}(Z(t)|\{Y(t)\})-\mathbb{E}Z(t),\frac{\hat Y_k}{||\hat Y_k||_F}\right>\right|.
	\]
	By the derivation of (\ref{pfeqn:8}), we can use Lemma~\ref{lemma:markov tv conditional} with $\Xi^*=T^{-c_7}$ and the Cauchy-Schwarz inequality, so that
	\begin{align}
		\label{pfeqn:11}
		I_1&= \frac{1}{\sqrt{
				\frac{e-s-1}{2\tau_3\log(T)}
		}}\left|\left<\sum_{t=1}^{\frac{e-s-1}{2\tau_3\log(T)}}\left(\mathbb{E}(Z(t)|\{Y(t)\})-\mathbb{E}Z(t)\right),\frac{\hat Y_k}{||\hat Y_k||_F}\right>\right|\\
		&\le \frac{1}{\sqrt{
				\frac{e-s-1}{2\tau_3\log(T)}
		}}
		\left\|\sum_{t=1}^{\frac{e-s-1}{2\tau_3\log(T)}}\left(\mathbb{E}(Z(t)|\{Y(t)\})-\mathbb{E}Z(t)\right)\right\|_F\notag\\
		&= \frac{1}{\sqrt{
				\frac{e-s-1}{2\tau_3\log(T)}
		}} \sqrt{
			\sum_{i,j=1}^n \left(\sum_{t=1}^{\frac{e-s-1}{2\tau_3\log(T)}}(\mathbb{E}(Z_{ij}(t)|\{Y(t)\})-\mathbb{E}Z_{ij}(t)) \right)^2
		}\notag\\
		&\le \frac{1}{\sqrt{
				\frac{e-s-1}{2\tau_3\log(T)}
		}} \sqrt{
			\sum_{i,j=1}^n \left(3\Xi^*\frac{e-s-1}{2\tau_3\log(T)} \right)^2
		}=
		3n\Xi^* \sqrt{
			\frac{e-s-1}{2\tau_3\log(T)}
		}.\notag
	\end{align}
	Recall that $n\sqrt{\Delta}T^{-c_7}\le 1$,  combining (\ref{pfeqn:10}) with (\ref{pfeqn:11}),  we have that
	\[\mathbb{P}\left(\left|\frac{1}{\sqrt{
			\frac{e-s-1}{2\tau_3\log(T)}
	}}\sum_{t=1}^{\frac{e-s-1}{2\tau_3\log(T)}}
	\left<\mathbb{E}Z(t)-Z(t),\frac{\hat Y_k}{||\hat Y_k||_F}\right>\right|\ge \sqrt{
		\frac{3}{2\tau_3\log(T)}
	}+3\log(T)\right)\le 2T^{-4.5}.
	\]
	Since $\tau_3\ge 1/(-\log(\exp(-1)))\ge 1$, we have that 
	\begin{equation}
		\label{pfeqn:12}
		\mathbb{P}\left(\left|\frac{1}{\sqrt{
				\frac{e-s-1}{2\tau_3\log(T)}
		}}\sum_{t=1}^{\frac{e-s-1}{2\tau_3\log(T)}}
		\left<\mathbb{E}Z(t)-Z(t),\frac{\hat Y_k}{||\hat Y_k||_F}\right>\right|\ge 4\log(T)\right)\le 2T^{-4.5}
	\end{equation}
	when $T$ is large enough. Similar arguments  show that for $s+(e-s)/100<t\le e-(e-s)/100$,
	\begin{equation}
		\label{pfeqn:13}
		\mathbb{P}\left(\left|
		\left<\mathbb{E}\tilde Z_{s,e}^t-\tilde Z_{s,e}^t,\frac{\hat Y_k}{||\hat Y_k||_F}
		\right>\right|\ge 4\log(T)\right)\le 2T^{-4.5}.
	\end{equation}
	
	\textbf{Step 5.} Consider the one dimensional time series $a(t)=<Z(t),\frac{\hat Y_k}{||\hat Y_k||_F}>$. By Lemma S.2.4 in the supplemental material of \cite{wang2017optimal}, using (\ref{pfeqn:5}), (\ref{pfeqn:12}) and (\ref{pfeqn:13}), we have that $|b_k^*-\eta_k|\le C_{10}\frac{\log^2(T)}{\kappa_{\eta_k}^2}$, and $\sup_k|b_k^*-\eta_k|\le C_{10}\frac{\log^2(T)}{ \kappa_2^2}$, where $b_k^*$ is $\arg\max_t<\tilde Z(t),\hat Y_k>$ and $\tilde Z(t)$ is the CUSUM statistic on $Z(t)$. Recall that $Z(t)=A(s+1+\tau_3\log(T)(2t-1))$. So we get 
	$\sup_k|\hat\eta_k^*-\eta_k|\le C_{4}^*{c_7\log^3(T)}/(\log^*(1/\Xi)\kappa_2^2)$ when $T$ is large enough.  Then the result follows by noticing that $\sup_k|\hat\eta_k-\eta_k|=O(\sup_k|\hat\eta_k^*-\eta_k|)$.
\end{proof}

\begin{proof}[Proof of Theorem~\ref{th:network refinement sbm}]
	By tracking the proof of Theorem~\ref{th:Theoretical result of local refinement}, it suffices to show that Step 3 in the proof of Theorem~\ref{th:Theoretical result of local refinement} still holds. Moreover, the entire derivation is similar to the proof of Theorem 3 in \cite{wang2021optimal}, so we omit the proof.
\end{proof}

\begin{proof}[Proof of Theorem~\ref{the:gmm}]
	\label{pf:th2}
	To simplify the notations, throughout this proof we  set $\eta_0=1$ and $\eta_{K+1}=T+1$. 
	We first prove that under the conditions of Theorem~\ref{the:gmm}, we have
	\begin{enumerate}
		\item $\bb E\left(\max_t||\bb{E}\tilde{A}_{s_m,e_m}^t||_{op}\bigg|\{|(s_m,e_m]\cap \eta|=0\},\mathcal{A}\right)=0$.
		\item $\bb E\left(\max_t||\bb{E}\tilde{A}_{s_m,e_m}^t||_{op}\bigg|\{|(s_m,e_m]\cap \eta|=0\},\mathcal{A}\right)>b$.
		\item			$\bb P\left(\bigcup_{m=1}^M\left\{0<\max_t||\bb{E}\tilde{A}_{s_m,e_m}^t||_{op}< b\right\}\bigg|\mathcal{A}\right)\to 0.$
	\end{enumerate}
	Then we prove the theorem itself.

	Since $\{(s_l,e_l]\}_{l=1}^M$ are i.i.d. sampled, it suffices to consider one particular random $(s_m,e_m]$ for some $m$. 
	Let
	$$
	s_m+1<\eta_r<\eta_{r+1}<\cdots<\eta_{r+q^*} \le e_m
	$$
	where $\eta_k, k=r,\cdots,r+q^*$ are change-points and $q^*\ge -1$ ($q^*= -1$ means there is no change-point in $(s_m,e_m]$). 
	Notice that $q^*$ is determined by $s_m$ and $e_m$, so the above sets are random. 
	Define events
	$$\mathcal{A}_1=\{q^*=-1\},
	\mathcal{A}_2^\gamma=\bigcup_{1\le r\le K}\{q^*=0, (\eta_r-s_m-1)\wedge(e_m+1-\eta_r)\le \gamma\},$$
	$$\mathcal{A}_3^\gamma=\bigcup_{1\le r\le K-1}\{q^*=1, (\eta_r-s_m-1)\vee(e_m+1-\eta_{r+1})\le \gamma\},$$
	where $\gamma<\Delta$. 
	$\mathcal{A}_1$ means that there is no change-point in the interval $(s_m,e_m]$.  $\mathcal{A}_2^\gamma$ indicates that there is only one change-point in $(s_m,e_m]$ and this change-point is close to at least one of the  endpoints. $\mathcal{A}_3^\gamma$ indicates that there are two change-points in $(s_m,e_m]$ and both of them are close to the endpoints.  
	Since the interval $(s_m,e_m]$ is sampled uniformly, we can directly calculate the probabilities of some of these events.  
	For $\mathcal{A}_1$, we have
	\begin{equation}
		\label{eqn:PA1}
		\mathbb{P}(\mathcal{A}_1)=\sum_{k=1}^{K+1} \left(\frac{\eta_k-\eta_{k-1}}{T} \right)^2=\frac{1}{T^2}\sum_{k=1}^{K+1}(\eta_k-\eta_{k-1})^2.
	\end{equation}
	To bound (\ref{eqn:PA1}), we observe that
	\[\frac{1}{T^2}\sum_{k=1}^{K+1}(\eta_k-\eta_{k-1})^2\ge \frac{1}{T^2}\frac{1}{K+1}\left(\sum_{k=1}^{K+1}(\eta_k-\eta_{k-1})\right)^2=\frac{1}{K+1}.\]
	For the upper bound, since $\sum_{k=1}^{K+1}(\eta_k-\eta_{k-1})=T$ and $\eta_k-\eta_{k-1}\ge \Delta$, we derive that
	\begin{align*}
		\frac{1}{T^2}\sum_{k=1}^{K+1}(\eta_k-\eta_{k-1})^2&=\frac{1}{T^2}\left(\sum_{k=1}^{K+1}(\eta_k-\eta_{k-1}-\Delta)^2-(K+1)\Delta^2+2\Delta T\right)\\
		&\le \frac{1}{T^2}\left(\left(\sum_{k=1}^{K+1}(\eta_k-\eta_{k-1})-(K+1)\Delta\right)^2-(K+1)\Delta^2+2\Delta T\right)\\
		&= \frac{1}{T^2}\left((T-(K+1)\Delta)^2-(K+1)\Delta^2+2\Delta T\right)\\
		&= \frac{1}{T^2}(K\Delta^2+(T-K\Delta)^2).
	\end{align*}
	Therefore, 
	\begin{equation}
		\label{ieqn:PA1}
		\frac{1}{K+1}\le   \mathbb{P}(\mathcal{A}_1) \le \frac{1}{T^2}(K\Delta^2+(T-K\Delta)^2) .
	\end{equation}
	As for $\mathcal{A}_2^\gamma$, we have 
	\begin{equation}
		\label{eqn:PA2}
		\mathbb{P}(\mathcal{A}_2^\gamma)=\frac{\gamma}{T^2}\sum_{k=1}^{K}(\eta_{k+1}-\eta_{k-1})\le \frac{2\gamma}{T}.
	\end{equation}
	For $\mathcal{A}_3^\gamma$, similarly we have
	\begin{equation}
		\label{eqn:PA3}
		\mathbb{P}(\mathcal{A}_3^\gamma)=\frac{\gamma^2}{T^2}(K-1).
	\end{equation}

	\textbf{Proof of (a).} 
	On $\mathcal{A}_1$, i.e., $|(s_m,e_m]\cap\eta|=0$, from proof of Theorem \ref{th:K} and the fact that $\mathbb{E}\tilde{A}_{s_m,e_m}^t=0$, we have $\max_t||\mathbb{E}\tilde{A}_{s_m,e_m}^t||_{op}=0$, \[\bb E\left(\max_t||\bb{E}\tilde{A}_{s_m,e_m}^t||_{op} \bigg|\{|(s_m,e_m]\cap \eta|=0\},\mathcal{A}\right)=0.\]
	
	\textbf{Proof of (b).}  On the event $\{q^*\ge 1\}\cap ( \mathcal{A}_3^\gamma)^c$, there exists a change-point $\eta_k$ such that any other change-point $\eta_{k^{'}}$ and the endpoints $s_m,e_m$ satisfy $\min\{\eta_k-s_m-1,e_m+1-\eta_k,|\eta_k-\eta_{k^{'}}|\}\ge \gamma$. By Lemma \ref{lem:lem8}, $$
	\max_t||\bb{E}\tilde{A}_{s_m,e_m}^t||_{op}
	\ge  \frac{1}{4\sqrt{e_m-s_m}}\kappa\gamma\ge \frac{1}{4\sqrt{T}}\kappa\gamma .
	$$
	Recall that $\tilde{g}_{s_m,e_m}^t$ is the CUSUM statistic, therefore, on the event $\{q^*= 0\}\cap ( \mathcal{A}_2^{\gamma^2/T})^c$, direct calculations show that
	\begin{align*}
		\max_t ||\mathbb{E}\tilde{A}_{s_m,e_m}^t||_{op} \ge \sqrt{\frac{\gamma^2/T}{2}}\kappa\ge \frac{1}{4\sqrt{T}}\kappa\gamma .
	\end{align*}
	As a result, on the event $\mathcal{A}_4:=\{q^*\ge0\}\cap ( \mathcal{A}_2^{\gamma^2/T})^c\cap( \mathcal{A}_3^\gamma)^c$, we have $\max_t ||\mathbb{E}\tilde{A}_{s_m,e_m}^t||_{op}\ge \frac{1}{4\sqrt{T}}\kappa \gamma .$ We derive that  
	\begin{align*}
		&\bb E\left(\max_t||\mathbb{E}\tilde{A}_{s_m,e_m}^t||_{op}\bigg|\{|(s_m,e_m]\cap \eta|>0\},\mathcal{A}\right)\\
		&\ge \frac{1}{4\sqrt{T}}\kappa\gamma \frac{\bb P\left(\{q^*\ge0\}\cap ( \mathcal{A}_2^{\gamma^2/T})^c\cap( \mathcal{A}_3^\gamma)^c\right)}{\bb P(q^*\ge0)}\\
		&\ge \frac{1}{4\sqrt{T}}\kappa\gamma \left(1-\frac{2\gamma^2/T^2+\gamma^2(K-1)/T^2}{1-\frac{1}{T^2}(K\Delta^2+(T-K\Delta)^2)
		}\right)\\
		&\ge  \frac{1}{4\sqrt{T}}\kappa\gamma \left(1-\frac{\gamma^2(K+1)}{K\Delta(2T-(K+1)\Delta)}\right)\\
		&\ge  \frac{1}{4\sqrt{T}}\kappa\gamma \left(1-\frac{2\gamma^2}{\Delta T}\right).
	\end{align*}
	We set $\gamma=\frac{8b\sqrt{T}}{\kappa}$ which is of order $o(\Delta)$ since $b=o(\kappa\Delta/\sqrt{T})$.  Therefore, \\$\bb E\left(\max_t||\mathbb{E}\tilde{A}_{s_m,e_m}^t(X)||_{op}\bigg|\{|(s_m,e_m]\cap \eta|>0\},\mathcal{A}\right)\ge \frac{1}{6\sqrt{T}}\kappa\gamma>b$. 
	
	\textbf{Proof of (c).} Define the event 
	\[\mathcal{A}_5:=\left\{(\mathcal{A}_2^{\gamma^2/T})^c\cap( \mathcal{A}_3^\gamma)^c\text{ for all } 1\le m\le M\right\}.\] Note that $( \mathcal{A}_2^{\gamma^2/T})^c\cap( \mathcal{A}_3^\gamma)^c=\left(\{q^*\ge0\}\cap ( \mathcal{A}_2^{\gamma^2/T})^c\cap( \mathcal{A}_3^\gamma)^c\right)\cup \{q^*=-1\}$. 
	Then $\bb P(\mathcal{A}_5)\ge \left(1-\frac{\gamma^2(K+1)}{T^2}\right)^M\ge 1-\frac{M\gamma^2(K+1)}{T^2}\to 1$ since $\gamma=o(T/\sqrt{KM})$ by the above construction. On $\mathcal{A}_5$, we have $\max_t||\bb E\tilde{A}_{s_m,e_m}^t||_{op}\in 0\cup [b,\infty)$  for all $1\le m\le M$. Therefore, 
	\begin{align*}
		\bb P\left(\bigcup_{m=1}^M \left\{0<\max_t\left\|\bb E\tilde{A}_{s_m,e_m}^t\right\|_{op}<b\right\}\bigg|\mathcal{A}\right)\to 0.
	\end{align*}
	
	\textbf{Proof of $\bb P(\mathcal{A})\ge 1-T^{-1}$.}  It has been proved in (\ref{eqn:ea}) in the proof of Theorem~\ref*{th:K}.
	
	\textbf{Proof of (1-3) in Theorem~\ref{the:gmm}.} On event $\mathcal{A}$, we have $\left|
	||\tilde A_{s,e}^t||_{op}-||\bb{E}\tilde A_{s,e}^t||_{op} 
	\right|\le  C_3(\sqrt{\log(T)}/(1-\sqrt{\Xi})+\sqrt{n}/\sqrt{\log^*(1/\Xi)})$  for all $s<t<e, (s,e]\subset (0,T]$. Moreover, by Assumption \ref{ass:gmm} we can choose a $b$ such that  $C_3(\sqrt{\log(T)}/(1-\sqrt{\Xi})+\sqrt{n}/\sqrt{\log^*(1/\Xi)})=o(b)$. Then the results follow. Detailedly, taking the proof of (3) as an example, we derive that on event $
	\mathcal{A}$, 
	\begin{align*}
		&\bigcup_{m=1}^M \left\{C_3(\sqrt{\log(T)}/(1-\sqrt{\Xi})+\sqrt{n}/\sqrt{\log^*(1/\Xi)})
		<\max_t|| \tilde{A}_{s_m,e_m}^t||_{op}< b/2\right\}\\
		&\subset \bigcup_{m=1}^M \left\{0<\max_t|| \tilde{A}_{s_m,e_m}^t||_{op}<b\right\}.
	\end{align*}
	Therefore, 
	\[\bb P\left(\bigcup_{m=1}^M \left\{C_3(\sqrt{\log(T)}/(1-\sqrt{\Xi})+\sqrt{n}/\sqrt{\log^*(1/\Xi)})
	<\max_t|| \tilde{A}_{s_m,e_m}^t||_{op}< b/2\right\}\bigg|\mathcal{A}\right)\to 0.\]
\end{proof}

\section{Additional lemmas}
\label{suppsec:Additional lemmas}
\begin{lemma}
	\label{th:alg correctness}
	For any set $S=\{[u_m,v_m]\}_{j=1}^{M^*}$ where $ M^*$ is an integer, applying line 8-22 in Algorithm~\ref{alg:1} to $S$, then the output $\hat K$ and $S^{*}$ satisfy $l_j<r_{j}$ for $1\le j\le \hat K$.
\end{lemma}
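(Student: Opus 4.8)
The plan is to recognize lines 8--14 and 15--21 of Algorithm~\ref{alg:1} as two runs of the classical earliest-endpoint greedy for interval scheduling, and then to deduce $l_j<r_j$ from a one-line maximality (pigeonhole) argument. Throughout I write each member of $S$ as a half-open interval $(u,v]$ and call two intervals \emph{overlapping} if their intersection is nonempty. Let $K^*$ denote the maximum number of pairwise disjoint intervals that can be chosen from $S$.

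First I would analyze the first loop. At each iteration it takes the smallest surviving right endpoint $r_i$, selects the tightest interval ending there, namely $R_i=(u^*,r_i]$ with $u^*$ the largest admissible left endpoint, and then deletes every interval overlapping $R_i$. A short check shows the survivors after deleting $R_i$ all have left endpoint $\ge r_i$ (an interval with right endpoint $r_i$ but smaller left endpoint overlaps $R_i$ and is removed), so the selected $R_1,\dots,R_{\hat K_0}$ are pairwise disjoint with strictly increasing right endpoints $r_1<\cdots<r_{\hat K_0}$; the standard interval-scheduling exchange argument gives $\hat K_0=K^*$. Reflecting the time axis $t\mapsto -t$ turns the second loop into the first loop applied to the reflected intervals, so it likewise returns a maximum disjoint subfamily $L_1,\dots,L_{\hat K}$ (with $L_i=(l_i,b_i]$) whose left endpoints are strictly increasing, $l_1<\cdots<l_{\hat K}$, and with $\hat K=K^*=\hat K_0$. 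This also recovers the text's remark that the two loops return equally many intervals, so that $l_j$ and $r_j$ are both defined for $1\le j\le \hat K$.

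The crux is the final step. Fix $j$ and suppose, for contradiction, that $l_j\ge r_j$. Since the right endpoints of the first family increase, $R_1,\dots,R_j$ all have every point $\le r_j$; since the left endpoints of the second family increase, $L_j,\dots,L_{\hat K}$ all have left endpoint $\ge l_j\ge r_j$, hence every point $>r_j$. Thus the $R_i$ lie in $(-\infty,r_j]$ and the $L_i$ lie in $(r_j,\infty)$, so these two groups are mutually disjoint and, together, form $j+(\hat K-j+1)=K^*+1$ pairwise disjoint intervals of $S$, contradicting the definition of $K^*$. Hence $l_j<r_j$ for every $1\le j\le \hat K$.

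I expect the only non-bookkeeping difficulties to be (i) justifying that each loop, \emph{with its particular tie-breaking} (smallest right endpoint then largest left endpoint, and the mirror image), still returns a disjoint family of the maximum size $K^*$ and strictly monotone endpoints, and (ii) checking the cross-disjointness of the two groups in the last step; the counting contradiction itself is the clean heart of the argument and needs no computation.
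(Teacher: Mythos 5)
Your proposal is correct and follows essentially the same route as the paper: the paper's proof is exactly your final pigeonhole step, assuming $r_j\le l_j$ and exhibiting $j$ disjoint intervals in $[1,r_j]$ plus $\hat K+1-j$ in $[l_j,T]$, contradicting that $\hat K$ is the maximum number of disjoint intervals in $S$. The only difference is that you also spell out the greedy interval-scheduling argument showing $\hat K_0=\hat K=K^*$ and the strict monotonicity of endpoints, facts the paper states as straightforward in the main text rather than proving inside the lemma.
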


\begin{proof}[Proof of Lemma~\ref{th:alg correctness}]
	We show by contradiction. Assume that $r_{j}\le l_j$ for some $j$. By construction, there are two disjoint intervals in $[1,r_2]$ because we drop the intervals whose left endpoints are less than $r_1$ in line 12. Therefore, we have $l_2>r_1$. Similarly, there are $j$ disjoint intervals in $[1,r_{j}]$ and $\hat K+1-j$ disjoint intervals in $[l_j,T]$. This leads to a fact that the number of disjoint intervals in $[1,T]$ is $\hat K+1$, which is impossible since $\hat K$ is the maximum number of disjoint intervals in $[1,T]$. 
\end{proof}

\begin{lemma}
	\label{lem:lemma7}
	Define the event
	\begin{equation}
		\label{eqn:M}
		\mathcal{M}=\bigcap_{k=1}^K \left\{\exists (s_m,e_m)\ s.t.\ \eta_k-\frac{1}{4}\Delta < s_m < \eta_k-\frac{1}{8}\Delta,\ \eta_k+\frac{1}{8}\Delta < e_m < \eta_k+\frac{1}{4}\Delta \right\}.
	\end{equation}
	We have$$
	\mathbb{P}(\mathcal{M})\ge 1-\frac{T}{\Delta}\exp\left(-\frac{M\Delta^2}{32T^2} \right).$$
\end{lemma}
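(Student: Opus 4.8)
The plan is to treat this as a standard coverage lemma of the Wild-Binary-Segmentation type: I fix one change-point, lower bound the chance that a single random interval has both generating endpoints landing in the two prescribed windows, multiply out over the $M$ independent intervals, and then take a union bound over the $K$ change-points, finally converting $K$ into $T/\Delta$.

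First I would fix $1 \le k \le K$ and introduce the two target windows
\[
L_k = \left(\eta_k - \tfrac{1}{4}\Delta,\ \eta_k - \tfrac{1}{8}\Delta\right), \qquad
R_k = \left(\eta_k + \tfrac{1}{8}\Delta,\ \eta_k + \tfrac{1}{4}\Delta\right),
\]
each of length $\Delta/8$ and disjoint, with $L_k$ lying entirely to the left of $R_k$. Recall from Line 2 of Algorithm~\ref{alg:1} that a random interval is generated by drawing two points $s,e$ independently and uniformly on $\{1,\dots,T\}$ and setting $s_m = s\wedge e$, $e_m = s\vee e$. Because $L_k$ sits to the left of $R_k$, the interval is \emph{good for $\eta_k$} --- meaning $s_m \in L_k$ and $e_m \in R_k$, exactly the event in \eqref{eqn:M} --- if and only if one of the two draws falls in $L_k$ and the other in $R_k$; the $\min/\max$ operation then automatically places them in the correct order.

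Next I would compute the single-interval success probability. Since the two draws are independent and uniform and each window contains (up to negligible integer rounding) $\Delta/8$ admissible positions, and since there are two orderings by which one draw lands in $L_k$ and the other in $R_k$,
\[
p := \mathbb{P}\big(\text{a random interval is good for }\eta_k\big)
= 2\cdot \frac{|L_k|}{T}\cdot\frac{|R_k|}{T}
\ge \frac{2}{T^2}\Big(\frac{\Delta}{8}\Big)^2 = \frac{\Delta^2}{32T^2}.
\]
The $M$ intervals being i.i.d., the probability that none is good for $\eta_k$ is at most $(1-p)^M \le \exp(-Mp) \le \exp(-M\Delta^2/(32T^2))$, using $1-x\le e^{-x}$.

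Finally I would union bound over the $K$ change-points and convert $K$ to $T/\Delta$: writing $\eta_0=1$, $\eta_{K+1}=T+1$, the segments satisfy $\eta_k-\eta_{k-1}\ge \Delta$ and $\sum_{k=1}^{K+1}(\eta_k-\eta_{k-1}) = T$, so $(K+1)\Delta \le T$ and hence $K \le T/\Delta$. Therefore
\[
\mathbb{P}(\mathcal{M}^c) \le \sum_{k=1}^K \exp\!\left(-\frac{M\Delta^2}{32T^2}\right) = K\exp\!\left(-\frac{M\Delta^2}{32T^2}\right) \le \frac{T}{\Delta}\exp\!\left(-\frac{M\Delta^2}{32T^2}\right),
\]
which is the claim. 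The argument is elementary; the only delicate point --- and the main thing to get exactly right --- is the constant $1/32$, which hinges on the factor $2$ from the two endpoint orderings together with the window length $\Delta/8$, and on checking that rounding $\Delta/4$ and $\Delta/8$ to integer positions does not erode this constant. The conversion $K\le T/\Delta$ is the other step not to overlook.
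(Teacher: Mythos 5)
Your proposal is correct and follows essentially the same route as the paper's proof: a union bound over the $K$ change-points, the single-interval success probability $2(\Delta/8)^2/T^2 = \Delta^2/(32T^2)$ obtained from the two independent uniform draws (with the factor $2$ for the two endpoint orderings, which the paper encodes via $s\wedge e$ and $s\vee e$), the bound $(1-p)^M \le \exp(-Mp)$, and the conversion $K \le T/\Delta$. The discretization issue you flag (integer counts in windows of length $\Delta/8$) is glossed over in the paper as well, so your treatment matches the paper's level of rigor.
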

\begin{proof}
	\begin{align}
		\mathbb{P}(\mathcal{M}^c)\le & \sum_{k=1}^K \prod_{m=1}^M \left( 1- \mathbb{P}\left(\eta_k-\frac{1}{4}\Delta < s_m< \eta_k-\frac{1}{8}\Delta,\ \eta_k+\frac{1}{8}\Delta <  e_m < \eta_k+\frac{1}{4}\Delta \right) \right)\notag\\
		= & \sum_{k=1}^K \prod_{m=1}^M \left( 1- \mathbb{P}\left(\eta_k-\frac{1}{4}\Delta < s\wedge e < \eta_k-\frac{1}{8}\Delta,\ \eta_k+\frac{1}{8}\Delta < s\vee e < \eta_k+\frac{1}{4}\Delta \right) \right)\notag\\
		=& K\left(1-2\left(\frac{\Delta}{8T}\right)^2 \right)^M
		\le \frac{T}{\Delta}\exp\left(-\frac{M\Delta^2}{32T^2} \right),\notag
	\end{align}
	where $s,e$ are two i.i.d. random variables that follow the discrete uniform distribution on $\{1,\cdots,T\}$ and the last inequality follows from the fact that $2\left(\frac{\Delta}{8T}\right)^2<1/32$.
\end{proof}

\begin{lemma}
	\label{lemma:markov conditional independent}
	Assume that $\{X_i\}_{i=1}^{2n+1}, n \ge 1$ is a discrete Markov chain with state space $\mathcal{S}$. Then conditional on $X_1,X_3,\cdots,X_{2n+1}$, the sequence $X_2,X_4,\cdots,X_{2n}$ is independent.
\end{lemma}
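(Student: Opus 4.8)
The plan is to exploit the product factorization of the joint law of a Markov chain and observe that, once all odd-indexed variables are fixed, the conditional joint mass function of the even-indexed variables separates into a product of one-variable factors. Concretely, by the Markov property the joint probability mass function is
\begin{equation*}
\mathbb{P}(X_1 = x_1, \ldots, X_{2n+1} = x_{2n+1}) = \mathbb{P}(X_1 = x_1)\prod_{i=2}^{2n+1}\mathbb{P}(X_i = x_i \mid X_{i-1} = x_{i-1}).
\end{equation*}

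First I would track where each even-indexed coordinate $x_{2k}$ ($1 \le k \le n$) appears. Since every transition factor $\mathbb{P}(X_i = x_i \mid X_{i-1} = x_{i-1})$ joins consecutive indices, one even and one odd, the coordinate $x_{2k}$ enters exactly two factors: the jump into it, $\mathbb{P}(X_{2k} = x_{2k} \mid X_{2k-1} = x_{2k-1})$, and the jump out of it, $\mathbb{P}(X_{2k+1} = x_{2k+1} \mid X_{2k} = x_{2k})$. Both neighbors are odd-indexed, hence fixed by the conditioning. Grouping terms, with the odd coordinates held fixed the joint mass function equals $\mathbb{P}(X_1 = x_1)\prod_{k=1}^n g_k(x_{2k})$, where
\begin{equation*}
g_k(x_{2k}) := \mathbb{P}(X_{2k} = x_{2k}\mid X_{2k-1} = x_{2k-1})\,\mathbb{P}(X_{2k+1} = x_{2k+1}\mid X_{2k} = x_{2k})
\end{equation*}
depends on the even coordinates only through the single entry $x_{2k}$.

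Next I would form the conditional mass function by dividing by $\mathbb{P}(X_1 = x_1, X_3 = x_3, \ldots, X_{2n+1} = x_{2n+1})$, which is the numerator summed over all even coordinates. Because the numerator factorizes across $k$, this sum equals $\mathbb{P}(X_1 = x_1)\prod_{k=1}^n \big(\sum_{y\in\mathcal{S}} g_k(y)\big)$. Cancelling the common factor $\mathbb{P}(X_1 = x_1)$, the conditional mass function of $(X_2, X_4, \ldots, X_{2n})$ becomes
\begin{equation*}
\mathbb{P}(X_2 = x_2, \ldots, X_{2n} = x_{2n}\mid X_1 = x_1, \ldots, X_{2n+1} = x_{2n+1}) = \prod_{k=1}^n \frac{g_k(x_{2k})}{\sum_{y\in\mathcal{S}} g_k(y)}.
\end{equation*}
Summing this product over all even coordinates but $x_{2k}$ collapses every other factor to $1$, so the $k$-th factor is exactly the conditional marginal law of $X_{2k}$ given the odd coordinates; the displayed identity is therefore precisely the factorization of the joint conditional into its conditional marginals, i.e.\ conditional independence.

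I do not anticipate a genuine obstacle, as the argument is a clean bookkeeping of the Markov factorization; the only point requiring care is conditioning events of probability zero, where $g_k$ or the normalizing sums may vanish. This is dispatched by restricting attention to the support of $(X_1, X_3, \ldots, X_{2n+1})$, on which the relevant denominators are strictly positive, so that the conditional law is well defined and the factorization holds verbatim.
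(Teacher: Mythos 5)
Your proposal is correct and rests on the same underlying idea as the paper's proof: the chain-rule factorization $\mathbb{P}(X_1=x_1)\prod_{i=2}^{2n+1}\mathbb{P}(X_i=x_i\mid X_{i-1}=x_{i-1})$, which, once the odd coordinates are fixed, separates the even coordinates into single-variable factors. The only difference is organizational — the paper peels off $X_2$ explicitly and concludes by induction, whereas you factor the full conditional law at once into $\prod_{k=1}^n g_k(x_{2k})/\sum_{y\in\mathcal{S}}g_k(y)$ and identify each factor as a conditional marginal — a mild streamlining of the same argument, with your remark on restricting to the support of the odd coordinates correctly handling the division-by-zero issue that the paper leaves implicit.
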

\begin{proof}
	For any $x_1,x_2,\cdots,x_{2n+1}\in \mathcal{S}$,
	\begin{align*}
		& \mathbb{P}(X_2=x_2,X_4=x_4,\cdots,X_{2n}=x_{2n}|X_1=x_1,X_3=x_3,\cdots,X_{2n+1}=x_{2n+1})\\
		&=\frac{\mathbb{P}(X_{2n+1}= x_{2n+1}|X_{2n}=x_{2n})\cdots \mathbb{P}(X_{2}=x_2|X_{1}=x_1)\mathbb{P}(X_1=x_1)}{\mathbb{P}(X_{2n+1}=x_{2n+1}|X_{2n-1}=x_{2n-1})\cdots \mathbb{P}(X_{3}=x_3|X_{1}=x_1)\mathbb{P}(X_1=x_1)}.\\
		&   \mathbb{P}(X_2=x_2|X_1=x_1,X_3=x_3,\cdots,X_{2n+1}=x_{2n+1})\\
		&=\frac{\mathbb{P}(X_{2n+1}=x_{2n+1}|X_{2n-1}=x_{2n-1})\cdots \mathbb{P}(X_{5}=x_5|X_{3}=x_3)\mathbb{P}(X_{3}=x_3|X_{2}=x_2)\mathbb{P}(X_{2}=x_2|X_{1}=x_1)}{\mathbb{P}(X_{2n+1}|X_{2n-1})\cdots \mathbb{P}(X_{3}=x_3|X_{1}=x_1)}\\
		&=
		\frac{\mathbb{P}(X_{3}=x_3|X_{2}=x_2)\mathbb{P}(X_{2}=x_2|X_{1}=x_1)}{\mathbb{P}(X_{3}=x_3|X_{1}=x_1)}.\\
		& \mathbb{P}(X_4=x_4,\cdots,X_{2n}=x_{2n}|X_1=x_1,X_3=x_3,\cdots,X_{2n+1}=x_{2n+1})\\
		&=\frac{\mathbb{P}(X_{2n+1}=x_{2n+1}|X_{2n}=x_{2n})\cdots \mathbb{P}(X_{5}=x_5|X_{4}=x_4)\mathbb{P}(X_{4}=x_4|X_{3}=x_3)}{\mathbb{P}(X_{2n+1}=x_{2n+1}|X_{2n-1}=x_{2n-1})\cdots \mathbb{P}(X_{5}=x_5|X_{3}=x_3)}.
	\end{align*}
	Therefore, conditional on $X_1,X_3,\cdots,X_{2n+1}$, $X_2$ and $(X_4,\cdots,X_{2n})$ are independent. By induction we get the desired result. 
\end{proof}

\begin{lemma}
	\label{lemma:markov tv conditional}
	Assume that $X_1,X_2,X_3$ is a Markov chain taking values in $\{0,1\}$. Let\\ $\Xi^*=\max_{x\in\{0,1\},i=1,2}|\mathbb{P}(X_{i+1}=x|X_i=0)-\mathbb{P}(X_{i+1}=x|X_i=1)|$. Assume that $\Xi^*\le \frac{1}{2}, \Xi^*\le \mathbb{P}(X_i=1)\le 1-\Xi^*, i=1,2,3$, then for any $x_1,x_3\in\{0,1\}$, \[|\mathbb{P}(X_2=1|X_1,X_3)-\mathbb{P}(X_2=1)|\le 3\Xi^*.\]
\end{lemma}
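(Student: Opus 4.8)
The plan is to reduce everything to the one-step transition probabilities of the chain and then control two separate gaps by the triangle inequality. Write $p=\mathbb{P}(X_2=1)$, $a=\mathbb{P}(X_2=1\mid X_1=x_1)$, and $b_i=\mathbb{P}(X_3=x_3\mid X_2=i)$ for $i\in\{0,1\}$. Since $X_1\to X_2\to X_3$ is Markov, $\mathbb{P}(X_1=x_1,X_2=x_2,X_3=x_3)=\mathbb{P}(X_1=x_1)\,\mathbb{P}(X_2=x_2\mid X_1=x_1)\,\mathbb{P}(X_3=x_3\mid X_2=x_2)$, so Bayes' rule yields
\[
c:=\mathbb{P}(X_2=1\mid X_1=x_1,X_3=x_3)=\frac{a\,b_1}{a\,b_1+(1-a)\,b_0}.
\]
I would then bound the target through $|c-p|\le|c-a|+|a-p|$, treating the two terms separately.

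The term $|a-p|$ is the easy one. Because $p=\mathbb{P}(X_2=1\mid X_1=0)\,\mathbb{P}(X_1=0)+\mathbb{P}(X_2=1\mid X_1=1)\,\mathbb{P}(X_1=1)$ is a convex combination of the two one-step probabilities, and these differ by at most $\Xi^*$ by definition, we get $|a-p|\le\Xi^*$ directly.

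The term $|c-a|$ requires the genuine work. A direct computation gives the identity
\[
c-a=\frac{a(1-a)(b_1-b_0)}{a\,b_1+(1-a)\,b_0},
\]
whose numerator is controlled by $a(1-a)\le 1/4$ and $|b_1-b_0|\le\Xi^*$. The denominator is exactly $\mathbb{P}(X_3=x_3\mid X_1=x_1)$, so the crux is to keep it bounded away from $0$, and this is where the marginal hypothesis $\mathbb{P}(X_i=1)\in[\Xi^*,1-\Xi^*]$ must enter. Repeating the convex-combination estimate one level further gives the clean identity $\mathbb{P}(X_3=x_3\mid X_1=x_1)-\mathbb{P}(X_3=x_3)=(b_1-b_0)(a-p)$, so this difference is at most $(\Xi^*)^2$ in absolute value; combined with $\mathbb{P}(X_3=x_3)\ge\Xi^*$ (valid for either value of $x_3$) this produces the lower bound $\mathbb{P}(X_3=x_3\mid X_1=x_1)\ge\Xi^*-(\Xi^*)^2=\Xi^*(1-\Xi^*)\ge\Xi^*/2$.

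I expect the denominator estimate to be the main obstacle, and in fact the subtle point of the whole argument. The crude numerator bound $a(1-a)\le 1/4$ paired with a denominator of size $\Xi^*$ only yields $|c-a|=O(1)$, not $O(\Xi^*)$: informally, when the event $\{X_3=x_3\}$ has probability as small as the per-step fluctuation $\Xi^*$, observing it is genuinely informative about $X_2$ even for a nearly independent chain. To recover the stated $3\Xi^*$ one must exploit that the numerator's $|b_1-b_0|$ is controlled by the \emph{actual} one-step variation while the denominator is controlled by the \emph{marginal} mass, and that in the regime of interest the latter dominates the former. Concretely, I would re-examine the lemma as a two-parameter statement — per-step variation $\xi$ and marginal floor $m$ — giving $|c-a|\lesssim \xi/m$, which collapses to an $O(\Xi^*)$ bound precisely when $m$ is bounded below by a quantity strictly larger than $\xi$; this is exactly the situation in the application, where the subsampled chain has one-step variation $\le T^{-2c_7}$ while the marginals stay above $T^{-c_7}$. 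So the careful tracking of these two scales, rather than any single blunt inequality, is what makes the final constant go through.
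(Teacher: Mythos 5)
Your setup coincides exactly with the paper's own proof: your Bayes ratio $c=\frac{a b_1}{a b_1+(1-a)b_0}$ is the paper's $A/(A+B)$, your convex-combination bound $|a-p|\le \Xi^*$ is their inequality $|\mathbb{P}(X_{i+1}=x_{i+1}|X_i=x_i)-\mathbb{P}(X_{i+1}=x_{i+1})|\le \Xi^*\mathbb{P}(X_i=1-x_i)$, and your identity $\mathbb{P}(X_3=x_3|X_1=x_1)-\mathbb{P}(X_3=x_3)=(b_1-b_0)(a-p)$ is the (sign-corrected) version of their lower bound on $A+B$. Where you part ways with the paper is the endgame, and you are right to do so: your diagnosis that this route only yields $|c-a|=O(1)$ when $\mathbb{P}(X_3=x_3)$ is as small as $\Xi^*$ is not merely an obstruction to one strategy --- the lemma \emph{as stated is false}. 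Take $X_1$ independent of $X_2$ with $\mathbb{P}(X_1=1)=\mathbb{P}(X_2=1)=1/2$, and $\mathbb{P}(X_3=1|X_2=1)=3\Xi^*/2$, $\mathbb{P}(X_3=1|X_2=0)=\Xi^*/2$. Then the step variation is exactly $\Xi^*$, and $\mathbb{P}(X_3=1)=\Xi^*$ sits at the allowed boundary of the marginal hypothesis, so all assumptions hold; yet $\mathbb{P}(X_2=1|X_1=x_1,X_3=1)=\frac{(1/2)(3\Xi^*/2)}{(1/2)(3\Xi^*/2)+(1/2)(\Xi^*/2)}=\frac{3}{4}$, giving deviation $1/4>3\Xi^*$ whenever $\Xi^*<1/12$. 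The paper's proof breaks precisely where you predicted: the passage from the displayed ratio $\frac{(\mathbb{P}(X_3=x_3)+\Xi^*(1-\theta))(\theta+\Xi^*\mathbb{P}(X_1=1-x_1))}{\mathbb{P}(X_3=x_3)+\Xi^{*2}\mathbb{P}(X_1=1-x_1)}$ to $\theta+\Xi^*\frac{\theta(1-\theta)+(1-\Xi^*)(1-2\theta\Xi^*)}{(1+\Xi^{*2})(1-\Xi^*)}$ silently treats $\mathbb{P}(X_3=x_3)$ as order one; at the boundary $\mathbb{P}(X_3=x_3)=\Xi^*$ the ratio is approximately $\theta(2-\theta)$, which reproduces the $3/4$ of the counterexample.

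Your two-parameter repair is the correct statement, and your outline does prove it: with per-step variation $\xi$ and marginal floor $m$ satisfying $\xi\le m$ and $\xi\le 1/2$, the denominator obeys $\mathbb{P}(X_3=x_3|X_1=x_1)\ge m-\xi^2\ge m(1-\xi)\ge m/2$ (alternatively $\min(b_0,b_1)\ge m-\xi$ works), whence $|c-p|\le \frac{\xi}{2m}+\xi$. This is also exactly what the downstream argument consumes: in Step 4 of the proof of Theorem~\ref{th:Theoretical result of local refinement} the lemma is invoked ``with $\Xi^*=T^{-c_7}$'' on the subsampled chain, i.e.\ in your two-scale regime where the variation is of square order relative to the floor $T^{-c_7}$ guaranteed by Assumption~\ref{assumption:local refinement}(a), and your bound then returns $O(T^{-c_7})$ as required by the estimate of $I_1$. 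One caveat worth recording: for the chain $Y(t)\to Z(t)\to Y(t+1)$ used there, the per-step spacing is $\tau_3\log(T)$, so the variation is $\Xi^{\tau_3\log(T)}\le T^{-c_8}$ and your bound gives $\xi/m\le T^{c_7-c_8}$; certifying the $3T^{-c_7}$ actually used requires $c_8\ge 2c_7$, slightly stronger than the paper's stated $c_8\ge c_7$. So your proposal is not just a valid alternative --- it identifies, and correctly repairs, a genuine error in the paper's lemma, at the modest and unavoidable price of tracking the two scales separately.
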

\begin{proof}
	Let $\theta=\mathbb{P}(X_2=1)$. For any $x_1,x_2,x_3\in\{0,1\}$, 
	\begin{align*}
		&\mathbb{P}(X_2=1|X_1=x_1,X_3=x_3)\\
		&=\frac{\mathbb{P}(X_3=x_3|X_2=1)\mathbb{P}(X_2=1|X_1=x_1)}{\mathbb{P}(X_3=x_3|X_2=1)\mathbb{P}(X_2=1|X_1=x_1)+\mathbb{P}(X_3=x_3|X_2=0)\mathbb{P}(X_2=0|X_1=x_1)}\\
		&:=\frac{A}{A+B}.
	\end{align*}
	For $i=1,2$, 
	\begin{align*}
		&|\mathbb{P}(X_{i+1}=x_{i+1}|X_i=x_i)-\mathbb{P}(X_{i+1}=x_{i+1})|\\
		&=|\mathbb{P}(X_{i+1}=x_{i+1}|X_i=x_i)(1-\mathbb{P}(X_{i}=x_i))-\mathbb{P}(X_{i+1}=x_{i+1}|X_i=1-x_i)\mathbb{P}(X_i=1-x_i)|\\
		&\le \Xi^* \mathbb{P}(X_i=1-x_i).
	\end{align*}
	So we have 
	\[(\mathbb{P}(X_3=x_3)-\Xi^*(1-\theta))(\theta-\Xi^*\mathbb{P}(X_1=1-x_1))\le A\le (\mathbb{P}(X_3=x_3)+\Xi^*(1-\theta))(\theta+\Xi^*\mathbb{P}(X_1=1-x_1)),\]
	\[(\mathbb{P}(X_3=x_3)-\Xi^*\theta)(1-\theta-\Xi^*\mathbb{P}(X_1=1-x_1))\le B\le (\mathbb{P}(X_3=x_3)+\Xi^*\theta)(1-\theta+\Xi^*\mathbb{P}(X_1=1-x_1)).\]
	We first derive the upper bound of $A/(A+B)$. 
	\begin{align*}
		\frac{A}{A+B}&\le \frac{
			(\mathbb{P}(X_3=x_3)+\Xi^*(1-\theta))(\theta+\Xi^*\mathbb{P}(X_1=1-x_1))}{
			\mathbb{P}(X_3=x_3)+\Xi^{*2}\mathbb{P}(X_1=1-x_1)
		}\\
		&\le \theta+\Xi^*\frac{\theta(1-\theta)+(1-\Xi^*)(1-2\theta\Xi^{*})}{(1+\Xi^{*2})(1-\Xi^*)}\\
		&\le \theta+\Xi^*\left(\frac{2}{1+\Xi^{*2}}+\frac{\Xi^*}{(1+\Xi^{*2})(1-\Xi^*)}\right)<\theta+3\Xi^*
	\end{align*}
	Similarly, $\frac{A}{A+B}\ge \theta-\Xi^*\frac{2-\Xi^*}{(1+\Xi^{*2})(1-\Xi^*)}\ge \theta-3\Xi^*$. Then the proof is complete.  
\end{proof}

\begin{lemma}
	\label{lem:lem8}
	For change-points detection in the piecewise constant case, suppose $(s,e]\subset (0,T]$ is an interval and there exists a change-point $\eta\in (s,e]$ such that any other change-point $\eta^{'}$ and the endpoints $s,e$ satisfies $\min\{\eta-s,e-\eta,|\eta-\eta^{'}|\}\ge \gamma$. Then $$
	\max_t ||\mathbb{E}\tilde{A}_{s,e}^t||_{op} \ge \frac{1}{4\sqrt{e-s}}\kappa\gamma
	$$
\end{lemma}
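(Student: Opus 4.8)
The plan is to reduce the matrix bound to a scalar CUSUM lower bound by projecting along the leading eigenvector of the jump at $\eta$, and then to establish the scalar bound via the discrete ``bridge'' representation of the population CUSUM combined with a second-difference identity that isolates the single jump at $\eta$ from all the other change-points.

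First I would pass to a scalar signal. Set $D=\mathbb{E}A(\eta)-\mathbb{E}A(\eta-1)$; it is symmetric with $||D||_{op}\ge\kappa$. Choosing a unit eigenvector $u$ of $D$ with $|u^\top D u|=||D||_{op}$ and using $||M||_{op}\ge|u^\top M u|$ for symmetric $M$ together with the linearity of the CUSUM map (\ref{newdef:cusum}), one gets $||\mathbb{E}\tilde A_{s,e}^t||_{op}\ge |u^\top \mathbb{E}\tilde A_{s,e}^t u|=|\tilde g_{s,e}^t|$, where $\tilde g_{s,e}^t$ is the scalar CUSUM of the piecewise constant signal $g(r):=u^\top\mathbb{E}A(r)u$. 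The change-points of $g$ form a subset of those of $\mathbb{E}A(\cdot)$, so $\eta$ retains the $\gamma$-separation hypothesis, and the jump of $g$ at $\eta$ has absolute value $|u^\top D u|=||D||_{op}\ge\kappa$. It therefore suffices to prove $\max_t|\tilde g_{s,e}^t|\ge \kappa\gamma/(4\sqrt{e-s})$.

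For the scalar bound I would introduce the bridge $B(t):=\sum_{r=s+1}^t g(r)-\frac{t-s}{e-s}\sum_{r=s+1}^e g(r)$, so that $\tilde g_{s,e}^t=B(t)\big/\sqrt{(t-s)(e-t)/(e-s)}$. Since $(t-s)(e-t)\le (e-s)^2/4$, the denominator never exceeds $\tfrac12\sqrt{e-s}$, whence $|\tilde g_{s,e}^t|\ge 2|B(t)|/\sqrt{e-s}$ for every $s<t<e$; it then remains to exhibit a single $t$ with $|B(t)|$ of order $\gamma\kappa$. Here the natural choice $t=\eta$ fails, because far (but still present) change-points can make $B(\eta)$ vanish, so the maximiser of $|\tilde g|$ need not sit at $\eta$. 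To localize, I use that by the separation hypothesis $g$ is constant on the two windows $(\eta-\gamma,\eta)$ and $(\eta,\eta+\gamma)$, which are free of other change-points, and that $\eta\pm\gamma\in(s,e]$. Evaluating the discrete second difference of the piecewise linear bridge over these clean windows gives $|B(\eta-\gamma)-2B(\eta)+B(\eta+\gamma)|=(\gamma-1)\,|u^\top D u|\ge(\gamma-1)\kappa$ (if a change-point sits exactly at $\eta\pm\gamma$, shrink each window by one index, which costs only a bounded additive constant).

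Finally, the triangle inequality $(\gamma-1)\kappa\le |B(\eta-\gamma)|+2|B(\eta)|+|B(\eta+\gamma)|\le 4\max_t|B(t)|$ forces some $t$ to satisfy $|B(t)|\ge(\gamma-1)\kappa/4$, and combining this with $|\tilde g_{s,e}^t|\ge 2|B(t)|/\sqrt{e-s}$ and $\gamma-1\ge\gamma/2$ (for $\gamma\ge2$) yields $\max_t|\tilde g_{s,e}^t|\ge \kappa\gamma/(4\sqrt{e-s})$, hence the stated bound for $\max_t||\mathbb{E}\tilde A_{s,e}^t||_{op}$. The only genuinely delicate step, and the main obstacle, is this localization: one must extract the contribution of the single jump at $\eta$ while discarding the contamination from every other change-point, which the bridge second-difference identity accomplishes using nothing but the two clean length-$\gamma$ windows adjacent to $\eta$.
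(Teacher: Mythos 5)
Your proof is correct and, despite the eigenvector-projection and bridge packaging, is essentially the paper's own argument: after the paper's mean-centering shift its CUSUM is exactly your bridge $B(t)$ up to the normalizing factor $\sqrt{(e-s)/((t-s)(e-t))}$, and your second-difference identity $B(\eta-\gamma)-2B(\eta)+B(\eta+\gamma)=\sum_{r=\eta+1}^{\eta+\gamma}g(r)-\sum_{r=\eta-\gamma+1}^{\eta}g(r)$ is precisely the paper's comparison of the two clean length-$\gamma$ blocks flanking $\eta$, followed by the same triangle-inequality split over three prefix-sum (bridge) values and the same $\sqrt{e-s}$-type bound on the normalization. Your constants are one notch lossier (you obtain $(\gamma-1)\kappa$, or $(\gamma-2)\kappa$ after your boundary fix, so that $\gamma<2$, respectively $\gamma<4$, is not covered by the stated bookkeeping), but the paper's own proof has the same off-by-one looseness at that scale and the lemma is only invoked with $\gamma$ diverging, so this is immaterial.
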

\begin{proof}
	Without loss of generality, shifting the values of  $A(t) $ such that  $\sum_{t=s+1}^e\mathbb{E}A(t) = 0$.  Due to the triangle inequality and the fact that $\{\bb{E}A(t)\}$ is piecewise constant, we have 
	\begin{align*}
		&\max\left\{\left\|\sum_{i=\eta-1-\gamma}^{\eta-1}\mathbb{E}{A}(i)\right\|_{op},
		\left\|\sum_{i=\eta}^{\eta+\gamma-1}\mathbb{E}A(i)\right\|_{op}
		\right\}\ge \frac{1}{2}\left\|\sum_{i=\eta-1-\gamma}^{\eta-1}\mathbb{E}A(i)-\sum_{i=\eta}^{\eta+\gamma-1}\mathbb{E}A(i)\right\|_{op}\\
		&=\frac{\gamma}{2} \left\|\mathbb{E}A(\eta-1)- \mathbb{E}A(\eta)\right\|_{op}\ge \frac{\gamma}{2}\kappa.
	\end{align*}
	Therefore, by the triangle inequality we have
	\[\max\left\{\left\|\sum_{i=s+1}^{\eta-1-\gamma}\mathbb{E}A(i)\right\|_{op},
	\left\|\sum_{i=s+1}^{\eta-1}\mathbb{E}A(i)\right\|_{op},
	\left\|\sum_{i=s+1}^{\eta-1+\gamma}\mathbb{E}A(i)\right\|_{op}
	\right\}\ge\frac{\gamma}{4}\kappa.\]
	
	Without loss of generality, consider that 
	\[
	\left\|\sum_{i=s+1}^{\eta-1}\mathbb{E}A(i)\right\|_{op}   \ge \frac{1}{4}\kappa_q p\gamma,\]
	Then it follows from the property of the CUSUM statistic that $$
	\max_t ||\mathbb{E}\tilde{A}_{s,e}^t(X)||_{op} \ge \sqrt{\frac{e-s}{(e-\eta+1)(\eta-1-s)}}\frac{1}{4}\kappa \gamma\ge \frac{1}{4\sqrt{e-s}}\kappa \gamma.
	$$ 
\end{proof}

\end{document}